\documentclass{article}

\usepackage[a4paper, margin=1.25in]{geometry}

\usepackage[numbers,sort]{natbib}

\usepackage{graphicx}
\usepackage[shortlabels]{enumitem}
\setlist[enumerate]{nosep,%
                    labelindent=2em,%
}
\setlist[itemize]{nosep,%
                  labelindent=2em,%
}

\usepackage{amsmath,amsfonts,amssymb,amsthm,stmaryrd}
\usepackage{caption}

\usepackage{algorithm}
\usepackage{algpseudocodex}      
\newcommand{\algoitem}{$\cdot$\,} 
\algrenewcommand\textproc{\texttt}   
\algnewcommand{\algorithmicand}{\textbf{and} }
\algnewcommand{\algorithmicor}{\textbf{or} }
\algnewcommand{\OR}{\algorithmicor}
\algnewcommand{\AND}{\algorithmicand}
\algnewcommand{\InlineIf}[2]{
   \algorithmicif\ #1\ \algorithmicthen\ #2}
\algnewcommand{\InlineElse}[1]{
  \algorithmicelse\ #1}
\algnewcommand{\InlineFor}[2]{\algorithmicfor\ #1\ \algorithmicdo\ #2} 
\algrenewcommand\Call[2]{\nameref{#1}\ifthenelse{\equal{#2}{}}{}{\ensuremath{(#2)}}}%
\newcommand{\algoName}[1]{Algorithm \nameref{#1}}
\newcommand{\algoCaptionLabel}[2]{
     \caption[\textproc{#1}]{\textproc{#1}\ifthenelse{\equal{#2}{}}{}{$(#2)$} }%
      \label{algo:#1}
     }%
\makeatother

\usepackage{zref-clever}
\usepackage[colorlinks=true, allcolors=blue]{hyperref}
\zcsetup{cap,abbrev, nameinlink=false}
\newcommand{\cref}[1]{\zcref[S]{#1}}
\newcommand{\Cref}[1]{\zcref[S]{#1}}

\newtheorem{theorem}{Theorem}
\newtheorem{lemma}[theorem]{Lemma}
\newtheorem{proposition}[theorem]{Proposition}
\newtheorem{definition}[theorem]{Definition}

\theoremstyle{remark}
\newtheorem{remark}[theorem]{Remark}

\zcRefTypeSetup{lemma}{
  Name-sg = Lemma,
  Name-pl = Lemmas,
  name-sg = Lemma,
  name-pl = Lemmas
}
\zcRefTypeSetup{theorem}{
  Name-sg = Theorem,
  Name-pl = Theorems,
  name-sg = Theorem,
  name-pl = Theorems
}
\zcRefTypeSetup{proposition}{
  Name-sg = Proposition,
  Name-pl = Propositions,
  name-sg = Proposition,
  name-pl = Propositions
}
\zcRefTypeSetup{definition}{
  Name-sg = Definition,
  Name-pl = Definitions,
  name-sg = Definition,
  name-pl = Definitions
}
\zcRefTypeSetup{corollary}{
  Name-sg = Corollary,
  Name-pl = Corollaries,
  name-sg = Corollary,
  name-pl = Corollaries
}
\zcRefTypeSetup{remark}{
  Name-sg = Remark,
  Name-pl = Remarks,
  name-sg = Remark,
  name-pl = Remarks
}
\zcRefTypeSetup{example}{
  Name-sg = Example,
  Name-pl = Examples,
  name-sg = Example,
  name-pl = Examples
}
\zcRefTypeSetup{notation}{
  Name-sg = Notation,
  Name-pl = Notations,
  name-sg = Notation,
  name-pl = Notations
}

\zcRefTypeSetup{problem}{
 Name-sg = Problem,
 Name-pl = Problems,
 name-sg = problem,
 name-pl = problems
}

\zcRefTypeSetup{ALG@line}{
 Name-sg = Line,
 Name-pl = Lines,
 name-sg = line,
 name-pl = lines
}

\makeatletter
\AddToHook{env/algorithmic/begin}{\def\@currentcounter{ALG@line}}
\makeatother
\AddToHook{env/proposition/begin}{\zcsetup{countertype={theorem=proposition}}}
\AddToHook{env/definition/begin}{\zcsetup{countertype={theorem=definition}}}
\AddToHook{env/remark/begin}{\zcsetup{countertype={theorem=remark}}}
\AddToHook{env/lemma/begin}{\zcsetup{countertype={theorem=lemma}}}
\AddToHook{env/corollary/begin}{\zcsetup{countertype={theorem=corollary}}}

\usepackage[skins]{tcolorbox} 
\tcbuselibrary{theorems}
\usetikzlibrary{shapes}

\newtcbtheorem
{problem}{Problem}{%
  label is zlabel,%
  label type=problem,%
  boxsep=0cm,%
  left=0.2cm, top=0.2cm, right=0.2cm, bottom=0.10cm,%
  enhanced,%
  attach boxed title to top left={xshift=0.2cm, yshift=-2mm},%
  colbacktitle=white, coltitle=black, fonttitle=\itshape, %
  colback=white, %
  boxed title style={ boxrule=0.04em, colframe=black, boxsep=0cm },%
}{pbm}

\newcommand{\K}{\mathbb{K}} 
\newcommand{\N}{\mathbb{N}} 
\newcommand{\Z}{\mathbb{Z}} 

\newcommand{\M}{\mathcal{M}} 
\newcommand{\rdeg}{\mathsf{rdeg}} 
\newcommand{\cdeg}{\mathsf{cdeg}} 
\newcommand{\rlm}{\mathsf{rlm}} 
\newcommand{\clm}{\mathsf{clm}} 
\newcommand{\lm}{\mathsf{lm}} 
\newcommand{\diag}[1]{\operatorname{diag}(#1)} 

\makeatletter
\newcommand*{\rem}{%
  \nonscript\mskip-\medmuskip\mkern5mu%
  \mathbin{\operator@font rem}\penalty900\mkern5mu%
  \nonscript\mskip-\medmuskip
}
\makeatother

\makeatletter
\newcommand*{\quo}{%
  \nonscript\mskip-\medmuskip\mkern5mu%
  \mathbin{\operator@font quo}\penalty900\mkern5mu%
  \nonscript\mskip-\medmuskip
}
\makeatother

\newcommand{\D}{\mathbb{D}} 
\newcommand{\V}{\mathbb{V}} 

\newcommand{\C}{\mathbb{C}} 
 
\renewcommand{\L}{\mathbb{L}} 
\newcommand{\U}{\mathbb{U}} 
\newcommand{\ident}{\mathbb{I}} 

\newcommand{\Zmat}{\mathbb{S}} 

\newcommand{\mA}{\mathsf{A}} 
\newcommand{\mP}{\mathsf{P}} 
\newcommand{\mQ}{\mathsf{Q}} 
\newcommand{\mU}{\mathsf{U}} 
\newcommand{\mF}{\mathsf{F}} 
\newcommand{\mG}{\mathsf{G}} 
\newcommand{\mH}{\mathsf{H}} 
\newcommand{\mM}{\mathsf{M}} 
\newcommand{\mS}{\mathsf{S}} 
\newcommand{\mN}{\mathsf{N}} 
\newcommand{\mL}{\mathsf{L}} 
\newcommand{\mK}{\mathsf{K}} 
\newcommand{\mX}{\mathsf{X}} 

\newcommand{\vc}{\mathsf{c}} 
\newcommand{\vd}{\mathsf{d}} 
\newcommand{\vx}{\mathsf{x}} 
\newcommand{\vy}{\mathsf{y}} 
\newcommand{\vz}{\mathsf{z}} 
\newcommand{\vu}{\mathsf{u}} 
\newcommand{\vv}{\mathsf{v}} 
\newcommand{\vw}{\mathsf{w}} 
\newcommand{\vg}{\mathsf{g}} 
\newcommand{\vh}{\mathsf{h}} 
\newcommand{\vt}{\mathsf{t}} 
\newcommand{\vq}{\mathsf{q}} 
\newcommand{\vp}{\mathsf{p}} 
\newcommand{\vs}{\mathsf{s}} 
\newcommand{\vr}{\mathsf{r}} 

\usepackage{xcolor}
\newif\iflong
\longtrue  

\newcommand{\field}{\K}        
\newcommand{\xring}{\K[x]}     
\newcommand{\matspace}[2]{\field^{#1 \times #2}} 
\newcommand{\xmatspace}[2]{\xring^{#1 \times #2}} 
\newcommand\scalemath[2]{\scalebox{#1}{\mbox{\ensuremath{\displaystyle #2}}}}

\newcommand{\bigO}[1]{O(#1)} 
\newcommand{\softO}[1]{\tilde{O}(#1)} 
\newcommand{\expmm}{\omega} 
\newcommand{\timepm}[1]{\mathcal{M}(#1)} 
\newcommand{\timepmPar}[1]{\mathcal{M}\!\left(#1\right)} 
\newcommand{\polrev}[2]{\operatorname{rev}_{#1}(#2)} 
\newcommand{\vecrev}[1]{\operatorname{vrev}(#1)} 
\newcommand{\crev}[2]{\operatorname{crev}_{#1}(#2)} 
\newcommand{\rrev}[2]{\operatorname{rrev}_{#1}(#2)} 

\newcommand{\dispRk}{\alpha} 
\newcommand{\dispOp}{\phi} 
\newcommand{\dispSyl}[1]{\nabla[#1]} 
\newcommand{\dispSte}[1]{\Delta[#1]} 

\newcommand{\shift}{\vs} 
\newcommand{\shiftt}{\vt} 
\newcommand{\vsol}[1][v]{\vp_{#1}} 
\newcommand{\ssol}[1][\vv]{p_{#1}} 
\newcommand{\polmod}{M} 
\newcommand{\mpol}{f} 
\newcommand{\lpol}{\mathfrak{L}} 
\newcommand{\invmpol}{\bar{f}} 
\newcommand{\minmn}{d} 

\newcommand{\trsp}{\mathsf{T}} 

\title{Matrices with displacement structure: a deterministic approach for linear systems and nullspace bases}
\author{Sara Khichane\dag \and Vincent Neiger\dag}
\date{%
    \dag Sorbonne Université, CNRS, LIP6, F-75005 Paris, France\\[2ex]%
    \emph{\today}
}

\begin{document}

\maketitle

\begin{abstract}
  The fastest known algorithms for dealing with structured matrices, in the
  sense of the displacement rank measure, are randomized. For handling classical
  displacement structures, they achieve the complexity bounds
  $\tilde{O}(\alpha^{\omega-1} n)$ for solving linear systems and
  $\tilde{O}(\alpha^2 n)$ for computing the nullspace. Here $n \times n$ is the
  size of the square matrix, $\alpha$ is its displacement rank, $\omega > 2$ is
  a feasible exponent for matrix multiplication, and the notation
  \(\tilde{O}(\cdot)\) counts arithmetic operations in the base field while
  hiding logarithmic factors. These algorithms rely on an adaptation of
  Strassen's divide and conquer Gaussian elimination to the context of
  structured matrices. This approach requires the input matrix to have generic
  rank profile; this constraint is lifted via pre- and post-multiplications by
  special matrices generated from random coefficients chosen in a sufficiently
  large subset of the base field.

  This work introduces a fast and deterministic approach, which solves both
  problems within $\tilde{O}(\alpha^{\omega-1} (m+n))$ operations in the base
  field for an arbitrary rectangular \(m \times n\) input matrix. We provide
  explicit algorithms that instantiate this approach for Toeplitz-like,
  Vandermonde-like, and Cauchy-like structures. The starting point of the
  approach is to reformulate a structured linear system as a modular equation
  on univariate polynomials. Then, a description of all solutions to this
  equation is found in three steps, all using fast and deterministic operations
  on polynomial matrices. Specifically, one first computes a basis of solutions
  to a vector M-Pad\'e approximation problem; then one performs linear system
  solving over the polynomials to isolate away unwanted unknowns and restrict
  to those that are actually sought; and finally the latter are found by
  simultaneous M-Padé approximation.
\end{abstract}

\section{Introduction}
\label{sec:intro}

In computational linear algebra, besides the quest for general efficient
algorithms for dense matrices, lies the fundamental question of designing
faster algorithms tailored to matrices with special properties that are often
encountered in concrete situations. This includes broad families such as sparse
matrices with a large number of zero entries, low-rank matrices which factor
into matrices with few columns or rows, rank-structured and quasiseparable
matrices with low rank off-diagonal blocks, and matrices with low displacement
rank. In this article, we focus on the latter family.

\paragraph{The displacement structure.} 

This notion was introduced in the seminal article of Kailath, Kung, and Morf \cite{kailathkungmorf1979},
to provide a framework for characterizations and computations with matrices
that are ``close'' to a Toeplitz matrix. Later generalizations also encompass
Hankel, Vandermonde, and Cauchy matrices \cite[Table\,1.1]{pan2001}; see also
\cref{sec:matvec_polops:prelim,sec:matvec_polops:special} for definitions and
notation. Beyond these specific examples, this type of structure covers all
\(m \times n\) matrices whose image through a certain operator has a rank
\(\dispRk\) which is ``small'' compared to the maximal possible rank
\(\min(m,n)\). For example, for the Toeplitz-like operator, the smallest
\(\dispRk\) is, the closest the matrix is to an actual Toeplitz matrix, which
has \(\dispRk \le 2\).
More precisely, denoting the base field by \(\field\), the
so-called displacement operator is an invertible linear operator \(\dispOp :
\matspace{m}{n} \to \matspace{m}{n}\), and for a given matrix \(\mA \in
\matspace{m}{n}\), its displacement rank is the rank \(\dispRk\) of
\(\dispOp(\mA)\). Any pair of matrices $\mG \in \matspace{m}{\dispRk}$ and $\mH
\in \matspace{n}{\dispRk}$ such that \(\dispOp(\mA) = \mG \mH^\trsp\) is called
a \(\dispOp\)-generator for \(\mA\): since \(\dispOp\) is invertible, one can
use \((\mG,\mH)\) as a data structure to represent \(\mA\).

This storage uses only \(\dispRk(m+n)\) coefficients from the field, instead of
\(mn\), and the task is then to design efficient matrix algorithms that operate
on this compact data structure rather than on the usual dense representation,
and thereby take advantage of the low displacement rank. A central tool is
matrix-vector products, for which efficient algorithms are known by relying on
fast univariate polynomial computations, and which are then exploited in
algorithms for more general problems on structured matrices such as computing
matrix-matrix products, ranks, nullspaces, determinants, linear system
solutions, etc. Indeed, it is well-known that for classical displacement structures,
matrix-vector products are performed efficiently via operations in
\(\xring\) \cite[Tbl.\,1.2]{pan2001}, such as truncated polynomial
multiplication for Toeplitz matrices or multipoint evaluation for Vandermonde
matrices. This actually extends to general forms of displacement structures
where fast matrix-vector products are realized through operations on polynomial
vectors in \(\xring^{\dispRk}\) or polynomial matrices in
\(\xmatspace{\dispRk}{\beta}\), as for example in \cite{GohbergOlshevsky1994}
and \cite[Sec.\,5]{structured2017}.

The starting point of such univariate polynomial interpretations of structured
matrix-vector products is Gohberg-Semencul-type inversion formulas
\cite{GohbergSemencul1972,kailathkungmorf1979,GohbergOlshevsky1994}, including
for example the so-called ``\(\Sigma L U\)'' formula for Toeplitz-like
matrices. These formulas express the matrix \(\mA\) from its generators
\((\mG,\mH)\) through a combination of elementary structured matrices such as
Toeplitz or Cauchy matrices, which themselves lead to interpretations as
univariate polynomial operations. For a general study of such inversion
formulas, one may refer to \cite[Sec.\,4.3 and\,4.4]{pan2001},
\cite{PanWang2003}, and \cite[Sec.\,3]{structured2017}; we will discuss this
again in \cref{sec:matvec_polops}. This approach leads to matrix-vector
products in a complexity that is quasilinear in the size \(\dispRk(m+n)\) of
the generators.

\paragraph{Problem and context.}

In this article, we focus on solving linear systems and computing nullspaces,
in the case of matrices that are represented through displacement generators.
Hereafter, \(\field\) is an arbitrary field, which is assumed to be effective:
one has procedures to add, subtract, multiply, and invert field elements, and
to test whether a given element is zero. 

\begin{problem}{Structured linear system}{linsys}
  \emph{Parameter:}
  an invertible displacement operator \(\dispOp : \matspace{m}{n} \to \matspace{m}{n}\).

  \emph{Input:}
  matrices $\mG \in \matspace{m}{\dispRk}$ and $\mH \in \matspace{n}{\dispRk}$, a vector $\vv \in \K^{m \times 1}$.

  \emph{Output:} a nonzero vector $\vu \in \matspace{n}{1}$ such that $\mA \vu
  = \vv$ if one exists, otherwise $\vu = \emptyset$, where $\mA \in \matspace{m}{n}$
  is the matrix defined by $\dispOp(\mA) = \mG \mH^{\trsp}$.
\end{problem}

\begin{problem}{Structured nullspace}{kernel}
  \emph{Parameter:}
  an invertible displacement operator \(\dispOp : \matspace{m}{n} \to \matspace{m}{n}\).

  \emph{Input:}
  matrices $\mG \in \matspace{m}{\dispRk}$ and $\mH \in \matspace{n}{\dispRk}$.

  \emph{Output:} displacement generators for a matrix that spans the nullspace
  \(\{\vz \in \matspace{n}{1} \mid \mA \vz = 0\}\) of the matrix $\mA \in
  \matspace{m}{n}$ defined by $\dispOp(\mA) = \mG \mH^{\trsp}$.
\end{problem}

This article is about understanding some complexity aspects of
\cref{pbm:linsys,pbm:kernel}, in particular about whether one can solve them by
algorithms that are both efficient and deterministic. We will answer  this
question positively.

In complexity bounds, each of the above-listed basic field operations is
counted at unit cost, and we measure the algebraic cost of an algorithm as the
number of these basic field operations that are performed during the run of the
algorithm on a given input. To give asymptotic upper bounds, we use the
\(\bigO{\cdot}\) notation. Occasionally, for readability in explanatory texts,
we give simplified bounds with the notation \(\softO{\cdot}\) which hides
logarithmic factors.

We discuss algorithms that rely on fast multiplication for matrices and
polynomials \cite{ModernComputerAlgebra,BCS97}. We let \(\expmm\) be an
exponent for matrix multiplication: two matrices in \(\matspace{n}{n}\) can be
multiplied using \(\bigO{n^\expmm}\) operations in \(\field\). We assume \(2 <
\expmm \le 3\), and the current best known bound is \(\expmm < 2.3714\)
\cite{AlmanDuanWilliamsXuXuZhou2025}. We also let \(d \mapsto \timepm{d}\) be 
a time function for polynomial multiplication in degree less than \(d\), for
which we make standard assumption such as superlinearity
\cite[Sec.\,8.3]{ModernComputerAlgebra}. One can perform polynomial
multiplication in quasilinear time, that is, \(\timepm{d} \in \softO{d}\)
\cite{CantorKaltofen1991} \cite[Chap.\,8]{ModernComputerAlgebra}.

\paragraph{Divide and conquer structured solvers.}

Focusing on square Toeplitz-like matrices, thus with \(m=n\), efficient
algorithms for \cref{pbm:linsys,pbm:kernel} from the late 1970s used
$\bigO{\dispRk n^2}$ operations in \(\field\) \cite{kailathkungmorf1979}. In
1980, Morf \cite{morf80} and Bitmead and Anderson \cite{bitmeadanderson1980}
improved this to $\bigO{\dispRk^2 \timepm{n} \log(n)}$, through a Strassen-like
divide and conquer algorithm based on Schur's complements \cite{Strassen1969}
\cite[Sec.\,5.2 and\,5.3]{pan2001}.
The latter results require the input matrix to have generic rank profile,
meaning that its leading principal minors are nonzero. This requirement can be
lifted by pre- and post-multiplication by random matrices, and in 1994,
Kaltofen made this strategy efficient thanks to random structured multipliers
\cite{kaltofen1994}, obtaining a Las Vegas randomized algorithm in
$\bigO{\dispRk^2 \timepm{n} \log(n)}$.
In the 1990s, the Vandermonde-like and Cauchy-like structures have also been
handled in the same cost bound \cite{pan1990, GohbergOlshevsky1994,
  cardinal1999, panzheng2000, panzhengchenprovidence1999,
OlshevskyShokrollahi2000}, either by a modification of this approach for the
Toeplitz-like case, or thanks to a direct reduction to it. A general
presentation of this approach can be found in \cite[Sec.\,5.6
and\,5.7]{pan2001}.

When \(\dispRk \in \Theta(n)\), the above cost bounds become cubic in \(n\),
similarly to dense linear algebra when not exploiting fast matrix
multiplication. In 2008, Bostan, Jeannerod, and Schost brought fast dense
matrix multiplication into the above divide and conquer scheme
\cite{BostanJeannerodSchost2008}, reaching the cost $\bigO{\dispRk^{\expmm-1}
\timepm{n} \log^2(n)}$. This result does not cover \cref{pbm:kernel}; it
focuses on the Toeplitz-like, Vandermonde-like, and Cauchy-like displacement
operators, which are also the ones that are handled in this article. Further
improvements of the exponents in this bound
\(\softO{\dispRk^{\expmm-1} n}\) seem difficult: this is close to the size \(2
\dispRk n\) of the input generators, and for dense matrices with \(\dispRk \in
\Theta(n)\) this is the usual complexity \(\softO{n^\expmm}\) of dense linear
algebra. This result was later generalized by the same authors and Mouilleron
\cite{structured2017}, with an algorithm for \cref{pbm:linsys} that is faster
by a logarithmic factor, supports the rectangular case, and covers a wider
array of displacement operators.

All in all, the fastest known algorithms for \cref{pbm:linsys,pbm:kernel} are
Las Vegas randomized, with a complexity of $\bigO{\dispRk^{\expmm-1}
\timepm{m+n} \log(m+n)}$ operations in \(\field\) for \cref{pbm:linsys}
\cite{structured2017} for a large family of displacement operators, and of
$\bigO{\dispRk^{2} \timepm{n} \log(n)}$ for \cref{pbm:kernel} assuming \(m=n\)
and for operators that include Toeplitz-like, Vandermonde-like, and Cauchy-like
displacements \cite[Sec.\,5.6 and\,5.7]{pan2001}. It would not surprise us that
the results in \cite{BostanJeannerodSchost2008,structured2017} can be augmented
to also solve \cref{pbm:kernel}; yet, verifying the details of such a claim is
much beyond the scope of this paper.

\paragraph{Main result.}

In this paper, we present a deterministic approach for
\cref{pbm:linsys,pbm:kernel}, and deduce explicit algorithms that realize this
approach for Toeplitz-like, Vandermonde-like and Cauchy-like matrices. We
summarize these results in the following theorem. For more details,
and for the corresponding algorithms, we refer to 
\cref{thm:linsys_poly:toeplitz, thm:linsys_poly:vandermonde,
thm:linsys_poly:cauchy} and \cref{algo:StructuredSolve-Toeplitz,
algo:StructuredSolve-Vandermonde, algo:StructuredSolve-Cauchy}.

\begin{theorem}
  \label{thm:main}%
  Suppose \(\dispOp\) is the displacement operator for the Toeplitz-like,
  Vandermonde-like, or Cauchy-like structure. There is a deterministic algorithm
  that solves \cref{pbm:linsys,pbm:kernel} using \(\bigO{\dispRk^{\expmm-1}
  (\timepm{m} \log(m) + \timepm{n} \log(n)^2)}\) operations in \(\field\).
\end{theorem}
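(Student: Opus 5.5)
The plan follows the approach announced in the abstract and treats the three structures uniformly. The proof reduces the structured system to a modular equation over \(\xring\), solves it with fast deterministic polynomial-matrix algorithms, and then specializes the cost analysis to each operator.

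\textbf{Step 1: reformulate as a modular equation.} Use the Gohberg--Semencul-type ($\Sigma LU$-like) formula for \(\dispOp\) to write \(\mA\vu\) polynomially. For the Toeplitz-like case, identify \(\vu\) with a polynomial \(u\) of degree \(<n\) and \(\vv\) with \(v\) of degree \(<m\). Let \(g_1,\dots,g_\dispRk\) and \(h_1,\dots,h_\dispRk\) be the polynomials built from the columns of \(\mG\) and \(\mH\), suitably reversed. Then \(\mA\vu=\vv\) becomes
\[ \sum_{i=1}^{\dispRk} g_i\, \bigl((h_i u) \rem x^{n}\bigr) \equiv v \bmod \polmod, \]
for a modulus \(\polmod\) of degree \(O(m)\) (a power of \(x\) for Toeplitz, a product \(\prod (x-a_j)\) for Vandermonde or Cauchy).

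Introduce new unknowns \(q_i = (h_i u)\rem \mpol\), where \(\mpol\) is the modulus on the \(n\)-side. The system then becomes linear with degree constraints in the unknowns \((u,q_1,\dots,q_\dispRk)\). Concretely:
\begin{itemize}
\item \(h_i u - q_i \equiv 0 \bmod \mpol\) for each \(i\);
\item \(\sum_i g_i q_i - v\,y \equiv 0 \bmod \polmod\), with \(y\) a homogenizing unknown;
\item \(\deg u<n\), \(\deg q_i<\deg \mpol\), and \(y\) constant.
\end{itemize}
Any solution with \(y=1\) gives a solution \(\vu\), and the nullspace corresponds to solutions with \(y=0\). Proving this equivalence carefully, including that the degree constraints single out exactly the vectors of \(\K^n\), is a per-structure lemma.

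\textbf{Step 2: solve by polynomial matrix tools in three stages.*} First, compute a basis in shifted Popov form of all solutions \((q_1,\dots,q_\dispRk,y)\) to the M-Padé equation modulo \(\polmod\). This uses a deterministic approximant/relation basis algorithm, at cost \(\bigO{\dispRk^{\expmm-1}\timepm{m}\log m}\), since the dimension is \(\dispRk+1\) and the modulus degree is \(m\).

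Second, write the \(q\)'s as a polynomial combination of that basis. Eliminating these unwanted unknowns means solving a polynomial linear system relating the basis coefficients to \(u\) through \(h_iu\equiv q_i\). Here one uses deterministic polynomial system solving or kernel basis computation of dimension \(O(\dispRk)\) and average degree \(O(n/\dispRk)\), which costs \(\softO{\dispRk^{\expmm-1} n}\) with the extra \(\log(n)^2\).

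Third, recover \(u\) with \(\deg u<n\) by a simultaneous M-Padé approximation modulo \(\mpol\) with the appropriate shifts. This again costs \(\bigO{\dispRk^{\expmm-1}\timepm{n}\log n}\) or slightly more.

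Existence of a solution with \(y=1\) is read off the shifted Popov basis through its degree or leading-term pattern. For the nullspace, the sought rows give a basis of solutions \(u\). Generators of the nullspace matrix then come from its polynomial description, for instance as a product of structured matrices, whose displacement generators are computable by known formulas in the same cost.

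\textbf{Step 3: specialize and add up costs.} Verify the reformulation lemma for each of the Toeplitz-like, Vandermonde-like and Cauchy-like operators; the Cauchy case needs moduli on both sides. Then sum the three stage costs. This gives \(\bigO{\dispRk^{\expmm-1}(\timepm{m}\log m+\timepm{n}\log(n)^2)}\), and every step is deterministic because shifted Popov approximant and kernel algorithms are.

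\textbf{Main obstacle.} The hardest step is controlling the degrees in the elimination stage. A general kernel computation could have size \(\dispRk\times\dispRk\) with degree up to \(n\), which would cost \(\dispRk^{\expmm} n\). I would first try to exploit the fact that the shifted Popov approximant basis has total degree sum at most \(m\), or the analogous bound with \(n\). With this bound, partial linearization or an average-degree-aware algorithm keeps the cost at \(\dispRk^{\expmm-1}\) times quasilinear.
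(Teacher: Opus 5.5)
There is a genuine gap. Your three-stage plan (vector M-Pad\'e, elimination, simultaneous M-Pad\'e) is the paper's architecture, but the decisive middle stage is a black box. As you describe it, eliminating the \(q_i\)'s by \emph{solving} a polynomial system relating the basis coefficients \(\lambda\) to \(u\) does not reach the target cost. Finding \((\lambda,u)\) with \(\mP\lambda+\vsol \equiv \vh u\) modulo the \(n\)-side modulus \(\mpol\) is an M-Pad\'e problem for the \(\dispRk\times(\dispRk+1)\) matrix \([-\mP\;\;\vh]\). A basis of its solution module can have determinant of degree \(\dispRk n\), and the known deterministic algorithms for it cost \(\softO{\dispRk^{\expmm}(m+n)}\). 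The bound \(\sum_i\delta_i\le m\) on \(\mP\) does not help with this.

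The missing idea is to never compute \(\lambda\).
\begin{itemize}
\item First, from \(\deg\vc<n\), \(\mP\) Popov and \(\rdeg(\vsol)<\delta=\cdeg(\mP)\), one gets \(\deg\vsol<n\) and \(\deg\lambda_i<n-\delta_i\). This is \cref{lem:degree_bound_lambda}: the predictable degree property plus a pivot argument excluding cancellation between \(\mP\lambda\) and \(\vsol\).
\item Second, \(\mP\) is invertible modulo \(\mpol\); in the Cauchy case this holds because \(\det\mP\mid\mpol_\vx\), which is coprime with \(\mpol_\vy\). So the degree-bounded \(\lambda\) is recovered exactly as \(\lambda=(\mF u-\vw)\rem\mpol\), with \(\mF=\mP^{-1}\vh\rem\mpol\) and \(\vw=\mP^{-1}\vsol\rem\mpol\).
\item Both vectors cost one kernel basis of \([\mP\;\;{-\vh}]\) with shift sum \(\le m+n\) (\cref{lem:Pinv_vec_mod}); this is where \(\sum_i\delta_i\le m\) is actually used.
\end{itemize}
The condition on \(u\) alone then reads \(\rdeg((\mF u-\vw)\rem\mpol)<n-\delta\). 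These are your ``appropriate shifts''. This last problem is \emph{nonhomogeneous*}, so the Rosenkilde--Storjohann solver must be augmented to handle it; that step, not the elimination, produces the \(\log(n)^2\). Existence of a solution is then decided by three conditions: \(\mu=1\), \(\deg\vsol<n\), and solvability of this last problem. It is not read off a single Popov basis.

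Your Step 1 formula is also wrong for the Toeplitz-like and Vandermonde-like cases, and this interacts with the above. There \(\vc=\vh u\quo x^{n-1}\) is a middle product, not \((h_iu)\rem x^n\). With remainders in both places, \(\polmod=x^m\) and \(m\le n\), your expression collapses to \((\sum_i g_ih_i)u\rem x^m\). That is a single triangular Toeplitz matrix, and reversing \(u\) or \(v\) only permutes its columns or rows. Consequently \(\mP\) must be inverted in reversed form. Here \(\det\mP\) divides \(x^m\), so \(\mP\) is in general not invertible modulo \(x^n\). The paper instead uses \(\polrev{n-1}{\vc}=\bar{\vh}\bar u\rem x^n\) and \(\polrev{n-1}{\mP\lambda+\vsol}=\crev{\delta}{\mP}\bar\lambda+\polrev{n-1}{\vsol}\). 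The latter identity again relies on \(\deg\lambda_i<n-\delta_i\) (\cref{lem:reverse_product}). Finally, \(\crev{\delta}{\mP}\) has the invertible constant term \(\lm(\mP)\), hence is invertible modulo \(x^n\).
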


The three supported operators are described in \cref{sec:matvec_polops:prelim};
in particular, we assume that the lists of points involved in Vandermonde- and
Cauchy-like operators are repetition-free. We expect our approach to be
applicable to other operators among those discussed in
\cite{pan2001,structured2017}. For some of them, how to instantiate the
approach seems straightforward, such as considering cyclic displacements using
multipliers \(\Zmat_{n,\rho}\) with \(\rho\neq0\) (see
\cref{sec:matvec_polops:prelim}) instead of \(\Zmat_{n,0}\) as we do in this
article. However, handling some other operators, such as the most general ones
from \cite{structured2017} or Cauchy-like displacements with repeated points,
may require to resort to polynomial approximation problems that are more
general than the ones we rely on in this article, which are described just
below.

\paragraph{Overview of the approach.}

In \cref{sec:matvec_polops}, for each of the studied structures, we give
equivalent polynomial formulations of the \(m \times n\) linear system \(\mA
\vu = \vv\) of displacement rank \(\dispRk\). This takes the form of a
univariate polynomial equation which can be roughly written as
\begin{equation}
  \label{eqn:overview:poly_formulation}
  v =
  \begin{bmatrix} g_1 & \cdots & g_\dispRk \end{bmatrix}
  \begin{bmatrix}
    c_1 \\ \vdots \\ c_\dispRk
  \end{bmatrix}
  \rem \polmod_1,
  \quad\text{where}~
  \begin{bmatrix}
    c_1 \\ \vdots \\ c_\dispRk
  \end{bmatrix}
  =
  \begin{bmatrix}
    h_1 \\ \vdots \\ h_\dispRk
  \end{bmatrix}
  u
  \rem \polmod_2
  .
\end{equation}
Here, \(\polmod_1\) and \(\polmod_2\) are known polynomials of respective
degrees \(m\) and \(n\), which only depend on the considered displacement
operator; \(v\) and the \(g_i\)'s are polynomials of degree less than \(m\),
and the \(h_i\)'s are polynomials of degree less than \(n\), that are all known
from \(\vv\) and the generators \((\mG,\mH)\) of \(\mA\); and \(u\) is an
unknown polynomial directly related to the sought solution vector \(\vu\).

The two identities that intervene in \cref{eqn:overview:poly_formulation},
viewing the \(c_i\)'s as the unknowns in the first identity and viewing \(u\)
as the unknown in the second one, are classically encountered for moduli of the
form \(\polmod_1=x^m\) and \(\polmod_2=x^n\), in the context of Hermite-Pad\'e
approximation and simultaneous Pad\'e approximation
\cite{Hermite1874,Hermite1893,Pade1894}. In the general form, with arbitrary moduli, we
use the terminology of \emph{vector M-Pad\'e approximation} for the first
identity, and \emph{simultaneous M-Pad\'e approximation} for the second one
\cite{Mahler1968,Beckermann1992,vanBarelBultheel1992}. More details and references
on this topic are provided in \cref{sec:mpade}.

The first main step in our approach is to ignore the actual shape of the
\(c_i\)'s and treat them as unknowns: through nonhomogeneous vector M-Pad\'e
approximation (\cref{sec:mpade:vector}), we compute an \(\dispRk \times
\dispRk\) polynomial matrix \(\mP\) and an \(\dispRk \times 1\) polynomial
vector \(\vsol\) that generate all possible vectors \(\vc = [c_i]_i\) that
satisfy this equation:
\[
  v = [g_1 \; \cdots \; g_\dispRk] \vc \rem \polmod_1
  \quad\Leftrightarrow\quad
  \vc \in \{\mP \lambda + \vsol \mid \lambda \in \xmatspace{\dispRk}{1}\}.
\]
Then, reintroducing the constrained form of \([c_i]_i\) as in
\cref{eqn:overview:poly_formulation}, we get
\begin{equation}
  \label{eqn:overview:before_inverse}
  \begin{bmatrix}
    c_1 \\ \vdots \\ c_\dispRk
  \end{bmatrix}
  =
  \mP \lambda + \vsol
  =
  \begin{bmatrix}
    h_1 \\ \vdots \\ h_\dispRk
  \end{bmatrix}
  u
  \rem \polmod_2,
\end{equation}
where both \(\lambda \in \xmatspace{\dispRk}{1}\) and \(u \in \xring_{<n}\) are
unknown, but we are only interested in finding the latter polynomial. This
equation implies \(\deg(\mP \lambda + \vsol) < n\): from this,  and thanks to
special properties of the matrix \(\mP\) which is computed in the so-called
Popov form (see \cref{sec:mpade:polymat}), we deduce degree constraints on the
unknown \(\lambda\) (see
\cref{lem:predictable_degree,lem:degree_bound_lambda}). In particular, we must
have \(\deg(\lambda) < n\), but even more precisely, we get the constraints
\(\deg(\lambda_i) < n - \delta_i\), where \(\delta_i\) is the maximum degree
occurring in the \(i\)th column of \(\mP\).

As such, we are not aware of deterministic methods that would find even just
one solution \((\lambda,u)\) to \cref{eqn:overview:before_inverse} within a
complexity bound that would meet our target \(\softO{\dispRk^{\expmm-1}
(m+n)}\). Indeed, the customary approach would be to interpret the equation as
\begin{equation*}
  \vsol
  =
  \begin{bmatrix}
    {-\mP} & \vh \\
  \end{bmatrix}
  \begin{bmatrix}
    \lambda \\
    u
  \end{bmatrix}
  \rem \polmod_2,
  \;\;\;\text{where}\;
  \vh =
  \begin{bmatrix}
    h_1 & \cdots & h_\dispRk
  \end{bmatrix}^\trsp.
\end{equation*}
This is similar to the vector M-Pad\'e approximation problem solved in the
first step, except that now the equation is given by the \(\dispRk \times
(\dispRk+1)\) matrix \([{-\mP} \;\; \vh]\) instead of just one vector. This can
also be interpreted as a matrix version of simultaneous M-Pad\'e approximation.
Using any of both viewpoints, because of this \(\dispRk \times (\dispRk+1)\)
matrix input, the fastest available algorithms
\cite{Neiger2016fast,RosenkildeStorjohann2021} for solving this equation have a
cost bound in \(\softO{\dispRk^\expmm (m+n)}\), which is an extra factor
\(\dispRk\) more than our target.

To overcome this obstacle, we exploit the fact that we are only interested in
\(u\) and do not need to compute the vector \(\lambda\). With this in mind, we
use a second step that reduces the above equation to a simultaneous M-Pad\'e
approximation problem. Smoothing away some technicalities for the sake of
presentation, this goes as follows: the obtained matrix \(\mP\) is invertible
modulo \(\polmod_2\), and thus  \cref{eqn:overview:before_inverse} can be
rewritten as
\begin{equation}
  \label{eqn:overview:almost_smpade}%
  \lambda
  =
  (\mF u - \vw) \rem \polmod_2,
  \;\;\;\text{where}\;\;
  \mF = \mP^{-1} \vh \rem \polmod_2
  \;\;\text{and}\;\;
  \vw = \mP^{-1}\vsol \rem \polmod_2.
\end{equation}
These vectors \(\mF\) and \(\vw\) can be computed efficiently, essentially
through the minimal kernel basis algorithm of \cite{ZLS2012}. Now \(\lambda\)
is isolated, and since we don't need its values, we can simplify
\cref{eqn:overview:almost_smpade} based on the degree constraints it satisfies:
we seek \(u \in \xring_{<n}\) such that
\begin{equation}
  \text{the}\;\; i\text{th entry of}\;\;
  (\mF u - \vw) \rem \polmod_2
  \;\;\text{has degree less than} \;\; n - \delta_i .
\end{equation}
This is exactly an instance of simultaneous M-Pad\'e approximation. Hence our
third step: solve this instance, thanks to an extension of the algorithm of
\cite{RosenkildeStorjohann2016,RosenkildeStorjohann2021} to the nonhomogeneous
case.

The above description explains how to solve the linear system \(\mA \vu =
\vv\) deterministically, and when the input vector \(\vv\) is zero, this
computes vectors in the nullspace of \(\mA\). We prove that, because the
subprocedures give us generating sets for all solutions of both approximation
problems that intervene above, this approach actually computes a concise
description of the whole nullspace of \(\mA\).

\paragraph{Outline.}

In \cref{sec:notation}, we introduce basic notation used throughout the paper. 
We recall the definitions of displacement operators and structured matrices in
\cref{sec:matvec_polops}, which also contains reformulations of structured
linear systems as modular equation on univariate polynomials. These polynomial
formulations serve as the starting point of our approach, and we solve them by
a combination of two successive approximation problems. In \cref{sec:mpade}, we
study these two problems, called vector M-Padé approximation and simultaneous
M-Padé approximation, and we solve them efficiently in the nonhomogeneous case
by elaborating upon recent work on the homogeneous case. Finally, in
\cref{sec:main}, we present our deterministic approach for solving
\cref{pbm:linsys,pbm:kernel}, by explicitly showing how to instantiate it in
the three cases of Toeplitz-like, Vandermonde-like, and Cauchy-like structures.

\section{Notation}
\label{sec:notation}

The ring of univariate polynomials in \(x\) over \(\field\) is denoted by
\(\xring\); we use subscripts to indicate degree bounds, such as
\(\xring_{<d}\) for polynomials of degree less than \(d\). We let
\(\timepm{\cdot}\) be a time function for polynomial multiplication: two
polynomials in \(\xring_{<d}\) can be multiplied in \(\timepm{d}\) operations
in \(\field\). Although we do not require that \(\timepm{d}\) be quasilinear,
we recall that one can always take \(\timepm{d} \in \bigO{d
\log(d)\log(\log(d))}\) \cite{CantorKaltofen1991}, which is in \(\softO{d}\).

The space of matrices with \(m\) rows and \(n\) columns over \(\field\), resp.\
\(\xring\), is denoted by \(\matspace{m}{n}\), resp.\ \(\xmatspace{m}{n}\). For
a given matrix $\mM$, we denote by $\mM_{i*}$ its $i$-th row, by $\mM_{*j}$
its $j$-th column, and by \(\mM_{ij}\) its entry at \((i,j)\). We denote by
$\diag{a_1, \ldots, a_n}$ the \(n\times n\) diagonal matrix with diagonal
entries \(a_1,\ldots,a_n\); depending on the context, these entries will be in
\(\field\) or in \(\xring\).
The notation \(\expmm\) stands for a feasible exponent for the complexity of
matrix multiplication, with \(\expmm > 2\): two matrices in \(\matspace{m}{m}\)
can be multiplied in \(\bigO{m^\expmm}\) operations in \(\field\). Typical
values include \(\expmm = 3\) (naive multiplication), \(\expmm = \log_2(7)\)
(Strassen's algorithm \cite{Strassen1969}), or \(\expmm = 2.3714\) (best known
bound, \cite{AlmanDuanWilliamsXuXuZhou2025}).

With the above notation, one can multiply two polynomial matrices in
\(\xmatspace{\dispRk}{\dispRk}\) of degree less than \(d\) using
\(\bigO{\dispRk^\expmm \timepm{d}}\) operations in \(\field\). We use cost
bounds from \cite{JeannerodNeigerVillard2020, JeannerodNeigerSchostVillard2017,
RosenkildeStorjohann2016, RosenkildeStorjohann2021}, which require some mild
assumptions on matrix and polynomial multiplication:
\begin{itemize}
  \item the function \(d \mapsto \M(d)/d\) is nondecreasing;
  \item \(\timepm{kd} \in \bigO{k^{\expmm - 1} \M(d)}\).
\end{itemize}
The assumption in the first item is often called the superlinearity property
\cite[Sec.\,8.3]{ModernComputerAlgebra}. The one in the second item roughly
means that if one uses subcubic matrix multiplication, then subquadratic
polynomial multiplication should be used accordingly. This is a mild assumption
since it is always satisfied for a quasilinear \(\timepm{\cdot}\), which is
feasible over all field \(\field\) as noted above.

We consider two kinds of bases for \(\xring_{<d}\), as a \(\field\)-vector
space of dimension \(d\). The first one is its canonical basis, also
called monomial basis, which is \((1,x,x^2,\ldots,x^{d-1})\). The second one,
its Lagrange basis, is defined from a list \(\vx = (x_1,\ldots,x_{d}) \subset
\field^d\) which is repetition-free, that is, \(x_i \neq x_j\) for all
\(i \neq j\). In that case, the Lagrange basis of \(\vx\) is
\((\lpol_{\vx,1},\ldots,\lpol_{\vx,d})\), where \(\lpol_{\vx,i} = \prod_{j
\neq i} (x - x_j) / (x_i - x_j) \in \xring_{<d}\); by definition,
\(\lpol_{\vx,i}(x_i) = 1\) and \(\lpol_{\vx,i}(x_j) = 0\) for \(j \neq i\). In
this context, we often use the so-called master polynomial
\(\mpol_\vx = \prod_{1 \le i \le d} (x - x_i) \in \xring_d\). Furthermore,
given some values \(\vv \in \K^d\), we call (Lagrange)
interpolant of \((\vx,\vv)\) the unique polynomial \(p\) in \(\xring_{<d}\)
such that \(p(x_i) = v_i\).

For polynomials \(p, q, \polmod \in\xring\) with \(\polmod \neq 0\), we write
\(q = p \bmod \polmod\) to mean that \(q-p\) is divisible by \(\polmod\). When
\(q\) is more specifically the unique remainder of degree less than
\(\deg(\polmod)\) in the division of \(p\) by \(\polmod\), we write \(q = p
\rem \polmod\); the corresponding quotient is denoted by \(p \quo
\polmod\). These operators have lower precedence than addition and
multiplication: for example, an expression such as ``\(p + g \cdot (h \cdot u
\quo x^{n-1} \rem x^{\ell}) \rem x^m\)'' (similar to that in
the proof of \cref{thm:linsys_poly:toeplitz}) stands for ``\((p + (g \cdot (((h
\cdot u) \quo x^{n-1}) \rem x^{\ell}))) \rem x^m\)''. We extend the
``\({\quo}\)'' and ``\({\rem}\)'' operators to vectors and matrices of polynomials:
for \(\mM \in \xmatspace{m}{n}\) and \(\polmod\) as above, we write \(\mM \quo
\polmod\) (resp.\ \(\mM \rem \polmod\)) for the \(m \times n\) polynomial
matrix whose entries are \(\mM_{ij} \quo \polmod\) (resp.\ \(\mM_{ij} \rem
\polmod\)).

We make use of reversals. For a vector \(\vv = (v_1, \ldots, v_{n}) \in
\K^n\), its reversal is \(\vecrev{\vv} = (v_{n}, \ldots, v_1)\).
For a polynomial \(p \in \xring_{\le d}\), its \(d\)-reversal is
\(\polrev{d}{p} = x^d p(1/x) \in \xring_{\le d}\), whose coefficients are those
of \(p\) in reversed order and possibly with shifted index.

\section{Matrix-vector products as polynomial operations}
\label{sec:matvec_polops}

In this section, we translate matrix-vector products \(\mA \vu\) over
\(\field\) into equivalent operations over \(\xring\), assuming \(\mA\) has
some given displacement structure. We start by giving some preliminary
definitions and notations about structured matrices in
\cref{sec:matvec_polops:prelim}. Then, in \cref{sec:matvec_polops:special}, we
summarize well-known such polynomial interpretations when \(\mA\) is a
Toeplitz, Vandermonde, or Cauchy matrix. Finally, in
\cref{sec:matvec_polops:general}, we use this to derive similar interpretations
more generally for a matrix \(\mA\) that is structured with respect to one of
the three displacement operators considered in this paper.
The core idea behind these interpretations of the matrix-vector product
\(\mA \vu\) is to use inversion formulas for the displacement equation \(\dispOp(\mA)
= \mG \mH^{\trsp}\), that is, formulas that express the matrix \(\mA\) as a
function of its displacement generators \((\mG,\mH)\). Although, to
the best of our knowledge, some of the polynomial interpretations in
\cref{sec:matvec_polops:general} are not present as such in the literature,
it should not be seen as surprising that they are feasible, and
using such inversion formulas is a customary approach to deal with
matrix-vector products (see for example
\cite{GohbergOlshevsky1994,pan2001,PanWang2003},
\cite[Sec.\,2]{BostanJeannerodSchost2008}).

\subsection{Displacement structures: notation and definitions}
\label{sec:matvec_polops:prelim}

We follow the classical terminology for the displacement structure
\cite{kailathkungmorf1979} \cite[Sec.\,1.3]{pan2001}. For matrices in
\(\matspace{m}{n}\), their structure depends on the choice of a linear operator
\(\dispOp : \matspace{m}{n} \to \matspace{m}{n}\). Those most commonly
encountered in the literature are defined from two matrices $\mP \in \K^{m
\times m}$ and $\mQ \in \K^{n \times n}$, and are the Sylvester operator
\(\dispOp(\mA) = \dispSyl{\mP, \mQ}(\mA) = \mP \mA - \mA \mQ\) and the Stein operator
\(\dispOp(\mA) = \dispSte{\mP, \mQ}(\mA) = \mA - \mP \mA \mQ\). For a given operator
\(\dispOp\), the \(\dispOp\)-displacement rank of a matrix \(\mA\) is
the rank of \(\dispOp(\mA)\). If this \(\dispOp\)-displacement rank is \(\le
\dispRk\), a \(\dispOp\)-generator for \(\mA\) is a compact
representation of the displaced matrix with only \(\dispRk(n+m)\) field
elements, formed by any pair of matrices $(\mG, \mH) \in \K^{m \times \dispRk}
\times \K^{n \times \dispRk}$ such that \(\dispOp(\mA) = \mG \mH^{\trsp}\).

The matrices \(\mP\) and \(\mQ\) are usually very special. For example, they may be
a diagonal matrix $\D(\vx) = \diag{x_1,\ldots,x_n} \in \K^{n \times n}$ for
some list $\vx = (x_0, \ldots, x_{n-1}) \in \K^{n-1}$, or a so-called cyclic
down-shift matrix $\Zmat_{n,\rho} \in \K^{n \times n}$, or its transpose
\(\Zmat_{n,\rho}^{\trsp}\). The latter is defined for a given $\rho \in
\K$ as
\[
  \Zmat_{n,\rho} =
  \begin{bmatrix}
      &        &        & \rho \\
    1 &        &        &      \\
      & \ddots &        &      \\
      &        & 1 &
  \end{bmatrix}
  \in \K^{n \times n}.
\]
Here, and hereafter, entries that are left blank in a matrix are zero entries.

The rest of this paper focuses on the following structures:
\begin{itemize}
  \item \emph{Toeplitz-like:} operator \(\dispOp = \dispSte{\Zmat_{m,0},
    \Zmat_{n,0}^{\trsp}}\).
  \item \emph{Vandermonde-like:} operator \(\dispOp = \dispSte{\D(\vx),
    \Zmat_{n,0}^{\trsp}}\) for a given repetition-free list $\vx \in \K^m$.
  \item \emph{Cauchy-like:} \(\dispOp = \dispSyl{\D(\vx), \D(\vy)}\) for given
    repetition-free and disjoint lists $\vx \in \K^m$ and $\vy \in \K^n$.
\end{itemize}
We restrict our attention to the case where the displacement operator is
invertible, meaning that $\dispOp(\mA)$ completely determines $\mA$, and
conversely \cite[Def.\,4.3.1]{pan2001}. For the Sylvester operator, this is the
case when $\mP$ and $\mQ$ do not share any eigenvalue, while for the Stein
operator, this is the case when products of eigenvalues of $\mP$ and $\mQ$ is
never equal to $1$ \cite[Thm.\,4.3.2]{pan2001}. One can easily verify that the
three displacement operators above are indeed invertible, thanks to the assumptions
on the point lists $\vx$ and $\vy$ for the Vandermonde-like and Cauchy-like
structures.

\subsection{Fundamental structured matrices and polynomial operations}
\label{sec:matvec_polops:special}

Here are families of special structured matrices that play a fundamental role
in each of the structures we consider. For each one, we define the matrix
structure (where entries that are left blank are zero entries), and state what
polynomial operation it corresponds to. We omit proofs of these
straightforward correspondences, except for the last one which might be
lesser-known.

The first two families are lower and upper triangular Toeplitz matrices. They
correspond to polynomial truncated products and middle products, respectively,
and they are Toeplitz-like with displacement rank at most \(1\).

\begin{lemma}[\(\L_\ell(\vg) \leftrightarrow\) truncated product]
  \label{lem:matrixL}%
  For integers \(\ell \le m\), for $\vg \in \K^{m}$, define
  \[
    \L_\ell(\vg) =
    \scalemath{0.75}{\begin{bmatrix}
        g_0 &   &        &   \\
        g_1 & g_0 &        &   \\
        \vdots & \vdots & \ddots &        \\
        g_{\ell-1} & g_{\ell-2} & \cdots & g_0 \\
        \vdots & \vdots & \vdots & \vdots \\
        g_{m-1} & g_{m-2} & \cdots & g_{m-\ell}
    \end{bmatrix}}
    \in \matspace{m}{\ell}.
  \]
  This is the matrix of the linear map \(\xring_{<\ell} \to
  \xring_{<m}, p \mapsto g p \rem x^m\) in the canonical bases, where \(g \in
  \xring_{<m}\) is the polynomial with coefficient vector \(\vg\).
\end{lemma}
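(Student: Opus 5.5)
The plan is a direct coefficient comparison: compute the $j$th column of the matrix of the map in the canonical bases, and check that it coincides with the $j$th column of $\L_\ell(\vg)$. By linearity of $p \mapsto g p \rem x^m$ (the remainder by a fixed modulus is $\field$-linear), it suffices to treat the basis monomials $p = x^{j}$ for $0 \le j < \ell$. Note first that the map is well defined with values in $\xring_{<m}$, since $g p \rem x^m$ has degree less than $m$ by definition of the remainder.

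For $p = x^j$, write $g = \sum_{k=0}^{m-1} g_k x^k$. Then
\[
  g x^j = \sum_{k=0}^{m-1} g_k x^{k+j} = \sum_{i=j}^{m-1+j} g_{i-j} x^{i},
\]
and reducing modulo $x^m$ simply discards the monomials of degree $\ge m$. This leaves $\sum_{i=j}^{m-1} g_{i-j} x^i$. Hence the coefficient of $x^i$ in $g x^j \rem x^m$ is $g_{i-j}$ if $j \le i \le m-1$ and $0$ if $i < j$. Here $j < \ell \le m$, so the range $j \le i \le m-1$ is nonempty.

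Indexing rows by $i = 0,\ldots,m-1$ and columns by $j = 0,\ldots,\ell-1$, the matrix of the map therefore has entry $g_{i-j}$ when $i \ge j$ and $0$ otherwise. This is precisely the lower triangular Toeplitz pattern displayed in the definition of $\L_\ell(\vg)$. Column $j$ starts with $j$ zeros, then $g_0, g_1, \ldots, g_{m-1-j}$, ending at $g_{m-\ell}$ in the last column $j=\ell-1$, which matches the last displayed row $g_{m-1}, \ldots, g_{m-\ell}$.

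There is no real obstacle. The only point needing care is the bookkeeping of indices, namely that the bottom-right entry is $g_{m-\ell}$ and that the truncation at $x^m$ corresponds exactly to keeping only $m$ rows. Both follow from the explicit column formula above.
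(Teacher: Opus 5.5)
Your proof is correct and complete. Checking the map on the monomials $x^j$, $0 \le j < \ell$, shows that the coefficient of $x^i$ in $g x^j \rem x^m$ is $g_{i-j}$ for $i \ge j$ and $0$ otherwise, which is exactly the entrywise description of $\L_\ell(\vg)$; the paper omits this proof as a straightforward correspondence, and your column-by-column comparison is the routine verification it leaves to the reader.
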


\begin{lemma}[\(\U_\ell(\vh) \leftrightarrow\) middle product]
  \label{lem:matrixU}%
  For integers \(\ell \le n\), for $\vh \in \K^{n}$, define
  \[
    \U_\ell(\vh) =
    \scalemath{0.75}{\begin{bmatrix}
        h_0 & h_1 & \cdots & h_{\ell-1} & \cdots & h_{n-1} \\
            & h_0 & \cdots & h_{\ell-2} & \cdots & h_{n-2} \\
            &     & \ddots & \vdots     & \cdots & \vdots \\
            & & & h_0                   & \cdots & h_{n-\ell} \\
    \end{bmatrix}}
    \in \K^{\ell \times n}.
  \]
  This is the matrix of the linear map \(\xring_{<n} \to \xring_{<\ell}, p
  \mapsto (h p \quo x^{n-1}) \rem x^\ell\) in the canonical bases, where \(h
  \in \xring_{<n}\) is the polynomial with coefficient vector \([h_{n-1},
  \ldots, h_0]\). Note that in the square case \(\ell = n\) one can omit ``\({}
  \rem x^\ell\)'' in the above formula.
\end{lemma}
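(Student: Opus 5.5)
We verify the claim coefficient by coefficient. Write \(p = \sum_{j=0}^{n-1} p_j x^j\). By hypothesis the coefficient vector of \(h\) is \([h_{n-1},\ldots,h_0]\), so the coefficient of \(x^k\) in \(h\) is \(h_{n-1-k}\) for \(0 \le k < n\). It is convenient to set \(h_t = 0\) for \(t<0\) or \(t \ge n\).

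The first step computes the coefficient of \(x^{n-1+i}\) in the product \(hp\), for \(0 \le i < \ell\). This coefficient is
\[
  \sum_{k+j = n-1+i} h_{n-1-k}\, p_j = \sum_{j} h_{j-i}\, p_j .
\]
The summation effectively runs over \(i \le j \le n-1\), since \(h_{j-i}\) vanishes for \(j<i\). Here we used \(n-1-k = n-1-(n-1+i-j) = j-i\).

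The second step identifies the operators. The polynomial \((hp \quo x^{n-1})\) has as its coefficient of \(x^i\) exactly the coefficient of \(x^{n-1+i}\) in \(hp\). Taking \({}\rem x^\ell\) keeps the indices \(0 \le i < \ell\). Hence the output coefficient vector has \(i\)th entry \(\sum_{j=i}^{n-1} h_{j-i} p_j\).

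The third step compares with the matrix. Row \(i\) of \(\U_\ell(\vh)\), indexed from \(0\), has zeros in columns \(0,\ldots,i-1\) and entry \(h_{j-i}\) in column \(j \ge i\). This matches the displayed rows: \(h_0, h_1, \ldots, h_{n-1}\) in row \(0\), then shifted by one position per row, ending with \(h_{n-i-1}\) in the last column. So \((\U_\ell(\vh)\,\vp)_i = \sum_{j\ge i} h_{j-i} p_j\), which agrees with the second step. Therefore \(\U_\ell(\vh)\) is the matrix of the map in the canonical bases.

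For the square case \(\ell=n\), note that \(\deg(hp) \le 2n-2\). This gives \(\deg(hp \quo x^{n-1}) \le n-1 < n\), so reducing modulo \(x^n\) changes nothing and the \({}\rem x^\ell\) may be omitted.

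The only point requiring care is the index bookkeeping caused by storing the coefficients of \(h\) in reversed order. Everything else is direct expansion of the product.
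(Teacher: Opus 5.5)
Your proof is correct. The paper omits a proof of this lemma as a straightforward correspondence, and your coefficient-by-coefficient check is exactly the intended verification: the coefficient of \(x^i\) in \(hp \quo x^{n-1}\) is \(\sum_{j\ge i} h_{j-i}p_j\), which matches row \(i\) of \(\U_\ell(\vh)\). Your justification for dropping \({}\rem x^\ell\) when \(\ell=n\), via \(\deg(hp)\le 2n-2\), is the same argument the paper uses in the proof of \cref{thm:linsys_poly:toeplitz}.
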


The third family is that of Vandermonde matrices, whose corresponding
polynomial operation is the evaluation at the points that define the
matrix. They are Vandermonde-like with displacement rank \(1\).

\begin{lemma}[\(\V_n(\vx) \leftrightarrow\) polynomial evaluation]
  \label{lem:matrixV}%
  For integers \(m,n\), for $\vx \in \field^m$, define
  \begin{equation*}
    \V_n(\vx) =
    \scalemath{0.75}{\begin{bmatrix}
        1 & x_1 & x_1^2 & \cdots & x_1^{n-1} \\
        1 & x_2 & x_2^2 & \cdots & x_2^{n-1} \\
        \vdots & \vdots & \vdots & \ddots & \vdots \\
        1 & x_m & x_m^2 & \cdots & x_m^{n-1} \\
    \end{bmatrix}}
    \in \matspace{m}{n}.
  \end{equation*}
  This is the matrix of the linear map \(\xring_{<n} \to \field^m, p \mapsto
  (p(x_1), \ldots, p(x_m))\) in the canonical bases. If \(\vx\) is
  repetition-free, \(\V_n(\vx)\) is also the matrix of the linear
  map \(\xring_{<n} \to \xring_{<m}, p \mapsto p \rem \mpol_\vx\), using the
  canonical basis for \(\xring_{<n}\) and the Lagrange basis of \(\vx\) for
  \(\xring_{<m}\).
\end{lemma}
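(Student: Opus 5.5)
The statement has two parts, and I will check each by computing the image of the canonical basis vectors.

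For the first part, take \(p = \sum_{j=0}^{n-1} p_j x^j \in \xring_{<n}\). Then \(p(x_i) = \sum_j p_j x_i^j\), which is exactly the \(i\)th entry of \(\V_n(\vx)\,[p_0,\ldots,p_{n-1}]^\trsp\). Equivalently, the \(j\)th column of \(\V_n(\vx)\), namely \((x_1^{j-1},\ldots,x_m^{j-1})\), is the image of the basis monomial \(x^{j-1}\). By linearity, \(\V_n(\vx)\) is the matrix of the evaluation map.

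For the second part, assume \(\vx\) is repetition-free, and let \(q = p \rem \mpol_\vx \in \xring_{<m}\). Since \(p - q\) is divisible by \(\mpol_\vx = \prod_i (x-x_i)\), we get \(q(x_i) = p(x_i)\) for every \(i\).

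Next, every \(q \in \xring_{<m}\) can be written in the Lagrange basis as \(q = \sum_i q(x_i)\,\lpol_{\vx,i}\). To justify this, note that both sides have degree less than \(m\) and agree at the \(m\) distinct points \(x_i\), because \(\lpol_{\vx,i}(x_j)\) equals \(1\) if \(i=j\) and \(0\) otherwise. Their difference therefore has \(m\) roots and degree less than \(m\), so it is zero.

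Hence the coordinate vector of \(p \rem \mpol_\vx\) in the Lagrange basis is \((p(x_1),\ldots,p(x_m))\). By the first part, this vector equals \(\V_n(\vx)\) applied to the coefficient vector of \(p\), which proves the second claim.

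The only step that needs any care is the identification of Lagrange coordinates with evaluations. That identification relies on the points being distinct, which is precisely the repetition-free hypothesis; everything else is direct unfolding of the definitions.
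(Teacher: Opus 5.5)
Your proof is correct. The paper gives no proof of this lemma (it calls it a straightforward correspondence), and your argument is the routine verification it has in mind: evaluation is the matrix--vector product, \(p \rem \mpol_\vx\) agrees with \(p\) at the distinct points \(x_i\), and Lagrange coordinates are exactly these values (the same interpolation argument appears in the paper's proof of the Cauchy-matrix lemma).
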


Finally, Cauchy matrices are defined from disjoint lists of points \(\vx\) and
\(\vy\) and correspond to the evaluation at \(\vx\) of some rational fraction
whose poles are \(\vy\). They are Cauchy-like with displacement rank \(1\).

\begin{lemma}[\(\C(\vx, \vy) \leftrightarrow\) rational fraction evaluation]
  \label{lem:matrixC}%
  For integers \(m,n\), for $\vx \in \field^m$ and \(\vy \in \field^n\) with
  $x_i \neq y_j$ for all $i,j$, define
  \begin{equation*}
    \C(\vx, \vy) =
    \scalemath{0.75}{\begin{bmatrix}
        \frac{1}{x_1 - y_1} & \frac{1}{x_1 - y_2} & \cdots & \frac{1}{x_1 - y_n} \\
        \frac{1}{x_2 - y_1} & \frac{1}{x_2 - y_2} & \cdots & \frac{1}{x_2 - y_n} \\
        \vdots & \vdots & \ddots & \vdots \\
        \frac{1}{x_m - y_1} & \frac{1}{x_m - y_2} & \cdots & \frac{1}{x_m - y_n} \\
    \end{bmatrix}}
    \in \matspace{m}{n}.
  \end{equation*}
  This is the matrix of the linear map \(\field^n \to \field^m,
  (v_1,\ldots,v_n) \mapsto (v(x_1),\ldots,v(x_m))\) in the canonical bases,
  where \(v \in \field(x)\) is the rational fraction \(\sum_{1 \le j \le n} v_j
  / (x - y_j)\).
  If \(\vx\) and \(\vy\) are repetition-free, then \(\C(\vx, \vy)\) is also the
  matrix of the linear map
  \[
    \xring_{<n} \to \xring_{<m}, p \mapsto \invmpol_\vy (\mpol'_\vy p \rem \mpol_\vy) \rem \mpol_\vx,
    \quad \text{where } \invmpol_\vy = \mpol_\vy^{-1} \rem \mpol_\vx \in \xring_{<m},
  \]
  using the Lagrange basis of \(\vy\) for \(\xring_{<n}\) and the Lagrange
  basis of \(\vx\) for \(\xring_{<m}\).
\end{lemma}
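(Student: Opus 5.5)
The first part of the statement is a direct evaluation. Applying \(\C(\vx,\vy)\) to \((v_1,\ldots,v_n)\) gives the vector whose \(i\)th entry is \(\sum_j v_j/(x_i-y_j)\). That is exactly \(v(x_i)\) for the rational fraction \(v=\sum_j v_j/(x-y_j)\), which has no pole at the \(x_i\) because \(x_i\neq y_j\). The real content is the second part. We take \(p\in\xring_{<n}\) with coordinates \((p_1,\ldots,p_n)\) in the Lagrange basis of \(\vy\), so \(p_j=p(y_j)\). We must show that the polynomial \(q=\invmpol_\vy\,(\mpol'_\vy p \rem \mpol_\vy)\rem \mpol_\vx\) has Lagrange coordinates \(q(x_i)=\sum_j p_j/(x_i-y_j)\) with respect to \(\vx\).

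The plan is to identify \(r:=\mpol'_\vy p\rem\mpol_\vy\) with the numerator of the partial fraction \(\sum_j p_j/(x-y_j)\) over the common denominator \(\mpol_\vy\). Concretely, we will show
\[
  r=\sum_{j=1}^n p_j\,\frac{\mpol_\vy}{x-y_j}.
\]
Both sides lie in \(\xring_{<n}\), and \(\vy\) is repetition-free, so it suffices to compare values at each \(y_k\). On the right, every term with \(j\neq k\) vanishes at \(y_k\), and the term \(j=k\) gives \(p_k\prod_{l\neq k}(y_k-y_l)=p_k\,\mpol'_\vy(y_k)\). On the left, \(r(y_k)=\mpol'_\vy(y_k)p(y_k)=\mpol'_\vy(y_k)p_k\), since \(\mpol_\vy(y_k)=0\). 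The two sides agree, so the identity holds.

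It follows that \(r/\mpol_\vy=\sum_j p_j/(x-y_j)=v\) as rational fractions. Each \(x_i\) is not a root of \(\mpol_\vy\) by disjointness, and \(\invmpol_\vy\) satisfies \(\invmpol_\vy(x_i)=1/\mpol_\vy(x_i)\) because \(\invmpol_\vy\mpol_\vy\equiv1 \bmod \mpol_\vx\). This inverse exists since \(\gcd(\mpol_\vy,\mpol_\vx)=1\). Reducing modulo \(\mpol_\vx\) does not change values at the \(x_i\). Hence
\[
  q(x_i)=\invmpol_\vy(x_i)\,r(x_i)=\frac{r(x_i)}{\mpol_\vy(x_i)}=v(x_i).
\]
These values are the Lagrange coordinates of \(q\in\xring_{<m}\) with respect to \(\vx\), which proves the claim.

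The only step requiring an idea is the partial-fraction identity for \(r\), which is the classical Lagrange interpolation formula written in the form \(\mpol_\vy\sum_j p(y_j)/(\mpol'_\vy(y_j)(x-y_j))\), combined with the fact that \(\mpol'_\vy(y_j)\neq0\). The remaining steps are bookkeeping with evaluations and the invertibility of \(\mpol_\vy\) modulo \(\mpol_\vx\).
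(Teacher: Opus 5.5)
Your proof is correct and follows essentially the same route as the paper. You establish the key identity \(\mpol'_\vy p \rem \mpol_\vy = \mpol_\vy \sum_j p(y_j)/(x-y_j)\) by comparing values at the \(y_k\), which is the paper's Lagrange-interpolation step, and then evaluate at each \(x_i\) using \(\invmpol_\vy(x_i) = \mpol_\vy(x_i)^{-1}\).
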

\begin{proof}
  Let \(\vv = [v_j]_{1 \le j \le n} \in \matspace{n}{1}\) and define \(v =
  \sum_{1 \le j \le n} v_j / (x - y_j) \in \field(x)\). By definition of the
  Cauchy matrix, it is obvious that the \(i\)th entry of the vector \(\C(\vx,
  \vy) \vv\) is \(v(x_i)\), for \(1 \le i \le m\). So we focus on proving the
  second claim. Note that \(\mpol_\vy\) is invertible modulo \(\mpol_\vx\)
  since \(\vx\) and \(\vy\) are disjoint; by construction, \(\invmpol_\vy(x_i)
  =
  \mpol_\vy(x_i)^{-1}\). The assumption that \(\vx\) and \(\vy\) are
  repetition-free allows us to use the Lagrange basis \(\{\lpol_{\vy,j}, 1 \le
  j \le n\}\) of \(\xring_{<n}\), and similarly for \(\xring_{<m}\).

  Let \(p = \sum_{1 \le j \le n} v_j \lpol_{\vy,j} \in \xring_{<n}\).
  The polynomial \(q = \invmpol_\vy (\mpol'_\vy p \rem \mpol_\vy) \rem
  \mpol_\vx\) is in \(\xring_{<m}\), and our goal is to prove that \(v(x_i) =
  q(x_i)\) holds for all \(1
  \le i \le m\), or, equivalently, \(v(x_i) = \mpol_\vy(x_i)^{-1} (\mpol'_\vy p
  \rem \mpol_\vy)(x_i)\).
  By definition of \(\lpol_{\vy,j}\), and using \(\mpol'_\vy(y_j) = \prod_{i
  \neq j} (y_j - y_i)\), we obtain
  \[
    \mpol_\vy v =
    \mpol_\vy \sum_{1 \le j \le n} \frac{v_j}{x - y_j} =
    \sum_{1 \le j \le n} \mpol'_\vy(y_j) v_j \mathfrak{L}_{\vy,j}
    = \mpol'_\vy p \rem \mpol_\vy.
  \]
  The last equality holds by the Lagrange interpolation formula, since
  \(\mpol'_\vy p \rem \mpol_\vy\) has degree less than \(n =
  \deg(\mpol_\vy)\) and its value at \(y_j\) is \(\mpol'_\vy(y_j)v_j\), for all \(1
  \le j \le n\). Taking the value at \(x_i\) yields \(\mpol_\vy(x_i) v(x_i) =
  (\mpol'_\vy p \rem \mpol_\vy)(x_i)\), which concludes the proof.
\end{proof}

\begin{remark}
  The ingredients in this proof are close to those behind the classical
  factorization
  \[
    \C(\vx, \vy) =
    \D(\mpol_\vy(\vx))^{-1} \V_n(\vx) \V_n(\vy)^{-1} \D(\mpol_\vy'(\vy))
    ,
  \]
  found for example in \cite[Eq.\,(3.6.5)]{pan2001} in the case \(n = m\).
  \hfill\qedsymbol
\end{remark}

\subsection{Structured matrices and polynomial operations}
\label{sec:matvec_polops:general}

A major question when computing with a structured matrix \(\mA\) is to
understand how it can be expressed as a function of its generator
\((\mG,\mH)\), itself defined from the displaced matrix \(\dispOp(\mA)\). This
is the question of inverting the displacement operator and recovering \(\mA\)
from \((\mG,\mH)\), which has received a fairly general study; see
\cite[Sec.\,4.3 and\,4.4]{pan2001} and \cite{PanWang2003} and the references
therein. For Toeplitz-like matrices, this is known as the $\Sigma$-LU
decomposition, often stated in the square case \(m = n\). The rectangular case,
stated below, is directly deduced from \cite[Thm.\,4.7]{PanWang2003}. For
Vandermonde-like and Cauchy-like matrices, one may refer to
\cite[Ex.\,4.4.6.(b) and\,1.4.1]{pan2001}.

\begin{lemma}
  \label{lem:inv_disp}%
  Let $\mA \in \matspace{m}{n}$, \(\dispRk \le \min(m,n)\), and $(\mG, \mH) \in
  \matspace{m}{\dispRk} \times \matspace{n}{\dispRk}$, whose columns are
  denoted by \(\vg_j = \mG_{*j}\) and \(\vh_j = \mH_{*j}\). Let also  \(\ell
  = \min(m, n)\).
  \begin{enumerate}
    \item%
      \label{lem:inv_disp:toeplitz}%
      Toeplitz-like:
      $\dispSte{\Zmat_{m,0}, \Zmat_{n,0}^{\trsp}}(\mA) = \mG \mH^{\trsp}
      \Rightarrow
      \mA = \sum_{1 \le j \le \dispRk} \L_\ell(\vg_{j}) \U_\ell(\vh_{j})$.

    \item%
      \label{lem:inv_disp:vandermonde}%
      Vandermonde-like:
      \(\dispSte{\D(\vx), \Zmat_{n,0}^{\trsp}}(\mA) = \mG \mH^{\trsp}
      \Rightarrow
      \mA = \sum_{1\le j \le \dispRk} \D(\vg_j) \V_n(\vx) \U_n(\vh_j)\).

    \item%
      \label{lem:inv_disp:cauchy}%
      Cauchy-like:
      \(\dispSyl{\D(\vx), \D(\vy)}(\mA) = \mG \mH^{\trsp}
      \Rightarrow
      \mA = \sum_{1\le j \le \dispRk} \D(\vg_j) \C(\vx, \vy) \D(\vh_j)\).
  \end{enumerate}
\end{lemma}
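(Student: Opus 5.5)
Since each of the three displacement operators is invertible, an equation \(\dispOp(\mA) = \mG\mH^\trsp\) has exactly one solution \(\mA\). So in each case it is enough to check that the proposed right-hand side \(\mathsf{B} = \sum_j (\cdots)\) satisfies \(\dispOp(\mathsf{B}) = \mG\mH^\trsp\). Because \(\dispOp\) is linear and \(\mG\mH^\trsp = \sum_j \vg_j \vh_j^\trsp\), this reduces to the rank-one statement \(\dispOp(\mathsf{B}_j) = \vg_j\vh_j^\trsp\) for each summand \(\mathsf{B}_j\). I will therefore fix a single pair \((\vg,\vh)\) and verify this identity for each structure separately.

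\emph{Cauchy-like case.} This is the simplest. Diagonal matrices commute, so
\[
\D(\vx)\D(\vg)\C\D(\vh) - \D(\vg)\C\D(\vh)\D(\vy) = \D(\vg)\bigl(\D(\vx)\C - \C\D(\vy)\bigr)\D(\vh).
\]
The inner matrix has \((i,k)\) entry \((x_i-y_k)/(x_i-y_k)=1\), so it is the all-ones matrix \(\mathbf{1}\mathbf{1}^\trsp\). Hence the whole expression equals \(\vg\vh^\trsp\).

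\emph{Vandermonde-like case.} The same commutation pulls \(\D(\vg)\) out on the left. It then remains to show
\[
\V\U_n(\vh) - \D(\vx)\V\U_n(\vh)\Zmat_{n,0}^\trsp = \mathbf{1}\vh^\trsp,
\]
with \(\vh\) read as the first row of \(\U_n(\vh)\), which the indexing in \cref{lem:matrixU} allows. I will use two facts:
\begin{itemize}
\item \(\D(\vx)\V_n(\vx) = \V_n(\vx)\Zmat_{n,0}^\trsp + (\text{last column correction})\), reflecting multiplication by \(x\);
\item \(\U_n(\vh)\) is upper triangular Toeplitz, so it commutes with \(\Zmat_{n,0}^\trsp\), the upper shift.
\end{itemize}
A cleaner route avoids the correction term: \(\Delta[\D(\vx),\Zmat^\trsp](\V) = \V - \D(\vx)\V\Zmat^\trsp\) has first column \(\mathbf{1}\) and all other columns zero, since \(x_i\cdot x_i^{k-1} = x_i^k\). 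Then, using the commutation,
\[
\V\U - \D(\vx)\V\U\Zmat^\trsp = \bigl(\V - \D(\vx)\V\Zmat^\trsp\bigr)\U = \mathbf{1}e_1^\trsp\U = \mathbf{1}\cdot(\text{first row of }\U).
\]

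\emph{Toeplitz-like case.} This is the main obstacle because it is rectangular and involves truncation to \(\ell=\min(m,n)\). I will follow the same pattern. The matrix \(\L_\ell(\vg)\) is the first \(\ell\) columns of an \(m\times m\) lower triangular Toeplitz matrix \(T_g\), and \(\U_\ell(\vh)\) is the first \(\ell\) rows of an \(n\times n\) upper triangular Toeplitz matrix \(T_h\).

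The key identities I would establish are:
\begin{itemize}
\item \(\Zmat_{m,0}\L_\ell(\vg) = \L_\ell(\vg)\Zmat_{\ell,0} + (\text{terms in the bottom})\), or more directly \(\Zmat_{m,0}\L_\ell(\vg) = \L_\ell(\vg)\Zmat_{\ell,0}\) exactly when restricted appropriately;
\item \(\U_\ell(\vh)\Zmat_{n,0}^\trsp = \Zmat_{\ell,0}^\trsp\U_\ell(\vh)\), up to a last-row term.
\end{itemize}
From these, \(\L\U - \Zmat\L\U\Zmat^\trsp = \L\bigl(I_\ell - \Zmat_{\ell,0}\Zmat_{\ell,0}^\trsp\bigr)\U\). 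The middle factor is \(e_1e_1^\trsp\), which gives \(\L e_1\,e_1^\trsp\U = \vg\vh^\trsp\).

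The delicate step is checking that the truncation boundary terms vanish when \(\ell<m\) or \(\ell<n\). I would do this entrywise: compare the \((i,k)\) entries of \(\mathsf{B}\) and \(\Zmat\mathsf{B}\Zmat^\trsp\), i.e.\ \(\mathsf{B}_{ik}-\mathsf{B}_{i-1,k-1}\), as sums over the shared index \(t<\ell\). The two sums differ only in the term \(t=0\); the extra term at the top index \(t=\ell\) is absent because both index ranges are capped at \(\ell\) identically. Alternatively, I could simply cite \cite[Thm.\,4.7]{PanWang2003} as the statement suggests.
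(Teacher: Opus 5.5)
Your overall strategy is sound, and it differs from the paper, which gives no argument: it defers to \cite[Thm.\,4.7]{PanWang2003} for the rectangular Toeplitz-like case and to \cite[Ex.\,4.4.6.(b) and\,1.4.1]{pan2001} for the other two. You reduce, via injectivity and linearity of $\dispOp$, to the rank-one identity $\dispOp(\mathsf{B}_j)=\vg_j\vh_j^\trsp$, which gives a short self-contained proof. Your Cauchy-like and Vandermonde-like verifications are complete and correct. In the first, $\D(\vx)\C(\vx,\vy)-\C(\vx,\vy)\D(\vy)$ is the all-ones matrix. In the second, $\V_n(\vx)-\D(\vx)\V_n(\vx)\Zmat_{n,0}^\trsp$ has only its first column nonzero, and $\U_n(\vh)$ is a polynomial in $\Zmat_{n,0}^\trsp$.

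The Toeplitz-like case, however, has a wrong justification exactly at the step you call delicate. Take $\mathsf{B}=\L_\ell(\vg)\U_\ell(\vh)$ and set $g_r=h_r=0$ for $r<0$. Then $\mathsf{B}_{ik}=\sum_{t=0}^{\ell-1}g_{i-1-t}h_{k-1-t}$, and after shifting the index, $\mathsf{B}_{i-1,k-1}=\sum_{t=1}^{\ell}g_{i-1-t}h_{k-1-t}$; this also gives the correct value $0$ when $i=1$ or $k=1$. The ranges are therefore \emph{not} capped identically. The difference is $g_{i-1}h_{k-1}-g_{i-1-\ell}h_{k-1-\ell}$, and the top term vanishes only because $i\le m$, $k\le n$ and $\ell=\min(m,n)$ force $i-1-\ell<0$ or $k-1-\ell<0$.

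Your stated reason never uses $\ell=\min(m,n)$, and no argument can succeed without it. For $m=n=2$ and $\ell=1$ one gets $\mathsf{B}=\vg\vh^\trsp$, whose displacement has $(2,2)$ entry $g_1h_1-g_0h_0\neq g_1h_1$ in general.

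The same issue is hidden in your matrix route. The correction terms are
\[
\Zmat_{m,0}\L_\ell(\vg)=\L_\ell(\vg)\Zmat_{\ell,0}+(\Zmat_{m,0}^{\ell}\vg)e_\ell^\trsp,
\qquad
\U_\ell(\vh)\Zmat_{n,0}^\trsp=\Zmat_{\ell,0}^\trsp\U_\ell(\vh)+e_\ell\,\vh^\trsp(\Zmat_{n,0}^\trsp)^{\ell}.
\]
The first is a last-column term, not a bottom one. The cross terms vanish since $\Zmat_{\ell,0}e_\ell=0$. But the product $(\Zmat_{m,0}^{\ell}\vg)\,\vh^\trsp(\Zmat_{n,0}^\trsp)^{\ell}$ survives, and it is zero precisely because $\Zmat_{m,0}^{m}=0$ (when $\ell=m$) or $\Zmat_{n,0}^{n}=0$ (when $\ell=n$). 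Adding this observation makes $\mathsf{B}-\Zmat_{m,0}\mathsf{B}\Zmat_{n,0}^\trsp=\L_\ell(\vg)e_1e_1^\trsp\U_\ell(\vh)=\vg\vh^\trsp$ rigorous and completes your proof.
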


As a consequence of these formulas, we obtain the following results that give a
polynomial interpretation of structured linear systems.

\begin{theorem}[Polynomial form of Toeplitz-like system]
  \label{thm:linsys_poly:toeplitz}%
  Let \(\dispOp = \dispSte{\Zmat_{m,0}, \Zmat_{n,0}^{\trsp}}\). Let $\mA \in
  \matspace{m}{n}$, \(\dispRk \le \min(m,n)\), and $(\mG, \mH) \in
  \matspace{m}{\dispRk} \times \matspace{n}{\dispRk}$ be such that
  $\dispOp(\mA) = \mG \mH^{\trsp}$. For \(1 \le j \le \dispRk\), define the
  polynomials \(g_j \in \xring_{<m}\) and \(h_j \in \xring_{<n}\) whose
  coefficient vectors are \(\mG_{*j}\) and \(\vecrev{\mH_{*j}}\), respectively.
  For any \(\vu \in \matspace{n}{1}\) and \(\vv \in \matspace{m}{1}\),
  \[
    \vv = \mA \vu
    \quad\Leftrightarrow\quad
    v =
    \begin{bmatrix} g_1 & \cdots & g_\dispRk \end{bmatrix}
    \left(
      \begin{bmatrix}
        h_1 \\ \vdots \\ h_\dispRk
      \end{bmatrix}
      u
      \quo x^{n-1}
    \right)
    \rem x^m,
  \]
  where \(u \in \xring_{<n}\) and \(v \in \xring_{<m}\) are the polynomials
  with coefficient vectors \(\vu\) and \(\vv\).
\end{theorem}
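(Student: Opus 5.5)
We start from the $\Sigma$-LU inversion formula, \cref{lem:inv_disp} item~\ref{lem:inv_disp:toeplitz}. It gives $\mA = \sum_{j} \L_\ell(\vg_j)\U_\ell(\vh_j)$ with $\ell=\min(m,n)$, so that $\mA\vu = \sum_j \L_\ell(\vg_j)\bigl(\U_\ell(\vh_j)\vu\bigr)$. Because coefficient vectors depend linearly on polynomials, it suffices to show that each term $\L_\ell(\vg_j)\U_\ell(\vh_j)\vu$ has coefficient vector equal to that of $g_j\,\bigl(h_j u \quo x^{n-1}\bigr) \rem x^m$. Summing over $j$ then gives $v = \sum_j g_j (h_j u \quo x^{n-1}) \rem x^m$ as an equality of polynomials in $\xring_{<m}$. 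Since the map from $\xring_{<m}$ to coefficient vectors is a bijection, this is equivalent to $\vv=\mA\vu$.

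For a single term, we first apply \cref{lem:matrixU} with $h=h_j$. Its coefficient vector is $\vecrev{\mH_{*j}}$, which matches the convention $[h_{n-1},\ldots,h_0]$ of that lemma. The lemma says that $\U_\ell(\vh_j)\vu$ is the coefficient vector of $c_j := (h_j u \quo x^{n-1}) \rem x^\ell \in \xring_{<\ell}$. Next, \cref{lem:matrixL} with $g=g_j$ shows that $\L_\ell(\vg_j)$ applied to this vector gives the coefficients of $g_j c_j \rem x^m$.

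The remaining step, and the only real obstacle, is to remove the inner ``$\rem x^\ell$'' so that the result matches the statement. We split into two cases.
\begin{itemize}
  \item If $\ell = m \le n$, then $g_j c_j \rem x^m$ depends only on $c_j \rem x^m$. Since $(q \rem x^m)\rem x^m = q \rem x^m$, the truncation to degree less than $\ell=m$ can be dropped.
  \item If $\ell = n < m$, then $\deg(h_j u) \le 2n-2$, so $\deg\bigl(h_j u \quo x^{n-1}\bigr) \le n-1 < \ell$. The truncation modulo $x^\ell$ is therefore trivial, as the remark in \cref{lem:matrixU} already notes.
\end{itemize}
In both cases $g_j c_j \rem x^m = g_j\,(h_j u \quo x^{n-1}) \rem x^m$, which completes the argument.

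Before this, we check that the indexing conventions for $\vu$ and $\vh_j$ in \cref{lem:inv_disp} agree with those of \cref{lem:matrixL,lem:matrixU}. The statement of \cref{thm:linsys_poly:toeplitz} was set up with the reversed vector precisely so that they do. Once that is confirmed, the proof is a direct bookkeeping exercise.
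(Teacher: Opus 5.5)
Your proposal is correct and follows the paper's proof essentially step for step. The paper also applies the $\Sigma$-LU inversion formula and reads each term $\L_\ell(\mG_{*j})\U_\ell(\mH_{*j})\vu$ through the truncated-product and middle-product lemmas. It removes the inner truncation modulo $x^\ell$ by the same case split: trivial when $\ell=m$, and via $\deg(h_j u)\le 2(n-1)$ when $\ell=n$.
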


\begin{proof}
  One has \(\mA \vu = \sum_{1 \le j \le \dispRk} \L_\ell(\mG_{*j})
  \U_\ell(\mH_{*j})\vu\), where \(\ell = \min(m,n)\), by
  \cref{lem:inv_disp:toeplitz} of \cref{lem:inv_disp}. For \(1 \le j \le
  \dispRk\), let \(f_j = g_j (h_j u \quo x^{n-1} \rem x^\ell) \rem x^m\).
  According to \cref{lem:matrixL,lem:matrixU}, its coefficient vector is
  \(\L_\ell(\mG_{*j}) \U_\ell(\mH_{*j})\vu\). Thus, it suffices to show \(f_j =
  g_j (h_j u \quo x^{n-1}) \rem x^m\), and then the conclusion follows by
  summing over \(j\). This equality is clear when \(\ell = m\), and when
  \(\ell = n\) it follows from \(\deg(h_j u \quo x^{n-1}) < n\), itself
  deduced from \(\deg(h_j u) \le 2(n-1)\).
\end{proof}

\begin{theorem}[Polynomial form of Vandermonde-like system]
  \label{thm:linsys_poly:vandermonde}%
  Let \(\dispOp = \dispSte{\D(\vx), \Zmat_{n,0}^{\trsp}}\) for some
  repetition-free \(\vx \in \field^m\). Let $\mA \in \matspace{m}{n}$,
  \(\dispRk \le \min(m,n)\), and $(\mG, \mH) \in \matspace{m}{\dispRk} \times
  \matspace{n}{\dispRk}$ be such that \(\dispOp(\mA) = \mG \mH^{\trsp}\).
  For \(1 \le j \le \dispRk\), define \(g_j
  \in \xring_{<m}\) as the interpolant of \((\vx,\mG_{*j})\) and
  \(h_j \in \xring_{<n}\) as the polynomial with coefficient vector
  \(\vecrev{\mH_{*j}}\). For any \(\vu \in \matspace{n}{1}\) and \(\vv \in
  \matspace{m}{1}\),
  \[
    \vv = \mA \vu
    \quad\Leftrightarrow\quad
    v =
    \begin{bmatrix} g_1 & \cdots & g_\dispRk \end{bmatrix}
    \left(
      \begin{bmatrix}
        h_1 \\ \vdots \\ h_\dispRk
      \end{bmatrix}
      u
      \quo x^{n-1}
    \right)
    \rem \mpol_\vx,
  \]
  where \(u \in \xring_{<n}\) has coefficient vector \(\vu\) and \(v \in
  \xring_{<m}\) is the interpolant of \((\vx,\vv)\).
\end{theorem}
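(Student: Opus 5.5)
The plan mirrors the proof of \cref{thm:linsys_poly:toeplitz}, replacing the truncated product by a reduction modulo \(\mpol_\vx\) and working in the Lagrange basis on the output side. Start from \cref{lem:inv_disp:vandermonde} of \cref{lem:inv_disp}, which gives \(\mA \vu = \sum_{1 \le j \le \dispRk} \D(\mG_{*j}) \V_n(\vx) \U_n(\mH_{*j}) \vu\). Because the map \(\vv \mapsto v\) (interpolant of \((\vx,\vv)\)) is a linear bijection \(\field^m \to \xring_{<m}\), it is enough to show the following: for each \(j\), the interpolant of \((\vx, \D(\mG_{*j}) \V_n(\vx) \U_n(\mH_{*j}) \vu)\) equals \(g_j (h_j u \quo x^{n-1}) \rem \mpol_\vx\). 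Summing over \(j\) and using linearity of the interpolation map and of \({}\rem \mpol_\vx\) then gives the equivalence, in both directions at once.

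For a fixed \(j\), apply the three factors one at a time. First, by \cref{lem:matrixU} in the square case \(\ell = n\), the vector \(\U_n(\mH_{*j})\vu\) is the coefficient vector of \(c_j = h_j u \quo x^{n-1} \in \xring_{<n}\). Here \(h_j\) has coefficient vector \(\vecrev{\mH_{*j}}\), which is exactly the convention of that lemma. Second, by \cref{lem:matrixV}, \(\V_n(\vx)\) applied to this vector gives \((c_j(x_1),\ldots,c_j(x_m))\), the values of \(c_j\) at \(\vx\). Third, \(\D(\mG_{*j})\) multiplies the \(i\)th value by \((\mG_{*j})_i = g_j(x_i)\), since \(g_j\) is the interpolant of \((\vx,\mG_{*j})\). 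The resulting vector is therefore \(((g_j c_j)(x_i))_{1\le i\le m}\).

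It remains to identify the interpolant of these values. The polynomial \(g_j c_j \rem \mpol_\vx\) has degree less than \(m\), and it takes the same values as \(g_j c_j\) at every \(x_i\), because \(\mpol_\vx(x_i)=0\). By uniqueness of the interpolant, which uses that \(\vx\) is repetition-free, it is the interpolant of this vector.

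I expect no real obstacle. The main point of care is the convention bookkeeping: reading \(\U_n\) with the reversed coefficient vector of \(h_j\), and noting that no \({}\rem x^\ell\) is needed because \(\ell=n\). One should also note that reducing modulo \(\mpol_\vx\) once, after summing, agrees with summing the per-\(j\) reductions. That is immediate from the linearity of \({}\rem \mpol_\vx\).
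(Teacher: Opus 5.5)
Your proposal is correct and follows the paper's proof. Both start from the Vandermonde-like formula of \cref{lem:inv_disp}, identify the interpolant of each term \(\D(\mG_{*j})\V_n(\vx)\U_n(\mH_{*j})\vu\) as \(g_j(h_j u \quo x^{n-1}) \rem \mpol_\vx\) via \cref{lem:matrixU,lem:matrixV}, and sum over \(j\). The only cosmetic difference is that the paper reads \(\D(\mG_{*j})\) as multiplication by \(g_j\) modulo \(\mpol_\vx\) in the Lagrange basis, whereas you work directly with values and conclude by uniqueness of the interpolant, which amounts to the same thing.
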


\begin{proof}
  Thanks to \cref{lem:inv_disp:vandermonde} of \cref{lem:inv_disp}, one has
  \(\mA \vu = \sum_{1\le j \le \dispRk} \D(\mG_{*j}) \V_n(\vx) \U_n(\mH_{*j}) \vu\).
  Observe that \(\D(\mG_{*j})\) is the matrix of the multiplication map
  \(\xring_{<m} \to \xring_{<m}, p \mapsto g_j p \rem \mpol_\vx\),
  using the Lagrange basis of \(\vx\) for \(\xring_{<m}\).
  Hence, the interpolant of \((\vx, \D(\mG_{*j}) \V_n(\vx) \U_n(\mH_{*j}) \vu)\) is
  \[
    g_j (h_j u \quo x^{n-1} \rem \mpol_\vx) \rem \mpol_\vx
    =
    g_j (h_j u \quo x^{n-1}) \rem \mpol_\vx
  \]
  according to \cref{lem:matrixU,lem:matrixV}, and summing these over \(j\)
  yields the interpolant of \((\vx, \mA \vu)\).
\end{proof}

\begin{theorem}[Polynomial form of Cauchy-like system]
  \label{thm:linsys_poly:cauchy}%
  Let \(\dispOp = \dispSte{\D(\vx), \D(\vy)}\) for some disjoint and
  repetition-free \(\vx \in \field^m\) and \(\vy \in \field^n\).
  Let \(\invmpol_\vy = \mpol_\vy^{-1} \rem \mpol_\vx \in \xring_{<m}\).
  Let $\mA \in \matspace{m}{n}$, \(\dispRk \le \min(m,n)\), and $(\mG, \mH) \in
  \matspace{m}{\dispRk} \times \matspace{n}{\dispRk}$ be such that
  \(\dispOp(\mA) = \mG \mH^{\trsp}\). For \(1 \le j \le \dispRk\), define \(\bar{g}_j
  \in \xring_{<m}\) and \(\bar{h}_j \in \xring_{<n}\) as the interpolants of
  \((\vx,\mG_{*j})\) and \((\vy,\mH_{*j})\), respectively,
  and let \(g_j = \invmpol_\vy \bar{g}_j \rem \mpol_\vx \in \xring_{<m}\)
  and \(h_j = \mpol'_\vy \bar{h}_j \rem \mpol_\vy \in \xring_{<n}\).
  For any \(\vu \in \matspace{m}{1}\) and \(\vv \in \matspace{n}{1}\),
  \[
    \vv = \mA \vu
    \quad\Leftrightarrow\quad
    v =
    \begin{bmatrix} g_1 & \cdots & g_\dispRk \end{bmatrix}
    \left(
      \begin{bmatrix}
        h_1 \\ \vdots \\ h_\dispRk
      \end{bmatrix}
      u
      \rem \mpol_\vy
    \right)
    \rem \mpol_\vx,
  \]
  where \(u \in \xring_{<n}\) and \(v \in \xring_{<m}\) are the interpolants of
  \((\vy,\vu)\) and \((\vx,\vv)\), respectively.
\end{theorem}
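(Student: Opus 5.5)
We follow the same pattern as the proofs of \cref{thm:linsys_poly:toeplitz,thm:linsys_poly:vandermonde}. The starting point is \cref{lem:inv_disp:cauchy} of \cref{lem:inv_disp}, which gives
\[
  \mA \vu = \sum_{1 \le j \le \dispRk} \D(\mG_{*j}) \C(\vx,\vy) \D(\mH_{*j}) \vu .
\]
We then translate each of the three factors into a polynomial map, working in the Lagrange basis of \(\vy\) on \(\xring_{<n}\) and the Lagrange basis of \(\vx\) on \(\xring_{<m}\). This is consistent with \(u\) and \(v\) being defined as interpolants. We read the operator in the statement as the Sylvester operator \(\dispSyl{\D(\vx),\D(\vy)}\), as in \cref{lem:inv_disp}.

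For the right factor, \(\D(\mH_{*j})\) is the matrix of \(p \mapsto \bar h_j p \rem \mpol_\vy\) in the Lagrange basis of \(\vy\). Indeed, evaluating at \(y_k\) multiplies the value by \(\bar h_j(y_k) = \mH_{kj}\). Likewise, \(\D(\mG_{*j})\) is the matrix of \(p \mapsto \bar g_j p \rem \mpol_\vx\) in the Lagrange basis of \(\vx\). For the middle factor, \cref{lem:matrixC} says \(\C(\vx,\vy)\) is the matrix of \(p \mapsto \invmpol_\vy(\mpol'_\vy p \rem \mpol_\vy) \rem \mpol_\vx\). Composing the three maps, the interpolant of \((\vx, \D(\mG_{*j})\C(\vx,\vy)\D(\mH_{*j})\vu)\) is
\[
  \bar g_j \bigl(\invmpol_\vy (\mpol'_\vy (\bar h_j u \rem \mpol_\vy) \rem \mpol_\vy) \rem \mpol_\vx\bigr) \rem \mpol_\vx .
\]

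The only real work is simplifying this nested expression, which we do with two congruences. Modulo \(\mpol_\vy\), we have \(\mpol'_\vy(\bar h_j u \rem \mpol_\vy) \equiv \mpol'_\vy \bar h_j u \equiv h_j u\), using \(h_j = \mpol'_\vy \bar h_j \rem \mpol_\vy\). Taking remainders, \(\mpol'_\vy(\bar h_j u \rem \mpol_\vy) \rem \mpol_\vy = h_j u \rem \mpol_\vy\). Then, modulo \(\mpol_\vx\), the inner \(\rem \mpol_\vx\) can be dropped and \(\bar g_j \invmpol_\vy\) can be replaced by \(g_j = \invmpol_\vy \bar g_j \rem \mpol_\vx\). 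This yields \(g_j(h_j u \rem \mpol_\vy) \rem \mpol_\vx\). The point requiring care is the order of reductions: the inner \(\rem \mpol_\vy\) must be kept, because it is not compatible with reduction modulo \(\mpol_\vx\). The simplification is legitimate because every manipulation happens entirely within one modulus, either \(\mpol_\vy\) or \(\mpol_\vx\).

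Finally, we sum over \(j\) and use linearity of the interpolation map (and of \(\rem \mpol_\vx\)). This shows that the interpolant of \((\vx, \mA\vu)\) equals \([g_1 \cdots g_\dispRk]\bigl([h_1 \cdots h_\dispRk]^\trsp u \rem \mpol_\vy\bigr) \rem \mpol_\vx\). Since \(\vx\) is repetition-free, interpolation at \(\vx\) is a bijection \(\field^m \to \xring_{<m}\). Hence \(\vv = \mA\vu\) holds exactly when \(v\) equals this polynomial, which is the claimed equivalence. We also note that the statement's \(\vu \in \matspace{m}{1}\), \(\vv \in \matspace{n}{1}\) should read \(\vu \in \matspace{n}{1}\), \(\vv \in \matspace{m}{1}\), as is consistent with \(u \in \xring_{<n}\) and \(v \in \xring_{<m}\).
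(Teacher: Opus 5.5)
Your proposal is correct and follows the paper's proof essentially step for step. Both expand $\mA\vu$ with the Cauchy-like inversion formula, read $\D(\mH_{*j})$, $\C(\vx,\vy)$, $\D(\mG_{*j})$ as polynomial maps in the Lagrange bases, simplify the nested expression to $g_j(h_j u \rem \mpol_\vy) \rem \mpol_\vx$, and sum over $j$. You also spell out the two congruences and the bijectivity of interpolation that the paper leaves implicit, and you correctly note that the statement should use the Sylvester operator and has the dimensions of $\vu$ and $\vv$ swapped.
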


\begin{proof}
  Thanks to \cref{lem:inv_disp:cauchy} of \cref{lem:inv_disp}, one has \(\mA
  \vu = \sum_{1\le j \le \dispRk} \D(\mG_{*j}) \C(\vx, \vy) \D(\mH_{*j}) \vu\).
  Similarly to the above remark about \(\D(\mG_{*j})\), the matrix
  \(\D(\mH_{*j})\) is that of the map \(\xring_{<n} \to
  \xring_{<n}, p \mapsto \bar{h}_j p \rem \mpol_\vy\), using the Lagrange basis of
  \(\vy\) for \(\xring_{<n}\). Hence, according to
  \cref{lem:matrixU,lem:matrixC}, the interpolant of \((\vx, \D(\mG_{*j})
  \C(\vx, \vy) \D(\mH_{*j}) \vu)\) is
  \[
    \bar{g}_j (\invmpol_\vy (\mpol'_\vy (\bar{h}_j u \rem \mpol_\vy) \rem \mpol_\vy) \rem \mpol_\vx) \rem \mpol_\vx
    =
    g_j (h_j u \rem \mpol_\vy) \rem \mpol_\vx.
  \]
  As before, summing these over \(j\) yields the interpolant of \((\vx, \mA \vu)\).
\end{proof}

\section{Computing nonhomogeneous M-Pad\'e approximants}
\label{sec:mpade}

In this section, we describe algorithms for two of the costliest three steps
used in our deterministic solver of \cref{sec:main}. These tools are
computing two variants of approximant bases: Hermite-Pad\'e approximants
\cite{Hermite1893,Pade1894,Mahler1968} and M-Pad\'e approximants
\cite{Lubbe1983,Beckermann1992,vanBarelBultheel1992}. Both variants are about
solving polynomial equations modulo some polynomial \(\polmod \in \xring\) of
degree \(d\). Classical Hermite-Pad\'e approximation is with \(\polmod =
x^d\), whereas for M-Pad\'e approximation the modulus splits as \(\polmod = \mpol_{\vx} =
\prod_{0 \le i < d} (x - x_i)\) with known and distinct \(x_i\)'s. While these
two situations are sufficient for our needs in \cref{sec:main}, the
algorithms developed in this section work efficiently for an arbitrary
\(\polmod \in \xring\) of degree \(d\);
we still call M-Pad\'e approximation this general situation. Thus, with our
terminology, Hermite-Pad\'e approximation is the specific case of M-Pad\'e
approximation with \(\polmod = x^d\).

There are two types of M-Pad\'e approximation equations, known to be related by
some duality, called vector approximants (or type I, or Latin) and simultaneous
approximants (or type II, or German) \cite{Mahler1968}. In
\cref{sec:mpade:vector,sec:mpade:simultaneous}, we design efficient algorithms
for computing both types when the equations are nonhomogeneous; see
\cref{pbm:vmpade,pbm:smpade} for a detailed description of the solved problems.
To achieve this, we use known results from \cite{Neiger2016fast,
JeannerodNeigerVillard2020, RosenkildeStorjohann2016, RosenkildeStorjohann2021}
about the homogeneous case. Before this, preliminary definitions and properties
on polynomial matrices are given in \cref{sec:mpade:polymat}.

\subsection{Polynomial matrices in reduced and Popov forms}
\label{sec:mpade:polymat}

We start with column degrees and different forms of column reduced matrices
\cite{wolovich1974,heymann1975,kailath1980linear}, and their extension with
degree shifts \cite{vanBarelBultheel1992,BeckermannLabahnVillard1999}.
Computing matrices in such forms is a core ingredient in most algorithms for
M-Pad\'e approximants (see \cref{sec:mpade:vector,sec:mpade:simultaneous}).

\begin{definition}[Shifted degrees; leading matrix; reversal]
  \label{def:degrees_lmat_reduced}%
  Let $\mP \in \xmatspace{m}{n}$ and let $\shift \in \Z^m$. The $\shift$-column
  degree of $\mP$, denoted by $\cdeg_{\shift}(\mP)$, is the tuple $\shiftt =
  (t_1, \ldots, t_n)$ where
  \[
    t_j = \cdeg_{\shift}(\mP_{*j}) = \max_{1 \le i \le m} (\deg(\mP_{ij}) + s_i) \in \Z \cup \{-\infty\}
  \]
  for $1 \le j \le n$. The \(\shift\)-column leading matrix of $\mP$, denoted
  by $\clm_{\shift}(\mP)$, is the matrix in $\matspace{m}{n}$ whose entry at
  position $(i,j)$ is the coefficient of degree $t_j - s_i$ of the polynomial
  $\mP_{ij}$ (this is \(0\) if $t_j - s_i < 0$). For \(\delta =
  (\delta_1,\ldots,\delta_n) \in \N^n\) such that \(\deg(\mP_{ij}) \le
  \delta_j\) holds for all \(i,j\), the \(\delta\)-column reversal of \(\mP\),
  denoted by \(\crev{\delta}{\mP}\), is the matrix in \(\xmatspace{m}{n}\)
  whose entry at position \((i,j)\) is \(\polrev{\delta_j}{\mP_{ij}}\). In
  particular, when \(\delta = (d,\ldots,d)\), we simply write
  \(\polrev{d}{\mP}\).
\end{definition}

For ease of notation, if $\shift = (0, \ldots, 0)$, we write $\cdeg(\mP)$ for
$\cdeg_{0}(\mP)$ and $\clm(\mP)$ for $\clm_{0}(\mP)$. When we write a
comparison of tuples of (shifted) degrees, this is to be understood as
entry-wise comparison. For example, the above constraint on \(\delta\), meaning
that all entries in the \(j\)th column of \(\mP\) have degree at most
\(\delta_j\), can be written as \(\cdeg(\mP) \le \delta\). For column reversals,
observe that \(\crev{\delta}{\mP} = \mP(x^{-1}) \diag{x^{\delta_1},\ldots,x^{\delta_n}}\).

These definitions are straightforwardly adapted to row-wise considerations.
For a shift $\shift$ now in $\Z^n$, this yields the
\(\shift\)-row degree $\rdeg_{\shift}(\mP)$ and the \(\shift\)-row leading
matrix $\clm_{\shift}(\mP) \in \matspace{m}{n}$; and for a row degree bound
\(\delta \in \N^m\) such that
\(\rdeg(\mP) \le \delta\), this yields the \(\delta\)-row reversal
\(\rrev{\delta}{\mP}\). In \cref{sec:main}, we will often use the fact
that \(\rdeg(\mP) < \delta\) is equivalent to \(\cdeg_{-\delta}(\mP) < 0\).

\begin{definition}[Reduced, weak Popov, Popov]
  \label{def:reduced_popov}%
  With the notation in \cref{def:degrees_lmat_reduced}, if \(\mP\) is square
  (\(m=n\)), it is said to be
  \begin{itemize}
    \item \emph{$\shift$-reduced} if the matrix $\clm_{\shift}(\mP)$ is
      invertible;
    \item \emph{\(\shift\)-weak Popov} if $\clm_{\shift}(\mP)$ is invertible
      and lower triangular;
    \item \emph{\(\shift\)-Popov} if $\clm_{\shift}(\mP)$ is invertible, unit
      lower triangular, and \(\rlm(\mP) = \clm(\mP^\trsp) = \ident_m\).
  \end{itemize}
  In particular, in any of these three cases, \(\mP\) is nonsingular, i.e.\
  \(\det(\mP) \neq 0\).
\end{definition}

When using the row-wise variants of these definitions, we indicate it
explicitly (\(\shift\)-row reduced, \(\shift\)-row weak Popov, and
\(\shift\)-row Popov). Note that by definition, a matrix in
\(\shift\)-(column) Popov form is also row reduced.
One important consequence of reducedness is the predictable degree property
\cite[Thm.\,6.3-13]{kailath1980linear}, presented here in its shifted extension
\cite[Lem.\,3.6]{BeckermannLabahnVillard1999}.

\begin{lemma}[Predictable degree property]
  \label{lem:predictable_degree}%
  Let $\mP \in \xmatspace{m}{n}$, let $\shift \in \Z^m$, and let \(\shiftt =
  \cdeg_{\shift}(\mP)\). Then, for any \(\vq \in \xmatspace{n}{1}\), one has
  \(\cdeg_{\shift}(\mP \vq) \le \cdeg_{\shiftt}(\vq)\), with equality if \(\mP\) is
  $\shift$-reduced
\end{lemma}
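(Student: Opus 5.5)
We prove the inequality by a direct entrywise degree count, and the equality case by looking at the coefficients of top shifted degree, which the \(\shift\)-column leading matrix records. Write \(\shiftt = (t_1,\ldots,t_n)\), so that \(\deg(\mP_{ij}) \le t_j - s_i\) for all \(i,j\), and let \(\tau = \cdeg_{\shiftt}(\vq) = \max_j (\deg(q_j) + t_j)\). If \(\vq = 0\) both sides are \(-\infty\), so assume \(\vq \neq 0\); then \(\tau\) is finite unless every column of \(\mP\) hit by a nonzero \(q_j\) is zero, a degenerate case in which \(\mP\vq = 0\) and the inequality is trivial.

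For the inequality, take any row index \(i\). The entry \((\mP\vq)_i = \sum_j \mP_{ij} q_j\) satisfies
\[
  \deg\bigl((\mP\vq)_i\bigr) + s_i \le \max_j \bigl(\deg(\mP_{ij}) + s_i + \deg(q_j)\bigr) \le \max_j \bigl(t_j + \deg(q_j)\bigr) = \tau .
\]
Taking the maximum over \(i\) gives \(\cdeg_{\shift}(\mP\vq) \le \tau\).

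For equality when \(\mP\) is \(\shift\)-reduced (so \(m = n\) and \(\clm_{\shift}(\mP)\) is invertible), extract for each \(i\) the coefficient of degree \(\tau - s_i\) in \((\mP\vq)_i\). In the product \(\mP_{ij} q_j\), this coefficient is a sum of terms (coefficient of degree \(a\) in \(\mP_{ij}\)) times (coefficient of degree \(b\) in \(q_j\)) with \(a + b = \tau - s_i\). Since \(a \le t_j - s_i\) and \(b \le \tau - t_j\), the only term that can survive is \(a = t_j - s_i\), \(b = \tau - t_j\). That term equals \(\clm_{\shift}(\mP)_{ij}\) times \(c_j\), where \(c_j\) is the coefficient of degree \(\tau - t_j\) of \(q_j\) (zero if \(\deg(q_j) < \tau - t_j\)). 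Hence the vector of these top coefficients is exactly \(\clm_{\shift}(\mP)\,\vc\) with \(\vc = (c_1,\ldots,c_n)^\trsp \in \field^n\). This step needs care with the conventions: zero columns of \(\mP\), where \(t_j = -\infty\), cannot occur since \(\clm_{\shift}(\mP)\) is invertible, and negative degrees are to be read as zero coefficients.

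By the definition of \(\tau\), some \(j\) attains \(\deg(q_j) = \tau - t_j\), so \(\vc \neq 0\). Invertibility of \(\clm_{\shift}(\mP)\) then gives \(\clm_{\shift}(\mP)\vc \neq 0\). So some row \(i\) has a nonzero coefficient of degree \(\tau - s_i\), i.e. \(\deg((\mP\vq)_i) + s_i = \tau\), and together with the inequality this yields equality. The only delicate part is this leading-coefficient bookkeeping.
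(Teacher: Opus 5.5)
Your proof is correct. The paper gives no proof of its own and simply cites Kailath and Beckermann--Labahn--Villard; your identification of the vector of top shifted-degree coefficients of $\mP\vq$ with $\clm_{\shift}(\mP)\vc$ is exactly the standard argument in those references. One remark: you only use that $\clm_{\shift}(\mP)$ is injective, so your argument also covers, verbatim, the non-square column-reduced case (leading matrix of full column rank) that the paper actually invokes in the proof of \cref{lem:smpade_generates_all}.
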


\subsection{Nonhomogeneous vector M-Pad\'e approximants}
\label{sec:mpade:vector}

We consider a nonhomogeneous variant of the classical
notion \cite{Mahler1968,Lubbe1983,Beckermann1992,vanBarelBultheel1992} of
vector M-Pad\'e approximants; this is similar to Hermite-Pad\'e approximants of
type 1 \cite{Hermite1893,Pade1894}, but working with an arbitrary modulus \(M
\in \xring_d\) instead of specifically \(M = x^d\).

\begin{problem}{vector M-Pad\'e approximation, nonhomogeneous}{vmpade}
  \emph{Input:}
  positive integers \(d\) and \(\dispRk\),
  a modulus polynomial \(\polmod \in \xring\) of degree \(d\),
  a polynomial row vector \(\mF = [f_1 \;\; \cdots \;\; f_{\dispRk}] \in \xmatspace{1}{\dispRk}_{<d}\),
  a shift \(\shift \in \Z^\dispRk\),
  a polynomial \(v \in \xring_{<d}\).

  \emph{Output:}
  a tuple \((\mP,\mu,\vsol)\), consisting of \par

  \algoitem an \(\shift\)-weak Popov basis \(\mP \in \xmatspace{\dispRk}{\dispRk}\) of
    \(\{\vp \in \xmatspace{\dispRk}{1} \mid \mF \vp = 0 \bmod \polmod\}\); \par

  \algoitem the monic generator \(\mu \in \xring_{<d}\) of the ideal \(\{q \in
  \xring \mid \exists \vp \in \xmatspace{1}{\dispRk}, \mF \vp = v q \bmod \polmod\}\); \par

  \algoitem a polynomial vector \(\vsol \in \xmatspace{\dispRk}{1}_{<d}\) such that \(\mF \vsol = v \mu \bmod \polmod\).
\end{problem}

Here, bases of \(\xring\)-submodules of \(\xmatspace{\dispRk}{1}\) such as
\(\{\vp \in \xmatspace{\dispRk}{1} \mid \mF \vp = 0 \bmod \polmod\}\) are
represented as polynomial matrices whose columns form a basis of this module.
These bases are necessarily square and nonsingular, as this module is free of
rank \(\dispRk\).

Fast algorithms for vector M-Pad\'e approximation have been designed in the
homogeneous case \(v = 0\), and assuming \(\polmod = \mpol_{\vx}\) for some
\(\vx \in \field^d\) \cite{BeckermannLabahn1994, GiorgiJeannerodVillard2003,
JeannerodNeigerSchostVillard2017}. Here, repetitions are allowed, hence this
includes \(\polmod = x^d\) when \(\vx = (0,\ldots,0)\). Up to logarithmic
factors, the algorithms in these references use \(\softO{\dispRk^{\expmm-1}
d}\) operations in \(\field\). However, the best known cost bound, which
involves the smallest amount of logarithmic factors, has been obtained with
tools specifically tailored to the case \(\polmod = x^d\)
\cite{Storjohann2006,ZhouLabahn2012,JeannerodNeigerVillard2020}. For the
homogeneous case, and when the shift entries are not too unbalanced (the
precise condition appears in \cref{thm:vmpade}), the best known bound is
\(\bigO{\dispRk^{\expmm} \M(\lceil d/\dispRk \rceil) \log(d)}\)
\cite[Thm.\,1.3]{JeannerodNeigerVillard2020}. For the general case with an
arbitrary \(\polmod\), the algorithms in the above-listed references are not
directly applicable, because they make use of the fact that \(\polmod\) splits
over \(\field\) into linear factors and exploit the knowledge of \(\vx\). This
general case was handled in \cite{Neiger2016fast} by elaborating upon the above
results, inducing a slightly larger logarithmic overhead.

Here, we show that one can support arbitrary moduli \(\polmod\), nonhomogeneous
equations \(v \neq 0\), and keep the logarithmic overhead low, achieving the
best known cost \(\bigO{\dispRk^{\expmm} \M(\lceil d/\dispRk \rceil) \log(d)}\)
with the only restriction that the shift satisfies the above-mentioned
balancedness. This is obtained through a reduction from the general case to the
Hermite-Pad\'e case \(\polmod=x^d\) (see \cite[Lem.\,2.5]{Neiger2016fast}), and
through the folklore idea that one can handle nonhomogeneous equations through
a suitable augmented homogeneous equation and shift (see \cref{thm:vmpade}).

The next lemma shows that the output of \cref{pbm:vmpade} is enough to
represent all solutions to the nonhomogeneous equation.

\begin{lemma}
  \label{lem:vmpade_generates_all}%
  Let \((d,\dispRk,\polmod,\mF,\shift,v)\) be some input to \cref{pbm:vmpade}.
  Any corresponding output \((\mP,\mu,\vsol)\) generates all solutions
  \((\vp,q)\) to \(\mF \vp = v q \bmod \polmod\) in the sense that
  \[
    \left\{
      \begin{bmatrix}
        \vp \\
        q
      \end{bmatrix}
      \in \xmatspace{(\dispRk+1)}{1}
      \;\Bigm\vert\;
      \mF \vp = v q \bmod \polmod
    \right\}
    =
    \left\{
      \begin{bmatrix}
        \mP \lambda + \vsol \nu \\
        \mu \nu
      \end{bmatrix}
      \;\Bigm\vert\;
      \lambda \in \xmatspace{\dispRk}{1},
      \nu \in \xring
    \right\}.
  \]
  In particular, \(\mF \vp = v \bmod \polmod\) has a solution \(\vp\) if and
  only if \(\mu = 1\), and in that the case the set of all such solutions is
  \(\{\mP \lambda + \vsol \mid \lambda \in \xmatspace{\dispRk}{1}\}\).
\end{lemma}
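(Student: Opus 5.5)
We prove the set equality by double inclusion, writing \(\mathcal{S}\) for the left-hand set and \(\mathcal{T}\) for the right-hand set.

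The inclusion \(\mathcal{T} \subseteq \mathcal{S}\) is a direct computation. For \(\lambda \in \xmatspace{\dispRk}{1}\) and \(\nu \in \xring\), the columns of \(\mP\) lie in the module \(\{\vp \mid \mF \vp = 0 \bmod \polmod\}\), so \(\mF \mP \lambda = 0 \bmod \polmod\). The output specification gives \(\mF \vsol = v\mu \bmod \polmod\). Combining these, \(\mF(\mP\lambda + \vsol\nu) = v(\mu\nu) \bmod \polmod\).

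For \(\mathcal{S} \subseteq \mathcal{T}\), take \((\vp, q) \in \mathcal{S}\).
\begin{enumerate}
  \item By definition of \(\mu\) as the monic generator of the ideal \(I = \{q \in \xring \mid \exists \vp,\ \mF\vp = vq \bmod \polmod\}\), we have \(q \in I\), so \(q = \mu\nu\) for some \(\nu \in \xring\).
  \item Set \(\vp' = \vp - \vsol\nu\). Then \(\mF \vp' = vq - v\mu\nu = 0 \bmod \polmod\), so \(\vp'\) lies in the homogeneous solution module.
  \item Since the columns of \(\mP\) form a basis of that module, \(\vp' = \mP\lambda\) for some \(\lambda \in \xmatspace{\dispRk}{1}\). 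Hence \((\vp, q) = (\mP\lambda + \vsol\nu,\, \mu\nu) \in \mathcal{T}\).
\end{enumerate}
The only point needing care is that \(I\) really is an ideal, so that it has a monic generator and step 1 applies. It is closed under addition by adding the corresponding \(\vp\)'s, and under multiplication by any \(r \in \xring\) by scaling \(\vp\) by \(r\). It is nonzero because \(\polmod \in I\), witnessed by \(\vp = 0\). So \(\mu\) exists and has degree at most \(d\); in fact \(\deg \mu < d\) is part of the output specification, which we do not need here.

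For the ``in particular'' claim:
\begin{itemize}
  \item If \(\mF\vp = v \bmod \polmod\) has a solution, then \(1 \in I\), so \(\mu = 1\) by monicity. Conversely, if \(\mu = 1\), then \(\vsol\) itself is a solution.
  \item When \(\mu = 1\), the solutions \(\vp\) of \(\mF\vp = v \bmod \polmod\) are exactly the pairs in \(\mathcal{S}\) with \(q = 1\). Under the parametrization by \(\mathcal{T}\), this means \(\mu\nu = \nu = 1\). So these solutions are exactly \(\mP\lambda + \vsol\) for \(\lambda \in \xmatspace{\dispRk}{1}\).
\end{itemize}

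No step is a real obstacle; the main subtlety is justifying that \(I\) is a nonzero ideal so that its monic generator \(\mu\) exists, as checked above.
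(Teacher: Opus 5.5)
Your proof is correct and follows the same route as the paper's: double inclusion, factoring \(q = \mu\nu\) via the ideal generator, then applying the basis property of \(\mP\) to \(\vp - \vsol\nu\). Your extra checks that the set of admissible \(q\) is a nonzero ideal and your explicit derivation of the ``in particular'' claim fill in details the paper leaves implicit.
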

\begin{proof}
  Let \(\vp \in \xmatspace{\dispRk}{1}\) and \(q \in \xring\) such that \(\mF
  \vp = v q \bmod \polmod\). By definition of \(\mu\), this implies \(q = \mu
  \nu\) for some \(\nu \in \xring\). We obtain \(\mF \vp = v \mu \nu = \mF \vsol \nu
  \bmod \polmod\), hence \(\mF (\vp - \vsol \nu) = 0 \bmod \polmod\). It follows
  from the basis property of \(\mP\) that \(\vp - \vsol \nu = \mP \lambda\) for
  some \(\lambda \in \xmatspace{\dispRk}{1}\). This shows the inclusion
  \(\subseteq\). For the other inclusion, given some  \(\lambda \in
  \xmatspace{\dispRk}{1}\) and \(\nu \in \xring\), it is easily verified that
  the properties of \((\mP,\mu,\vsol)\) imply \(\mF (\mP \lambda + \vsol \nu) =
  \mF \vsol \nu = v \mu \nu \bmod \polmod\).
\end{proof}

Now, we show that nonhomogeneous equations can be handled through an augmented
equation that is homogeneous, along with a suitably chosen shift.

\begin{theorem}
  \label{thm:vmpade}%
  Let \((d,\dispRk,\polmod,\mF,\shift,v)\) be some input to \cref{pbm:vmpade}.
  Define the \(\xring\)-module \(\mathcal{Q} = \{\vq \in
  \xmatspace{(\dispRk+1)}{1} \mid [\mF \;\; {-v}] \vq = 0 \bmod \polmod\}\).
  Consider the shift \(\bar{\shift} = (\shift,\max(\shift)+d) \in
  \Z^{\dispRk+1}\), so that any \(\bar{\shift}\)-weak Popov basis \(\mQ \in
  \xmatspace{(\dispRk+1)}{(\dispRk+1)}\) of \(\mathcal{Q}\) satisfies
  \(\mQ_{\dispRk+1,j} = 0\) for \(1 \le j \le \dispRk\).

  Let \(\mP \in \xmatspace{\dispRk}{\dispRk}\), let \(\mu \in \xring_{<d}\) be
  monic, and let \(\vp \in \xmatspace{\dispRk}{1}_{\le \deg(\mu) + d}\). The matrix
  \[
    \mQ =
    \begin{bmatrix}
      \mP & \vp \\
      0 & \mu
    \end{bmatrix}
    \in \xmatspace{(\dispRk+1)}{(\dispRk+1)}
  \]
  is an \(\bar{\shift}\)-weak Popov basis of \(\mathcal{Q}\) if and only if
  \((\mP,\mu,\vp \rem \polmod)\) is a solution to \cref{pbm:vmpade}.
  In that case, if \(\mQ\) is in \(\bar{\shift}\)-Popov form, then \(\mP\) is
  in \(\shift\)-Popov form and \(\vp\) is the unique solution to \(\mF \vp =
  v \mu \bmod \polmod\) such that \(\rdeg(\vp) < \rdeg(\mP) \le d\)
  holds componentwise.
\end{theorem}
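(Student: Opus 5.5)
The plan has two parts. First I relate bases of \(\mathcal{Q}\) to the data of \cref{pbm:vmpade} by looking at the last coordinate. Then I check the degree and leading-matrix conditions through \(\clm_{\bar{\shift}}\).

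\emph{Module structure.} Let \(\pi:\mathcal{Q}\to\xring\) be the projection onto the last entry. Then \(\ker\pi = \{[\vp;0] \mid \mF\vp = 0 \bmod \polmod\}\). The image \(\pi(\mathcal{Q})\) is exactly the ideal whose monic generator is \(\mu\) in \cref{pbm:vmpade}; it contains \(\polmod\), since \([0;\polmod]\in\mathcal{Q}\). Hence \(\mu\) divides \(\polmod\) and \(\deg\mu\le d\).

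A block upper triangular matrix \(\mQ=\begin{bmatrix}\mP&\vp\\0&\mu\end{bmatrix}\) is a basis of \(\mathcal{Q}\) if and only if three things hold:
\begin{itemize}
  \item the columns of \(\mP\) form a basis of \(\ker\pi\);
  \item \(\mu\) generates \(\pi(\mathcal{Q})\);
  \item the last column lies in \(\mathcal{Q}\), that is, \(\mF\vp = v\mu \bmod \polmod\).
\end{itemize}
The proof is the usual one: subtract a multiple of the last column to kill the last entry, then expand in \(\mP\). This is the same argument as in \cref{lem:vmpade_generates_all}.

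Replacing \(\vp\) by \(\vp \rem \polmod\) changes the last column by a vector in \(\ker\pi\), because \(\polmod\) times a unit vector lies in \(\ker\pi\). So this replacement affects neither the basis property nor the identity \(\mF\vp = v\mu \bmod \polmod\).

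\emph{Shape of weak Popov bases.} Next I prove the preliminary claim that every \(\bar{\shift}\)-weak Popov basis has \(\mQ_{\dispRk+1,j}=0\) for \(j\le\dispRk\). I use the minimality of pivots in weak Popov bases: the \(j\)th column has minimal \(\bar{\shift}\)-degree among module elements whose \(\bar{\shift}\)-pivot index is \(j\). The vector \(\polmod\) times the \(j\)th unit vector lies in \(\mathcal{Q}\), has pivot \(j\), and has \(\bar{\shift}\)-degree \(s_j+d\). This gives \(t_j\le s_j+d\le\max(\shift)+d\).

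If \(\mQ_{\dispRk+1,j}\neq0\), its contribution to \(t_j\) is at least \(\max(\shift)+d\). So only the tie case remains: \(\mQ_{\dispRk+1,j}\) is a nonzero constant and \(s_j=\max(\shift)\). I expect this tie to be the main obstacle. I would try to exclude it as follows. A constant last entry forces \(\mu=1\), and a column with pivot \(\dispRk+1\) must also exist. Using the lower triangularity of \(\clm_{\bar{\shift}}(\mQ)\), I would show that the entry of column \(j\) in row \(\dispRk+1\) cannot reach the \(\bar{\shift}\)-degree of the diagonal entry. If that fails, I would subtract a constant multiple of the last column and invoke uniqueness of pivot degrees.

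\emph{Reduced and weak Popov properties.} Since \(\mQ\) is block triangular, the \(\bar{\shift}\)-degree of column \(j\le\dispRk\) equals its \(\shift\)-degree, and \(\clm_{\bar{\shift}}(\mQ)\) has \(\clm_{\shift}(\mP)\) as its top-left block with zeros below. For the last column, the condition \(\deg(\vp_i)\le\deg(\mu)+d\) gives \(\deg(\vp_i)+s_i\le\deg\mu+\max(\shift)+d\). The pivot of the last column therefore sits at row \(\dispRk+1\), once \(\vp\) is reduced modulo \(\polmod\) so that the inequality becomes strict. Consequently \(\clm_{\bar{\shift}}(\mQ)\) is invertible and lower triangular if and only if \(\clm_{\shift}(\mP)\) is. Combining this with the module equivalence of the first part gives the "if and only if".

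\emph{Popov case.} If \(\mQ\) is \(\bar{\shift}\)-Popov, the row-wise normalization \(\rlm(\mQ)=\ident\) restricted to the first block gives \(\rlm(\mP)=\ident\), and the unit lower triangular condition restricts in the same way. So \(\mP\) is \(\shift\)-Popov. The Popov condition on rows also forces each \(\vp_i\) to have degree less than the pivot degree \(\rdeg(\mP)_i\). Since \(\polmod\) times the \(i\)th unit vector lies in the module, the minimality of pivots gives \(\rdeg(\mP)\le d\).

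For uniqueness, take two solutions \(\vp,\vp'\) with \(\rdeg<\rdeg(\mP)\). Their difference \(\vp-\vp'\) lies in the column span of \(\mP\), so it equals \(\mP\lambda\) for some \(\lambda\). Because \(\mP\) is row reduced with row leading matrix the identity, a nonzero \(\lambda\) would give an entry of degree at least the corresponding pivot degree, contradicting the bound. Hence \(\lambda=0\) and \(\vp=\vp'\).
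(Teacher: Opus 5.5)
\textbf{Gap: the tie case in the preliminary claim cannot be closed under the pivot convention you use.}

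You let entries below the diagonal tie in $\bar{\shift}$-degree with the diagonal entry. This is the literal reading of ``lower triangular $\clm_{\bar{\shift}}$''. Under that reading the claim is false. Take $\dispRk=d=1$, $\polmod=x$, $\mF=[1]$, $v=1$, $\shift=(0)$. Then $\bar{\shift}=(0,1)$ and $\mathcal{Q}=\{\vq : q_1(0)=q_2(0)\}$. Consider
\[
\left[\begin{smallmatrix} x+1 & 1\\ 1 & 1\end{smallmatrix}\right].
\]
Its columns lie in $\mathcal{Q}$, and its determinant $x$ has degree $\dim_{\field}(\xmatspace{2}{1}/\mathcal{Q})=1$, so it is a basis. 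Its $\bar{\shift}$-leading matrix is $\left[\begin{smallmatrix} 1 & 0\\ 1 & 1\end{smallmatrix}\right]$, which is invertible and lower triangular. Yet its $(2,1)$ entry is $1$.

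Neither of your proposed fixes can work. Lower triangularity constrains only entries above the diagonal. Subtracting a multiple of the last column produces a different basis, whereas the claim is about every weak Popov basis.

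The same issue breaks your treatment of the last column. The statement concerns $\mQ$ built from $\vp$ itself, with $\deg(\vp_i)$ allowed to reach $\deg(\mu)+d$. So you may not replace $\vp$ by $\vp\rem\polmod$ to make the inequality strict. Under your reading, the ``if'' direction actually fails: adding $x^{\deg(\mu)}\polmod$ to an entry $\vp_i$ with $s_i=\max(\shift)$ leaves $\vp\rem\polmod$ unchanged. But it creates a nonzero entry above the diagonal in the last column of $\clm_{\bar{\shift}}(\mQ)$.

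\textbf{The missing idea is the pivot convention the statement needs, which is the one the paper's proof uses.} The $\bar{\shift}$-pivot of a column is its bottommost entry of maximal $\bar{\shift}$-degree. So in an $\bar{\shift}$-weak Popov matrix, every entry strictly below the diagonal has strictly smaller $\bar{\shift}$-degree than the diagonal entry, and the column leading matrix has zeros below the diagonal.

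With this convention your own bound $t_j\le s_j+d$ closes the claim at once. (The paper instead uses that the diagonal degrees sum to at most $d$.) You get $\deg(\mQ_{\dispRk+1,j})+\max(\shift)+d<t_j\le\max(\shift)+d$, hence $\mQ_{\dispRk+1,j}=0$, with no tie case.

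For the last column, the non-strict inequality $\deg(\vp_i)+s_i\le\deg(\mu)+\max(\shift)+d$ is exactly what places its pivot at row $\dispRk+1$. This is why the hypothesis reads $\deg(\vp)\le\deg(\mu)+d$, and why no reduction of $\vp$ is needed.

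With this correction, the rest of your plan matches the paper's proof and goes through: the characterization of block-triangular bases via $\pi$, the restriction of the leading matrix to $\mP$, the Popov normalization, and the uniqueness argument via $\rlm(\mP)=\ident$.
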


\begin{proof}
  First, we focus on the claim about \(\mQ_{\dispRk+1,j}\) being zero when
  \(\mQ\) is any \(\bar{\shift}\)-weak Popov basis of \(\mathcal{Q}\). Write
  such a basis with blocks as
  \[
    \mQ =
    \begin{bmatrix}
      \mP & \vp \\
      \vz & \mu
    \end{bmatrix}
    \text{ where }
    \mP \in \xmatspace{\dispRk}{\dispRk}, \vp \in \xmatspace{\dispRk}{1},
    \vz = [\vz_1, \ldots, \vz_\dispRk] \in \xmatspace{1}{\dispRk},
    \text{ and } \mu \in \xring.
  \]
  Let \(\delta = (\delta_1,\ldots,\delta_{\dispRk+1}) \in \Z_{\ge
  0}^{\dispRk+1}\) be the list of diagonal degree \(\delta_j =
  \deg(\mQ_{j,j})\). It is known that
  \(\delta_1+\cdots+\delta_{\dispRk+1} \le \deg(\polmod) = d\) (see for example
  \cite[Cor.\,2.4]{neiger:hal-01457979}), hence in
  particular, \(\delta_j \le d\) for \(1 \le j \le \dispRk\). On the other
  hand, by definition of \(\bar{\shift}\)-weak Popov forms, we get
  \(\deg(\vz_j) + d + \max(\shift) < \delta_j + s_j\). This implies
  \(\deg(\vz_j) < \delta_j - d \le 0\), and therefore \(\vz_j = 0\), proving the
  first claim.

  Now we focus on the main claim. Define \(\mQ\) as in the statement, from
  some \((\mP,\mu,\vp)\).

  Assume \(\mQ\) is an \(\bar{\shift}\)-weak Popov basis of \(\mathcal{Q}\).
  First, it is obvious that \(\vsol := \vp \rem \polmod\) has degree less than
  \(d\) and satisfies \(\mF \vsol = v \mu \bmod \polmod\). Second, if
  \(\hat{\mu}\) is the generator of the ideal in the output of
  \cref{pbm:vmpade} and \(\hat{\vp}\) is a corresponding vector such that \(\mF
  \hat{\vp} = v \hat{\mu} \bmod \polmod\), then the vector \([\begin{smallmatrix}
  \hat{\vp} \\ \hat{\mu} \end{smallmatrix}]\) is in \(\mathcal{Q}\) and
  thus generated by the columns of \(\mQ\), which implies that \(\hat{\mu}\) is
  a multiple of \(\mu\), and therefore \(\mu = \hat{\mu}\).
  Third, \(\mP\) is in \(\shift\)-weak Popov form, since it is the
  \(\dispRk\times\dispRk\) leading principal submatrix of \(\mQ\) in
  \(\bar{\shift}\)-weak Popov form and the first \(\dispRk\) components of
  \(\bar{\shift}\) are given by \(\shift\). The block-triangular form of
  \(\mQ\) implies that \(\mF \mP = 0 \bmod \polmod\); it remains to
  prove that any \(\vq \in \xmatspace{\dispRk}{1}\) such that \(\mF \vq = 0
  \bmod \polmod\) is generated by the columns of \(\mP\). Indeed, for any such
  vector \(\vq\), one has that \([\begin{smallmatrix} \vq \\ 0
    \end{smallmatrix}]\) is in \(\mathcal{Q}\), so that \([\begin{smallmatrix} \vq \\ 0
  \end{smallmatrix}] = \mQ \lambda\) for some \(\lambda = [\lambda_i]_i \in
  \xmatspace{(\dispRk+1)}{1}\); equating the bottom entries of the vectors in the
  latter identity yields \(\lambda_{\dispRk+1} = 0\), hence
  \(\vq = \mP \lambda_{1:\dispRk}\) with \(\lambda_{1:\dispRk} = [\lambda_i]_{1 \le i
  \le \dispRk}\).

  We have proved that \((\mP,\mu,\vsol)\) is a solution to \cref{pbm:vmpade}.
  Now, conversely, we assume the latter, and want to prove that \(\mQ\) is an
  \(\bar{\shift}\)-weak Popov basis of \(\mathcal{Q}\). By construction of
  \(\mQ\), the \(\bar{\shift}\)-pivots of its first \(\dispRk\) columns are on
  its diagonal. This is also the case for its last column, since for \(1 \le i
  \le \dispRk\) we have \(\deg(\vp_i) \le \deg(\mu) + d\), which implies
  \(\deg(\vp_i) + \bar{\shift}_i \le \deg(\mu) + \max(\shift) + d = \deg(\mu) +
  \bar{\shift}_{\dispRk+1}\). Thus \(\mQ\) is in \(\bar{\shift}\)-weak Popov
  form. It is easily verified that all columns of \(\mQ\) are in
  \(\mathcal{Q}\), so it remains to prove that any \(\vq \in \mathcal{Q}\) is
  generated by these columns. Write \(\vq = [\begin{smallmatrix} \hat{\vp} \\
  \hat{\mu} \end{smallmatrix}]\); since \(\mF \hat{\vp} = v \hat{\mu} \bmod
  \polmod\), the definition of \(\mu\) implies that \(\hat{\mu} =
  \mu \nu\) for some \(\nu \in \xring\).  It follows that \(\mF \hat{\vp} = v \mu \nu =
  \mF \vp \nu \bmod \polmod\), hence \(\mF (\hat{\vp} - \vp \nu) = 0 \bmod
  \polmod\). Therefore, from the basis property of \(\mP\), there exists
  \(\lambda \in \xmatspace{\dispRk}{1}\) such that \(\hat{\vp} - \vp \nu = \mP
  \lambda\). We conclude that \(\vq = [\begin{smallmatrix} \hat{\vp} \\
  \hat{\mu} \end{smallmatrix}] = \mQ [\begin{smallmatrix} \lambda \\ \nu \end{smallmatrix}]\).

  The last claims, for the situation where \(\mQ\) is in \(\bar{\shift}\)-Popov
  form, follow from the definitions.
\end{proof}

We are now ready to describe \cref{algo:VectorMPade}, and its cost bound
in the next proposition.

\begin{proposition}
  \label{prop:vmpade}%
  Given some input \((d,\dispRk,\polmod,\mF,\shift,v)\) to \cref{pbm:vmpade},
  \algoName{algo:VectorMPade} outputs the unique solution \((\mP,\mu,\vsol)\)
  with \(\mP\) in \(\shift\)-Popov form and \(\rdeg(\vsol) < \rdeg(\mP)\).
  The sum of the entries of \(\rdeg(\mP)\) is at most \(d\), and \(\det(\mP)\)
  divides \(\polmod\). Assuming that \(\dispRk\) and \(\sum_{1\le i \le
  \dispRk} (s_i - \min(\shift))\) are in \(\bigO{d}\), it uses
  \(
    \bigO{\dispRk^\expmm \M(d/\dispRk) \log(d)}
  \)
  operations in \(\field\).
\end{proposition}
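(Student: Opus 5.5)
The algorithm is not shown, but the natural design is clear from \cref{thm:vmpade}. It forms the augmented row vector \([\mF \;\; {-v}]\) and the shift \(\bar{\shift} = (\shift,\max(\shift)+d)\). It then computes the \(\bar{\shift}\)-Popov basis \(\mQ\) of \(\mathcal{Q}\) with a homogeneous M-Pad\'e approximation algorithm for arbitrary modulus. Finally it reads off \(\mP\), \(\vp\), \(\mu\) from the block-triangular form and returns \((\mP,\mu,\vp \rem \polmod)\). The plan therefore has three parts: correctness and uniqueness, degree/determinant properties, and cost.

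\emph{Correctness and uniqueness.} By the first claim of \cref{thm:vmpade}, the computed \(\bar{\shift}\)-Popov basis \(\mQ\) has zero bottom-left block, so it has the shape required there. The diagonal degrees of \(\mQ\) sum to at most \(d\). In particular \(\deg(\mu)\le d\), and \(\deg(\mu)<d\) once we exclude the degenerate case. The off-diagonal entries of the last column have degree less than the diagonal degrees of the corresponding rows, which are at most \(d\). Hence \(\vp \in \xmatspace{\dispRk}{1}_{\le \deg(\mu)+d}\), and the equivalence in \cref{thm:vmpade} applies. The last claim of that theorem gives that \(\mP\) is \(\shift\)-Popov and \(\rdeg(\vp)<\rdeg(\mP)\le d\). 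Then \(\vp \rem \polmod = \vp\), so the output satisfies the stated normalization.

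For uniqueness, \(\shift\)-Popov bases of a given module are unique, and \(\mu\) is unique as a monic generator. If \(\vsol,\vsol'\) both satisfy the normalization, their difference lies in the module generated by \(\mP\) and has row degree less than \(\rdeg(\mP)\). Since \(\mP\) is also row reduced with \(\rlm(\mP)=\ident\), the row-wise predictable degree property forces this difference to be zero. (Equivalently, \(\mQ\) in \(\bar{\shift}\)-Popov form is unique.)

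\emph{Degrees and determinant.} Because \(\mP\) is in Popov form, \(\rdeg(\mP)\) is its diagonal degree tuple, and \(\deg\det(\mP)\) equals their sum. The module \(\{\vp \mid \mF\vp = 0 \bmod \polmod\}\) contains \(\polmod \xmatspace{\dispRk}{1}\). Hence \(\polmod \ident = \mP \mX\) for some polynomial matrix \(\mX\), which gives \(\det(\mP) \mid \polmod^\dispRk\). To get the sharper statement, I would argue via the quotient \(\xmatspace{\dispRk}{1}/\mathcal{P}\). It embeds \(\field\)-linearly into \(\xring/(\polmod)\) through \(\vp \mapsto \mF\vp\), so its dimension, which equals \(\deg\det\mP\), is at most \(d\); this also follows from the cited corollary. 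As a \(\xring\)-module, this quotient is isomorphic to a submodule of the cyclic module \(\xring/(\polmod)\). It is therefore cyclic, isomorphic to \(\xring/(g)\) with \(g\mid \polmod\). Its Smith form then has a single nontrivial invariant factor, so \(\det(\mP) = g\) up to a unit, and \(\det(\mP)\) divides \(\polmod\).

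\emph{Cost.} This is the main obstacle. I would reduce the general modulus to the Hermite-Pad\'e case \(x^{d'}\) using \cite[Lem.\,2.5]{Neiger2016fast}, then apply \cite[Thm.\,1.3]{JeannerodNeigerVillard2020}. The difficulty is that the shift \(\bar{\shift}\) has one entry larger by about \(d\), which looks unbalanced. The first thing I would try is to check that \(\sum(\bar s_i-\min\bar{\shift}) = \sum(s_i-\min\shift) + (\max\shift-\min\shift)+d \in \bigO{d}\) under the hypotheses. Dimension \(\dispRk+1\) keeps everything within constant factors. The reduction needs some care to keep the log count: the reversal trick turns the modulus equation into an order-\(\bigO{d}\) approximation involving \(\polrev{d}{\polmod}^{-1}\) expanded to precision \(\bigO{d}\). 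Computing that expansion and the row \([\mF \;\; {-v}]\) times it costs \(\bigO{\dispRk\M(d)}\), which is within \(\bigO{\dispRk^\expmm\M(d/\dispRk)\log d}\) by the assumption \(\M(kd)\in\bigO{k^{\expmm-1}\M(d)}\). Converting the approximant basis to the \(\bar{\shift}\)-Popov form of \(\mathcal{Q}\) uses known normalization within the same bound.
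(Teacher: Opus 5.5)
Your proof is correct, and its skeleton is the paper's. The nonhomogeneous case goes through \cref{thm:vmpade}; you additionally verify the degree hypothesis on the last column and spell out uniqueness, which the paper takes from the definitions. The cost rests on \cite[Lem.\,2.5]{Neiger2016fast} and \cite{JeannerodNeigerVillard2020}, with exactly the balancedness check on \(\bar{\shift}\) that you perform.

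Where you genuinely differ is the degree and determinant claims. The paper cites \cite[Cor.\,2.4]{neiger:hal-01457979} for the sum of \(\rdeg(\mP)\). It gets \(\det(\mP) \mid \polmod\) from the irreducible left and right matrix fraction descriptions \(\polmod^{-1}\mF = (\polmod/\gamma)^{-1}(\mF/\gamma) = \vq\mP^{-1}\), with \(\gamma = \gcd(\polmod, f_1, \ldots, f_\dispRk)\), whose denominators have equal determinants \cite{kailath1980linear}. Your \(\xring\)-linear embedding of \(\xmatspace{\dispRk}{1}/\mathcal{P}\) into the cyclic module \(\xring/(\polmod)\), via \(\vp \mapsto \mF\vp \bmod \polmod\), gives both facts at once from first principles. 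Since the image is \((\gamma)/(\polmod)\), it even recovers the paper's sharper \(\det(\mP) = \polmod/\gamma\) up to a unit.

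One correction on the cost: \cite[Lem.\,2.5]{Neiger2016fast} is not a reversal trick. It states that \(\mP\) is the leading principal \(\dispRk\times\dispRk\) submatrix of the \((\shift,\min(\shift))\)-Popov approximant basis of \([\mF \;\; \polmod]\) at order \(\tau = \max(\shift)-\min(\shift)+2d\). So no expansion of \(\polrev{d}{\polmod}^{-1}\) is involved, and the sought basis is read off as a submatrix. The paper computes that approximant basis in two steps. First it runs the shift-around-min algorithm \cite[Prop.\,7.3]{JeannerodNeigerVillard2020}, whose cost is governed by \(\xi = \tau + \sum_i (s_i - \min(\shift)) \in \bigO{d}\). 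Then it does the known-degree Popov computation \cite[Prop.\,5.1]{JeannerodNeigerVillard2020}. A reversal-based reduction of your own would need a separate argument that it yields a basis of the module, since reversal does not respect the module structure.

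Two minor points. First, in the uniqueness of \(\vsol\), what forces \(\lambda = 0\) from \(\rdeg(\mP\lambda) < \delta = \rdeg(\mP)\) is that \(\diag{x^{-\delta_1},\ldots,x^{-\delta_\dispRk}}\,\mP\) is invertible over \(\field[[x^{-1}]]\) when \(\mP\) is row reduced. The row-wise predictable degree property concerns products \(\vq\mP\), not \(\mP\lambda\), although your conclusion is right. Second, the degenerate case you set aside is exactly \(\mF = 0\) with \(\gcd(v,\polmod)=1\). There \(\mu\) is the monic associate of \(\polmod\), of degree \(d\). The requirement \(\mu \in \xring_{<d}\) in \cref{pbm:vmpade} overlooks this case, and the proof of \cref{thm:vmpade} goes through unchanged with \(\deg(\mu) \le d\).
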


\begin{proof}
  The property on the sum of the entries of \(\rdeg(\mP)\) is a general fact
  for such bases of relations; see for example
  \cite[Cor.\,2.4]{neiger:hal-01457979}, observing that \(\mP\) is
  \(\shift\)-Popov and hence row reduced. The property that \(\det(\mP)\)
  divides \(\polmod\) is classical as well; it follows for example from results on
  matrix fractions. Indeed, one has equality between left and right matrix
  fraction descriptions \(\polmod^{-1} \mF = (\polmod/\gamma)^{-1} (\mF/\gamma)
  = \vq \mP^{-1}\) for some vector \(\vq \in \xmatspace{1}{\dispRk}\) and \(\gamma =
  \gcd(\polmod,\mF)\); both fractions are irreducible by construction, so the
  determinants of their denominators are equal \cite[Lem.\,6.5-9,
  p.\,446]{kailath1980linear}, that is, \(\det(\mP) = \polmod/\gamma\) up to a
  constant factor in \(\field\setminus\{0\}\).

  In the nonhomogeneous case \(v\neq 0\), the algorithm directly relies on
  \cref{thm:vmpade}. Note that the used shift \(\bar{\shift} =
  (\shift,\max(\shift)+d)\) satisfies \(\max(\bar{\shift}) - \min(\bar{\shift})
  = \max(\shift) + d - \min(\shift) \in \bigO{d}\). Then, both the correctness
  and cost bound follow directly from those of the case \(v = 0\).

  The rest of the proof is about the homogeneous case \(v = 0\). One has \(\mu
  = 1\) and one may take \(\vsol = 0\), so the task is to find the
  \(\shift\)-Popov basis \(\mP \in \xmatspace{\dispRk}{\dispRk}\) of \(\{\vp
  \in \xmatspace{\dispRk}{1} \mid \mF \vp = 0 \bmod \polmod\}\). For this,
  \algoName{algo:VectorMPade} relies on \cite[Lem.\,2.5]{Neiger2016fast}, which
  states that \(\mP\) is the leading principal \(\dispRk \times \dispRk\)
  submatrix of the \(\shiftt\)-Popov basis \(\mQ \in
  \xmatspace{(\dispRk+1)}{(\dispRk+1)}\) of the module
  \[
    \mathcal{A} = \{\vp \in \xmatspace{(\dispRk+1)}{1} \mid [\mF \;\; \polmod] \vp = 0 \bmod x^\tau\},
  \]
  for \(\shiftt = (\shift,\min(\shift))\) and \(\tau =
  \max(\shift)-\min(\shift)+2d\). \Cref{step:VectorMPade:WeakPopovApp} computes
  a \(\shiftt\)-weak Popov basis of \(\mathcal{A}\)
  \cite[Prop.\,7.3]{JeannerodNeigerVillard2020}, from which
  \cref{step:VectorMPade:FindDegs} deduces the diagonal degrees \(\delta \in
  \Z^{\dispRk+1}_{\ge 0}\) of the sought \(\shiftt\)-Popov basis.
  Then, \cref{step:VectorMPade:KnownDegApp} computes the \(\shiftt\)-Popov
  basis \(\mQ\) of \(\mathcal{A}\)
  \cite[Prop.\,5.1]{JeannerodNeigerVillard2020}, which completes the proof of
  correctness of \algoName{algo:VectorMPade}.

  For the complexity of \cref{step:VectorMPade:WeakPopovApp} we apply
  \cite[Prop.\,7.3]{JeannerodNeigerVillard2020}, with the main quantity being
  here
  \[
    \xi = \tau + \sum_{1\le i \le \dispRk+1} (t_i - \min(\shiftt)) = \tau + \sum_{1\le i \le \dispRk} (s_i - \min(\shift)).
  \]
  Thus \(\xi \le (\dispRk + 1) \tau\), and the result in
  \cite[Prop.\,7.3]{JeannerodNeigerVillard2020} together with the upper bound
  in the first item of \cite[Thm.\,1.3]{JeannerodNeigerVillard2020} show that
  \cref{step:VectorMPade:WeakPopovApp} uses \(\bigO{\dispRk^\expmm \M(\lceil
  \xi / \dispRk \rceil) \log(\tau)}\) operations in \(\field\). This is in
  \(\bigO{\dispRk^\expmm \M(d / \dispRk) \log(d)}\), since our assumptions
  imply that \(\dispRk\), \(\tau\), and \(\xi\) are all in \(\bigO{d}\). Finally,
  according to \cite[Prop.\,5.1]{JeannerodNeigerVillard2020},
  \cref{step:VectorMPade:KnownDegApp} uses \(\bigO{\dispRk^\expmm
  \M(\tau/\dispRk) \log(1+\tau/\dispRk)}\)
  operations in \(\field\), which is again bounded by \(\bigO{\dispRk^\expmm
  \M(d / \dispRk) \log(d)}\).
\end{proof}

\begin{algorithm}[htb]
  \algoCaptionLabel{VectorMPade}{d, \dispRk, \polmod, \mF, \shift = 0, v = 0}
  \begin{algorithmic}[1]

    \Require{%
      \algoitem positive integers \(d\) and \(\dispRk\), \par
      \algoitem a modulus polynomial \(\polmod \in \xring\) of degree \(d\), \par
      \algoitem a polynomial row vector \(\mF = [f_1 \;\; \cdots \;\; f_{\dispRk}] \in \xmatspace{1}{\dispRk}_{<d}\), \par
      \algoitem a shift \(\shift = (s_1,\ldots,s_\dispRk) \in \Z^{\dispRk}\) (default: \(\shift = (0,\ldots,0)\), uniform case), \par
      \algoitem a polynomial \(v \in \xring_{<d}\) (default: \(v = 0\), homogeneous case). \par
    }

    \Ensure{the solution \((\mP,\mu,\vsol)\) to \cref{pbm:vmpade} with
    \(\mP\) in \(\shift\)-Popov form and \(\rdeg(\vsol) < \rdeg(\mP)\).}

    \If{\(v = 0\)} \Comment{homogeneous case, rely on approximant basis at sufficiently large order}

    \State \(\tau = \max(\shift) - \min(\shift) + 2d\)
    \State \(\mQ \gets\)
    \textproc{ShiftAroundMinAppBasis}\((\tau, [\mF \;\; \polmod], (s_1,\ldots,s_\dispRk,\min(\shift)))\)
      \Comment{\cite[Algo.\,7]{JeannerodNeigerVillard2020}}
      \label{step:VectorMPade:WeakPopovApp}

    \State \(\delta \in \Z^{\dispRk+1}_{\ge 0} \gets (\deg(\mQ_{1,1}), \ldots, \deg(\mQ_{\dispRk+1,\dispRk+1}))\)
      \Comment{\(\mQ\) is in \(\xmatspace{(\dispRk+1)}{(\dispRk+1)}\)}
      \label{step:VectorMPade:FindDegs}
    \State \(\mQ \gets\) \textproc{KnownDegAppBasis}\((\tau, [\mF \;\; \polmod], (s_1,\ldots,s_\dispRk,\min(\shift)), \delta)\)
      \Comment{\cite[Algo.\,5]{JeannerodNeigerVillard2020}}
      \label{step:VectorMPade:KnownDegApp}
    \State \(\mP \in \xmatspace{\dispRk}{\dispRk} \gets\) leading principal \(\dispRk\times\dispRk\) submatrix of \(\mQ\)
    \State \Return \((\mP, 1, 0)\)

    \Else \Comment{\(v \neq 0\), nonhomogeneous case, reduce to \(v = 0\) via augmented equation}

    \State \(\bar{\shift} \in \Z^{\dispRk+1} \gets (s_1,\ldots,s_\dispRk,\max(\shift)+d)\)
    \State \((\mQ, 1, 0) \gets\)
        \Call{algo:VectorMPade}{d, \dispRk+1, \polmod, [\mF \;\; {-v}], \bar{\shift}, 0}
    \State write \(\mQ  \in \xmatspace{(\dispRk+1)}{(\dispRk+1)}\) as \(\mQ = [\begin{smallmatrix} \mP & \vsol \\ 0 & \mu \end{smallmatrix}]\)
          where \(\mP \in \xmatspace{\dispRk}{\dispRk}\), \(\vsol \in \xmatspace{\dispRk}{1}\), \(\mu \in \xring\)
    \State \Return \((\mP, \mu, \vsol)\)

    \EndIf
    \end{algorithmic}
\end{algorithm}

\subsection{Nonhomogeneous simultaneous M-Padé approximants}
\label{sec:mpade:simultaneous}

Given a modulus \(\polmod\in\xring\) of degree \(d\), a polynomial vector \(\mF
= [f_1 \;\; \cdots \;\; f_\dispRk]^\trsp \in
\xmatspace{\dispRk}{1}\), and integers \(\shift = (s_1, \ldots,
s_\dispRk)\) all in \(\{0,\ldots,d\}\), the classical version of simultaneous
M-Pad\'e approximation \cite{Hermite1874,Mahler1968} asks to find a nonzero
denominator \(p \in \xring_{<d}\) and a vector of associated numerators \(\vr =
[r_1 \;\; \cdots \;\; r_\dispRk]^\trsp \in \xmatspace{\dispRk}{1}\) such that
\(f_i p = r_i \bmod \polmod\) and \(\deg(r_i) < s_i\), for all \(1 \le i \le
\dispRk\). These conditions can be written concisely as \(\mF p = \vr \bmod
\polmod\) and \(\rdeg(\vr) < \shift\). In some applications, including ours in
\cref{sec:main}, the only object of interest is the simultaneous
denominator \(p\): then, noting that the degree conditions imply \(r_i = f_i p
\rem \polmod\), the problem becomes that of finding \(p \in \xring_{<d}\) such
that \(\rdeg(\mF p \rem M) < \shift\).

\begin{remark}
  \label{rmk:existence_homogeneous_smpade}%
  Observe that such a nonzero solution \(p\) may not exist if the degree bounds
  \(\shift\) are too restrictive. On the other hand, for the most permissive
  bounds \(\shift = (d,\ldots,d)\), any \(p \in \xring_{<d}\) is a solution.
  More generally, one may interpret the problem as a \(\field\)-linear system
  with \(s_1 + \cdots + s_{\dispRk} + d\) unknowns (which are the coefficients
  of \(p\) and \(\vr\)) and  \(\dispRk d\) equations (which express \(\mF p =
  \vr \bmod \polmod\)). In particular, there is necessarily a nonzero solution
  \((p,\vr)\), hence with nonzero \(p\), as soon as \(s_1 + \cdots + s_\dispRk
  > (\dispRk-1) d\).
  \hfill\qedsymbol
\end{remark}

As such, this homogeneous problem has been solved efficiently
\cite{RosenkildeStorjohann2016,RosenkildeStorjohann2021}, using
\(\softO{\dispRk^{\expmm-1} d}\) operations in \(\field\). More precisely,
these references solve the problem of computing a list \(\vp = [p_1 \;\; \cdots
\;\; p_\ell] \in \xmatspace{1}{\ell}\) of \(\ell \le \dispRk+1\) polynomials
that can be used to generate all such solutions \(p\) through \(\xring\)-linear
combinations. Note that, for some instances, it may be impossible to recover
the associated numerators \(\mF p_i \rem \polmod\) within the same cost bound,
as already the total number of coefficients from \(\field\) this involves may
be in \(\Theta(\dispRk^2 d)\), which exceeds this bound since \(\expmm-1 < 2\).
Besides, it is not known how to compute a single small denominator \(p\)
faster than by calling the algorithm of
\cite{RosenkildeStorjohann2016,RosenkildeStorjohann2021} which recovers a list
\(\vp\) that generates all solutions. Regarding these generating sets, we
recall the following definition, as a special case from
\cite[Sec.\,1]{RosenkildeStorjohann2021}.

\begin{definition}
  \label{def:smpade_basis}%
  Let \(d\) and \(\dispRk\) be positive integers, let \(\polmod \in \xring\)
  have degree \(d\), let \(\mF \in \xmatspace{\dispRk}{1}_{<d}\), and let
  \(\shift = (s_1,\ldots,s_\dispRk) \in \Z^{\dispRk}\) with \(0 \le s_i \le d\)
  for \(1 \le i \le \dispRk\). A polynomial \(p \in \xring\) is said to be a
  \emph{solution for \((\polmod,\mF,\shift)\)} if \(\deg(p) < d\) and
  \(\rdeg(\mF p \rem \polmod) < \shift\).
  A \emph{solution basis for \((\polmod,\mF,\shift)\)} is any triple
  \((\ell, \vp, \shiftt)\)
  consisting of an integer \(\ell \in \Z_{\ge 0}\), a row vector \(\vp \in
  \xmatspace{1}{\ell}\), and a shift \(\shiftt \in \Z^\ell\),
  with the following properties:
  \begin{itemize}[nosep]
    \item each entry of \(\vp\) is a nonzero solution for \((\polmod,\mF,\shift)\);
    \item for any solution \(p \in \xring\) for \((\polmod,\mF,\shift)\), the vector
      \([\begin{smallmatrix} p \\ \vr \end{smallmatrix}]\),
      with \(\vr = \mF p \rem \polmod\), is a \(\xring\)-linear combination of
      the columns of the matrix
      \[
        \mP =
        \begin{bmatrix}
          \vp \\
          \mF \vp \rem \polmod
        \end{bmatrix}
        \in \xmatspace{(\dispRk+1)}{\ell};
      \]
    \item \(\mP\) is \(-\bar{\shift}\)-column reduced, and
      \(\cdeg_{-\bar{\shift}}(\mP) = -\shiftt\), where \(\bar{\shift} = (d,
      s_1, \ldots, s_\dispRk)\).
  \end{itemize}
  In particular, one has \(\ell \in \{0,\ldots,\dispRk+1\}\), \(\deg(\vp) < d\),
  and all entries of \(\shiftt\) are in \(\{1,\ldots,d\}\).
\end{definition}

Let us comment on the latter properties. Since a shifted column reduced matrix
has full column rank, the number of columns of \(\mP\) cannot exceed its number
of rows, hence \(\ell \le \dispRk+1\); the extreme case \(\ell=0\) occurs
exactly when the only solution for \((\polmod,\mF,\shift)\) is the trivial one
\(p = 0\). All entries of \(\vp\) are solutions and therefore have degree less
than \(d\). Concerning \(\shiftt\), the constraints \(\deg(\vp) < d\) and
\(\rdeg(\mF \vp \rem \polmod) < \shift\) imply that
\(\cdeg_{-\bar{\shift}}(\mP) < 0\), hence \(\shiftt > 0\). On the other hand,
since \(\mP\) has no zero column, we have \(-\shiftt =
\cdeg_{-\bar{\shift}}(\mP) \ge \min(-\bar{\shift}) = -d\).

\begin{remark}
  \label{rmk:solution_specification}%
  What we call here a \emph{solution basis} \((\ell,\vp,\shiftt)\) for
  \((\polmod,\mF,\shift)\) is directly related to what is called a \emph{solution
  specification} in \cite[Def.\,1.5]{RosenkildeStorjohann2021}. Indeed, the only
  differences are superficial, and can be listed as follows:
  \begin{itemize}
    \item we use right-multiplication \(\mF \vp\) and a column basis \(\mP\),
      whereas \cite[Def.\,1.5]{RosenkildeStorjohann2021} uses
      left-multiplication and row bases (these viewpoints coincide up to matrix
      transposes);
    \item we restrict to the case of a single column vector \(\mF\), whereas
      \cite[Def.\,1.5]{RosenkildeStorjohann2021} supports the case of a
      matrix with more columns;
    \item for the degree of the sought denominators in \(\vp\), we only require
      \(\deg(\vp) < d\) whereas \cite[Def.\,1.5]{RosenkildeStorjohann2021}
      allows one to set a more constraining bound;
    \item we explicitly add \(\ell\) to this solution \((\ell,\vp,\shiftt)\),
      whereas \cite[Def.\,1.5]{RosenkildeStorjohann2021} leaves this parameter
      implicit through the dimension of \(\vp\);
    \item our solution contains the positive shift \(\shiftt\) such that
      \(\cdeg_{-\bar{\shift}}(\mP) = -\shiftt\), whereas in
      \cite[Def.\,1.5]{RosenkildeStorjohann2021} the solution rather contains
      the negative shift \(-\shiftt\).
      \hfill\qedsymbol
  \end{itemize}
\end{remark}

For our main result in \cref{sec:main}, we need an efficient algorithm
for a nonhomogeneous generalization of the above simultaneous M-Pad\'e
approximation problem. In this variant, one has an additional input vector
\(\vv = [v_1 \;\; \cdots \;\; v_\dispRk]^\trsp \in \xmatspace{\dispRk}{1}\) and
seeks \(p\) and \(\vr\) such that  \(\mF p = \vv + \vr \bmod \polmod\), with
\(\rdeg(\vr) < \shift\). As hinted at above, we will not need to compute the
numerators \(\vr\), hence our problem is to find a polynomial \(p \in
\xring_{<d}\) such that \(\deg((\mF p - \vv) \rem \polmod) < \shift\). The
homogeneous case discussed above arises as the particular case \(\vv=0\).
Note that when \(\vv\neq 0\), there may be no solution to this variant,
independently from the bounds in \(\shift\).

\begin{problem}{simultaneous M-Pad\'e approximation, nonhomogeneous}{smpade}
  \emph{Input:}
  positive integers \(d\) and \(\dispRk\),
  a modulus polynomial \(\polmod \in \xring\) of degree \(d\),
  polynomial column vectors \(\mF\) and \(\vv\) in \(\xmatspace{\dispRk}{1}_{<d}\),
  a shift \(\shift = (s_1,\ldots,s_\dispRk) \in \Z^{\dispRk}\) with \(0 \le s_i \le d\).

  \emph{Output:}
  a tuple \((\ell,\vp,\shiftt,\ssol)\) consisting of \par

  \algoitem a solution basis \((\ell,\vp,\shiftt)\) for \((\polmod,\mF,\shift)\), as in \cref{def:smpade_basis}; \par

  \algoitem a polynomial \(\ssol \in \xring_{< d}\) such that \(\rdeg((\mF
  \ssol - \vv) \rem \polmod) < \shift\), if one exists, else \(\ssol = \emptyset\).
\end{problem}

The next lemma shows that the output of \cref{pbm:smpade} is enough to
represent all solutions to the nonhomogeneous equation, similarly to
\cref{lem:vmpade_generates_all} concerning \cref{pbm:vmpade}.

\begin{lemma}
  \label{lem:smpade_generates_all}%
  Let \((d,\dispRk,\polmod,\mF,\shift,\vv)\) be some input to
  \cref{pbm:smpade}. Any corresponding output \((\ell,\vp,\shiftt,\ssol)\) with
  \(\ssol \neq \emptyset\) generates all solutions \(p \in \xring_{<d}\) to
  \(\rdeg((\mF p - \vv) \rem \polmod) < \shift\) in the sense that
  \[
    \{p \in \xring_{<d} \mid \rdeg((\mF p - \vv) \rem \polmod) < \shift\}
    =
    \{\ssol + \vp \vq \mid \vq \in \xmatspace{\ell}{1}, \rdeg(\vq) < \shiftt\}.
  \]
  When \(\ell = 0\), this means that the left-hand side set is the singleton \(\{\ssol\}\).
  Furthermore, the set of solutions for \((\polmod, \mF, \shift)\) satisfies
  \begin{equation}
    \label{eqn:smpade_homogeneous_sols}%
    \{p \in \xring_{<d} \mid \rdeg(\mF p \rem \polmod) < \shift\}
    =
    \{\vp \vq \mid \vq \in \xmatspace{\ell}{1}, \rdeg(\vq) < \shiftt\}.
  \end{equation}
\end{lemma}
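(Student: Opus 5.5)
The plan is to prove the homogeneous identity \cref{eqn:smpade_homogeneous_sols} first, using the predictable degree property (\cref{lem:predictable_degree}) applied to the matrix \(\mP = [\begin{smallmatrix} \vp \\ \mF\vp \rem \polmod\end{smallmatrix}]\) of \cref{def:smpade_basis}. The nonhomogeneous statement is then deduced by translating by \(\ssol\). Throughout, write \(\bar{\shift} = (d,s_1,\ldots,s_\dispRk)\). By definition, \(\mP\) is \(-\bar{\shift}\)-column reduced with \(\cdeg_{-\bar{\shift}}(\mP) = -\shiftt\).

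For the inclusion \(\supseteq\) in \cref{eqn:smpade_homogeneous_sols}, take \(\vq \in \xmatspace{\ell}{1}\) with \(\rdeg(\vq) < \shiftt\), so that \(\cdeg_{-\shiftt}(\vq) < 0\).
\begin{enumerate}
  \item \cref{lem:predictable_degree} with shift \(-\bar{\shift}\) gives \(\cdeg_{-\bar{\shift}}(\mP\vq) \le \cdeg_{-\shiftt}(\vq) < 0\).
  \item The top entry of \(\mP\vq\) is \(p = \vp\vq\), so this gives \(\deg(p) < d\).
  \item The bottom block of \(\mP\vq\) is \(\vr = (\mF\vp \rem \polmod)\vq\), and the bound gives \(\rdeg(\vr) < \shift\).
  \item We have \(\vr = \mF p \bmod \polmod\), and each entry \(r_i\) has degree \(< s_i \le d\). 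By uniqueness of the remainder, \(\vr = \mF p \rem \polmod\), so \(p\) is a solution.
\end{enumerate}

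For the inclusion \(\subseteq\), take a solution \(p\) and set \(\vr = \mF p \rem \polmod\).
\begin{enumerate}
  \item By the second property in \cref{def:smpade_basis}, \([\begin{smallmatrix} p\\ \vr\end{smallmatrix}] = \mP\vq\) for some \(\vq \in \xmatspace{\ell}{1}\).
  \item Since \(\deg(p) < d\) and \(\rdeg(\vr) < \shift\), we get \(\cdeg_{-\bar{\shift}}(\mP\vq) < 0\).
  \item Because \(\mP\) is \(-\bar{\shift}\)-reduced, the equality case of \cref{lem:predictable_degree} gives \(\cdeg_{-\shiftt}(\vq) = \cdeg_{-\bar{\shift}}(\mP\vq) < 0\), that is, \(\rdeg(\vq) < \shiftt\).
  \item Hence \(p = \vp\vq\) lies in the right-hand side.
\end{enumerate}
When \(\ell = 0\), the same argument shows the only solution is \(p = 0\), which is the empty combination. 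The delicate points are this equality case of the predictable degree property and the identification of the bottom block with a genuine remainder. The latter uses \(s_i \le d\).

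For the nonhomogeneous set, the key observation is linearity of \(\rem\). For \(p \in \xring_{<d}\),
\[
  (\mF p - \vv) \rem \polmod = \mF (p - \ssol) \rem \polmod + (\mF\ssol - \vv) \rem \polmod .
\]
The second term has row degree \(< \shift\) by the specification of \(\ssol\). Hence \(p\) satisfies \(\rdeg((\mF p - \vv)\rem \polmod) < \shift\) if and only if \(p - \ssol\), which also has degree \(< d\), is a solution for \((\polmod,\mF,\shift)\). Applying \cref{eqn:smpade_homogeneous_sols} to \(p - \ssol\) yields the claimed description \(\{\ssol + \vp\vq \mid \rdeg(\vq) < \shiftt\}\). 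The case \(\ell = 0\) then gives the singleton \(\{\ssol\}\).
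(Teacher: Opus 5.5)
Your proof is correct and follows essentially the same route as the paper. It reduces the nonhomogeneous set to the homogeneous one by translating by \(\ssol\), and it proves both inclusions of the homogeneous identity from the inequality and equality cases of the predictable degree property for \(\mP\). The differences are cosmetic: you prove the homogeneous identity before the translation, and you explicitly justify that the bottom block \((\mF\vp \rem \polmod)\vq\) is the true remainder \(\mF p \rem \polmod\) using \(s_i \le d\), which the paper leaves implicit.
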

\begin{proof}
  Let \(p \in \xring\). From \(\deg(\ssol) < d\) and \(\rdeg((\mF
  \ssol - \vv) \rem \polmod) < \shift\) it follows that
  \begin{align*}
  & \deg(p) < d \text{ and } \rdeg((\mF p - \vv) \rem \polmod) < \shift \\
  & \Leftrightarrow \deg(p - \ssol) < d \text{ and } \rdeg(\mF (p - \ssol) \rem \polmod) < \shift \\
  & \Leftrightarrow p - \ssol \text{ is a solution for } (\polmod, \mF, \shift).
  \end{align*}
  Since \(\ell=0\) means that the only solution for \((\polmod, \mF, \shift)\)
  is \(p - \ssol = 0\), the claim about \(\{\ssol\}\) in the statement is
  proved and we can now assume \(\ell > 0\) for the rest of the proof. Based on
  the above equivalence, it suffices to show the identity in \cref{eqn:smpade_homogeneous_sols} for the set
  of solutions for \((\polmod, \mF, \shift)\).
  For simplicity, let us denote by \(\mathcal{L}\) and \(\mathcal{R}\)
  the left-hand side and right-hand side sets in \cref{eqn:smpade_homogeneous_sols}.
  As in \cref{def:smpade_basis}, consider the shift \(\bar{\shift}
  = (d, s_1, \ldots, s_\dispRk)\) and the matrix
  \[
    \mP =
    \begin{bmatrix}
      \vp \\
      \mF \vp \rem \polmod
    \end{bmatrix}
    \in \xmatspace{(\dispRk+1)}{\ell}.
  \]

  The inclusion \(\mathcal{R} \subseteq \mathcal{L}\) follows directly from the
  degree constraints in these sets. Let \(p \in \mathcal{R}\), that is, \(p =
  \vp \vq\) for some \(\vq \in \xmatspace{\ell}{1}\) with \(\rdeg(\vq) <
  \shiftt\). The latter implies \(\cdeg_{-\shiftt}(\vq) < 0\), and from
  \(\shiftt = \cdeg_{-\bar{\shift}}(\mP)\) we deduce
  \(\cdeg_{-\bar{\shift}}(\mP \vq) \le \cdeg_{-\shiftt}(\vq)\) (see
  \cref{lem:predictable_degree}), hence all entries of
  \(\cdeg_{-\bar{\shift}}(\mP \vq)\) are strictly negative. By contruction of
  \(\mP\), considering the first of these entries yields \(\deg(\vp\vq) =
  \deg(p) < d\), whereas the others yield \(\rdeg(\mF p \rem
  \polmod) < \shift\). This proves \(p \in \mathcal{L}\).

  For the reverse inclusion \(\mathcal{R} \supseteq \mathcal{L}\), we exploit
  the generation and reducedness properties of \(\mP\). Let \(p \in
  \mathcal{L}\) and let \(\vr = \mF p \rem \polmod\). The degree conditions of
  \(\mathcal{L}\) can be rewritten as \(\rdeg([\begin{smallmatrix} p \\ \vr
  \end{smallmatrix}]) < \bar{\shift}\), which implies
  \(\cdeg_{-\bar{\shift}}([\begin{smallmatrix} p \\ \vr \end{smallmatrix}]) <
  0\). On the other hand, since \(p\) is a solution for \((\polmod, \mF,
  \shift)\), the second item of \cref{def:smpade_basis} ensures that
  \([\begin{smallmatrix} p \\ \vr \end{smallmatrix}] = \mP \vq\), for some
  \(\vq \in \xmatspace{\ell}{1}\). Thus \(\cdeg_{-\bar{\shift}}(\mP \vq) < 0\).
  Here, we can use the identity \(\cdeg_{-\bar{\shift}}(\mP \vq) =
  \cdeg_{-\shiftt}(\vq)\) from the predictable degree property
  (\cref{lem:predictable_degree}), since \(\mP\) is \(-\bar{\shift}\)-column
  reduced with shifted column degrees \(\cdeg_{-\bar{\shift}}(\mP) =
  -\shiftt\). We obtain \(\cdeg_{-\shiftt}(\vq) < 0\), which implies
  \(\rdeg(\vq) < \shiftt\), hence \(p \in \mathcal{R}\).
\end{proof}

\begin{algorithm}
  \algoCaptionLabel{SimultaneousMPade}{d, \dispRk, \polmod, \mF, \shift, \vv = 0}
  \begin{algorithmic}[1]

    \Require{%
      \algoitem positive integers \(d\) and \(\dispRk\), \par
      \algoitem a modulus polynomial \(\polmod \in \xring\) of degree \(d\), \par
      \algoitem a polynomial column vector \(\mF\) in \(\xmatspace{\dispRk}{1}_{<d}\), \par
      \algoitem a shift \(\shift = (s_1,\ldots,s_\dispRk) \in \Z^{\dispRk}\) with \(0 \le s_i \le d\) for all \(i\), \par
      \algoitem a polynomial column vector \(\vv\) in \(\xmatspace{\dispRk}{1}_{<d}\) (default: \(\vv = 0\), homogeneous case). \par
    }

    \Ensure{a tuple \((\ell,\vp,\shiftt,\ssol)\) which solves \cref{pbm:smpade}.}

    \LComment{1. call the solver \textproc{RecursiveSHPade} of \cite[Algo.\,3]{RosenkildeStorjohann2021} on specific input}
    \State \(\mS \in \xmatspace{\dispRk}{2}_{<d} \gets\)
            transpose of
            \(\begin{bmatrix}
              f_1 & \cdots & f_\dispRk \\
              -v_1 & \cdots & -v_\dispRk \\
            \end{bmatrix}\)
    \State \(\vg \in \xring^\dispRk \gets (M, \ldots, M)\)
    \State \(\mN \in \Z^{\dispRk+2} \gets (d, 1, s_1,\ldots,s_\dispRk)\), whose entries are in \(\{0,\ldots,d\}\)
    \State \((k, \mL^\trsp, -\vd) \gets \textproc{RecursiveSHPade}(\mS^\trsp, \vg, \mN)\),
          with \(0 \le k \le \dispRk+2, \mL \in \xmatspace{2}{k}, \vd \in \Z_{>0}^k\)
          \label{step:SimultaneousMPade:RosSto}

    \LComment{2. deduce the solution to \cref{pbm:smpade}}

    \If{\(k = 0\) \textbf{or} the second row of \(\mL\) is zero}
      \State \Return \((k,\vp,\vd,\emptyset)\) where \(\vp \in \xmatspace{1}{k}\) is the first row of \(\mL\)
      \label{step:SimultaneousMPade:EarlyExit}
    \EndIf

    \LComment{2.(i) transform the second row of \(\mL\) into \([0 \; \cdots \; 0 \; 1 \; 0 \; \cdots \; 0]\) with \(1\) at index \(i\)}
    \State \(i \gets \min(\{j\in \{1,\ldots, k\}  \mid \mL_{2,j} \neq 0 \text{ and } d_j = \max(\vd)\})\) \Comment{find pivot (note: \(\mL_{2,i} \in \field\))}
    \label{step:SimultaneousMPade:startnormalize}

    \State \(\mL_{1,i} \gets \mL_{1,i} / \mL_{2,i}\);
            \(\mL_{2,i} \gets 1\)
            \Comment{make pivot \(1\) }

    \For{\(j\) from \(1\) to \(k\)} \Comment{eliminate other entries in second row}
      \State\InlineIf{\(j \neq i\)}{\(\mL_{1,j} \gets \mL_{1,j} - \mL_{2,j} \mL_{1,i}\); \(\mL_{2,j} \gets 0\)}
    \EndFor
    \label{step:SimultaneousMPade:endnormalize}

    \LComment{2.(ii) extract solution basis \((\ell,\vp,\shiftt)\) and particular solution \(\ssol\)}
    \State \(\vp \in \xmatspace{1}{(k-1)}_{<d} \gets [\mL_{1,1} \;\; \cdots \;\; \mL_{1, i-1} \;\; \mL_{1, i+1} \;\; \cdots \;\; \mL_{1, k}]\)
    \State \(\shiftt \in \Z^{k-1} \gets (\vd_{1}, \ldots, \vd_{i-1}, \vd_{i+1}, \ldots, \vd_{k})\)
    \State \(\ssol \in \xring_{<d} \gets \mL_{1,i}\)
    \State \Return \((k-1, \vp, \shiftt, \ssol)\)
  \end{algorithmic}
\end{algorithm}

We are now ready to describe \cref{algo:SimultaneousMPade}, and its cost bound
in the next proposition.

\begin{proposition}
  \label{prop:smpade}%
  \algoName{algo:SimultaneousMPade} correctly solves \cref{pbm:smpade}.
  Assuming that \(\dispRk\) is in \(\bigO{d}\), it uses
  \(
    \bigO{\dispRk^{\expmm} \timepm{d/\dispRk} \log(1 + d/\dispRk)^2
    + \dispRk \timepm{d} \log(d)^2}
  \)
  operations in \(\field\).
\end{proposition}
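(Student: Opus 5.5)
We reduce the nonhomogeneous problem to the homogeneous simultaneous M-Padé problem of \cite{RosenkildeStorjohann2021} by adding one unknown. Consider the augmented input: the matrix \(\mS^\trsp \in \xmatspace{2}{\dispRk}\) with rows \(\mF^\trsp\) and \(-\vv^\trsp\), moduli \(\vg = (\polmod,\ldots,\polmod)\), and degree bounds \(\mN = (d,1,s_1,\ldots,s_\dispRk)\). A solution of this homogeneous problem is a pair \((p,c) \in \xring_{<d}\times\xring_{<1}\) with \(\rdeg((\mF p - \vv c)\rem\polmod) < \shift\); thus \(c\in\field\). The solution specification \((k,\mL^\trsp,-\vd)\) returned by \textproc{RecursiveSHPade} is, after transposition as in \cref{rmk:solution_specification}, a \(-\bar{\mN}\)-column reduced matrix
\[
\bar{\mP}=\begin{bmatrix}\mL\\ (\mF\mL_{1,*}-\vv\mL_{2,*})\rem\polmod\end{bmatrix}
\]
whose columns generate all such pairs, where \(\bar{\mN}=\mN\), and whose shifted column degrees are \(-\vd\).

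The core of the correctness argument concerns the second row of \(\mL\). First, every column is a solution, so \(\deg(\mL_{2,j})<1\) and the second row lies in \(\field^{1\times k}\). If this row is zero (or \(k=0\)), then every solution has \(c=0\), since it is an \(\xring\)-combination of the columns. Hence no \(p\) with \(c=1\) exists, so \(\ssol=\emptyset\), and \(\mL_{1,*}\) with \(\vd\) is a solution basis for \((\polmod,\mF,\shift)\). This last point needs checking: deleting the zero second row keeps the remaining matrix \(-\bar{\shift}\)-column reduced with the same shifted column degrees, because the deleted row had shift \(-1\) and zero entries, so it contributes nothing to the leading matrix. This justifies \cref{step:SimultaneousMPade:EarlyExit}.

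Otherwise, we normalize. We choose the pivot column \(i\) with \(\mL_{2,i}\neq0\) and maximal \(d_i\), scale it, and subtract constant multiples of it from the other columns to clear their second-row entries. This is a unimodular constant column operation, so it preserves generation. For reducedness, column \(j\) is modified by a constant multiple of column \(i\) with \(-d_i \le -d_j\), so its \(-\bar{\mN}\)-degree stays at most \(-d_j\). If the degree stays exactly \(-d_j\), the leading matrix changes by a triangular unimodular operation and reducedness is preserved. The degree cannot drop, since that would contradict the predictable degree property and the minimality of the reduced basis (the leading matrix stays invertible after the elimination). This step is the main technical obstacle: justifying the choice of \(\max(\vd)\) and checking that the resulting leading matrix remains invertible.

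After normalization, \(\ssol=\mL_{1,i}\) gives a pair with \(c=1\), so it is a particular solution. The other columns have \(c=0\), and their first row with the shifts \(d_j\) for \(j\neq i\) forms a solution basis for \((\polmod,\mF,\shift)\). Generation follows because any homogeneous solution \((p,0)\) is a combination of the columns whose coefficient on column \(i\) must vanish, as read off from the second row. Reducedness follows by deleting column \(i\) and the \(c\)-row: the \(c\)-row of the leading matrix is zero outside column \(i\), so the minor obtained by this deletion is still invertible. Finally, the cost comes from \cite[Thm.\,1.6 / Algo.\,3]{RosenkildeStorjohann2021} applied with \(2\) columns, \(\dispRk\) moduli of degree \(d\), and degree bounds at most \(d\). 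Together with \(\dispRk\in\bigO{d}\), this yields \(\bigO{\dispRk^{\expmm}\timepm{d/\dispRk}\log(1+d/\dispRk)^2+\dispRk\timepm{d}\log(d)^2}\). The normalization costs only \(\bigO{k\,d}\subseteq\bigO{\dispRk d}\) operations, which is negligible.
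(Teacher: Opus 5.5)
Your proposal is correct and follows the paper's proof essentially step for step. It uses the same augmented instance with the constant extra unknown \(c\), the same early exit when the second row of \(\mL\) vanishes, the same pivot of maximal \(d_i\) for the normalization, the same extraction of the solution basis and particular solution, and takes the cost from \cite{RosenkildeStorjohann2021}. There are two differences. First, you prove directly that \(-\mN\)-column reducedness is preserved: the new column \(j\) is \(\bar{\mP}(e_j - c\,e_i)\), and its shifted degree is exactly \(-d_j\) by the predictable degree property since \(d_i \ge d_j\); the paper instead cites Sarkar--Storjohann. Second, your cost paragraph is less explicit than the paper's, which treats \(\dispRk \le 2\) separately and accounts for the additional \(\dispRk^{\expmm-1} d \log(\dispRk)\) term of \cite[Thm.\,4.5]{RosenkildeStorjohann2021}.
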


\begin{proof}
  Consider the object \((k, \mL^\trsp, -\vd)\) computed at
  \cref{step:SimultaneousMPade:RosSto} and define the matrix
  \[
    \mQ = \begin{bmatrix}
      \mL \\ \mS\mL \rem \polmod
    \end{bmatrix} \in \xmatspace{(\dispRk+2)}{k}.
  \]
  According to \cite[Thm.\,4.5]{RosenkildeStorjohann2021} (see also
  \cite[Pbm.\,1.4 and Def.\,1.5]{RosenkildeStorjohann2021}), we have
  \(\cdeg_{-\mN}(\mQ) = -\vd\) and \(\mQ\) is a solution basis
  for \((\mS^\trsp, \vg, \mN)\) in the sense of
  \cite[Pbm.\,1.4]{RosenkildeStorjohann2021}, that is:
  \begin{enumerate}[(1)]
    \item
      \label{it:proofmpade:sol}
      ``Each row of \(\mQ^\trsp\) is a solution of the instance'': this
      means that \(\rdeg(\mQ) < \mN\);
    \item
      \label{it:proofmpade:gen}
      ``All solutions are in the row space of \(\mQ^\trsp\)'': this means
      that the column \(\xring\)-span \(\{\mQ \vq \mid \vq \in \xmatspace{k}{1}\}\) of
      \(\mQ\) contains all vectors of the set
      \[
        \mathcal{S} =
        \left\{\begin{bmatrix} p \\ c \\ \vr \end{bmatrix} \in \xmatspace{(\dispRk+2)}{1} \mid
          \vr = \mS\begin{bmatrix} p \\ c \end{bmatrix} \rem \polmod \in \xmatspace{\dispRk}{1},
          \rdeg\left(\begin{bmatrix} p \\ c \\ \vr \end{bmatrix}\right) < \mN
        \right\},
      \]
      where the row degree condition is equivalent to \(p \in \xring_{<d}\) and
      \(c \in \field\) and \(\rdeg(\vr) < \shift\);
    \item
      \label{it:proofmpade:red}
      ``\(\mQ^\trsp\) is \(-\mN\)-row reduced'': this means that \(\mQ\) is
      \(-\mN\)-column reduced.
  \end{enumerate}
  The main ingredient behind correctness is that, thanks to the above items and
  to the choice of the matrix \(\mS\), there is a correspondence between
  vectors in \(\mathcal{S}\) such that \(c=0\) and solutions \(p \in \xring_{<
  d}\) to \((\polmod,\mF,\shift)\), and there is also a correspondence between
  vectors in \(\mathcal{S}\) such that \(c \neq 0\) and polynomials \(\ssol \in
  \xring_{<d}\) such that \(\rdeg((\mF \ssol - \vv)
  \rem \polmod) < \shift\). The latter is by considering \(\ssol = p/c \in
  \xring_{<d}\),  and thus \(\vr/c = (\mF \ssol - \vv) \rem \polmod\).

  We first prove the correctness when the algorithm exits early, at
  \cref{step:SimultaneousMPade:EarlyExit}. If \(k=0\), the column span of
  \(\mQ\) only contains the zero vector, hence \(\mathcal{S} = \{0\}\)
  by \cref{it:proofmpade:gen}. In that case, the algorithm correctly returns
  an empty vector \(\vp\) and an empty shift \(\vd\) indicating that there is no
  homogeneous solution other than zero, as well as \(\ssol = \emptyset\) meaning that
  there is no nonhomogeneous solution \(\ssol \in \xring_{<d}\) such that
  \(\rdeg((\mF \ssol - \vv) \rem \polmod) < \shift\). If \(k > 0\) and the
  second row of \(\mL\) is zero, then the second
  column of \(\mS\) does not intervene in products such as
  \(\mS[\begin{smallmatrix} p \\ c \end{smallmatrix}] \rem \polmod\), in the
  sense that \(\mS[\begin{smallmatrix} p \\ c \end{smallmatrix}] = \mF p\) when
  \(c = 0\). In that case, the properties in
  \cref{it:proofmpade:sol,it:proofmpade:gen,it:proofmpade:red} show that
  \((k,\vp,\vd)\) is a solution basis for \((\polmod,\mF,\shift)\),
  and also that there is no solution \(\ssol \in \xring_{<d}\) such that
  \(\rdeg(\vr) < \shift\) where \(\vr = (\mF \ssol - \vv) \rem \polmod\).
  Indeed, such a solution \(\ssol\) would provide a vector \([\ssol \;\; 1 \;\;
  \vr^\trsp]^\trsp\) in the column span of \(\mQ\) according to
  \cref{it:proofmpade:gen}, which is not possible since the second row of
  \(\mQ\) is zero and the second entry of this vector is \(1\). Thus, the
  output \((k,\vp,\vd,\emptyset)\) at \cref{step:SimultaneousMPade:EarlyExit}
  is correct.

  If \(k \neq 0\) and the second row of \(\mL\) is not zero,
  then this second row is constant and
  \zcref[range]{step:SimultaneousMPade:startnormalize,step:SimultaneousMPade:endnormalize}
  perform a constant and unimodular transformation on the columns of \(\mL\)
  to make its second row become the \(i\)th identity vector. This \(i\) is chosen
  as the column index of one of the nonzero entries of \(\mL_{2,*}\) such that
  \(d_i\) is maximal, meaning that the shifted degree \(-d_i\) of this \(i\)th
  column of \(\mQ\) is minimal. For the sake of clarity, in this proof we write
  \(\mU \in \matspace{k}{k}\) for this constant unimodular transformation, and
  write \(\mL \mU\) and \(\mQ\mU\) for the transformed matrices; in the
  algorithm,
  \zcref[range]{step:SimultaneousMPade:startnormalize,step:SimultaneousMPade:endnormalize}
  transform
  the matrix \(\mL\) in place into \(\mL\mU\). The specific choice of
  \(i\) that minimizes the shifted degree ensures that \(\mQ\mU\) is still in
  \(-\mN\)-column reduced form (i.e.\ \cref{it:proofmpade:red} remains valid
  for \(\mQ\mU\)) with unchanged \(\cdeg_{-\mN}(\mQ) = -\vd\) \cite[Sec.\,2
  and\,3]{SarkarStorjohann2011}, and the fact that \(\mU\) is constant and
  unimodular ensures that \cref{it:proofmpade:sol,it:proofmpade:gen} also
  remain valid when replacing \(\mL\) and \(\mQ\) by \(\mL\mU\) and \(\mQ\mU\).
  Then, having these properties in
  \cref{it:proofmpade:sol,it:proofmpade:gen,it:proofmpade:red} for a matrix
  \(\mL\mU\) whose second row has a single nonzero entry at index \(i\), and
  thanks to the above-mentioned correspondence, a solution basis for
  \((\polmod,\mF,\shift)\) is formed by the vector with entries
  \((\mL\mU)_{1,j}\) for \(j\neq i\) and a nonhomogeneous solution \(\ssol\) is
  given by \((\mL\mU)_{1,i}\), hence the correctness.

  Turning to complexity, \Cref{step:SimultaneousMPade:RosSto} calls
  \cite[Algo.\,3]{RosenkildeStorjohann2021}, for which a cost bound is
  given in \cite[Cor.\,2.8 and Thm.\,4.5]{RosenkildeStorjohann2021}. If
  \(\dispRk \le 2\), we apply the first item of
  \cite[Cor.\,2.8]{RosenkildeStorjohann2021} which states that
  \Cref{step:SimultaneousMPade:RosSto} takes \(\bigO{\timepm{d} \log(d)^2}\)
  operations in \(\field\), which is within the cost bound in our statement. If
  \(\dispRk > 2\), since \(\dispRk \in \bigO{d}\) we apply the first item of
  \cite[Thm.\,4.5]{RosenkildeStorjohann2021}, which ensures that it takes
  \[
   \bigO{\dispRk^\expmm \timepm{d/\dispRk} \log(d/\dispRk)^2
        +  \dispRk \timepm{d} \log(d)^2
        +  \dispRk^{\expmm - 1} d \log(\dispRk)
       }.
  \]
  The last term \(\dispRk^{\expmm - 1} d \log(\dispRk)\) is bounded by the
  first one since \(\dispRk \in \bigO{d}\), and thus this is again within the
  bound to be proved.
  It remains to observe that the other operations performed by
  \cref{algo:SimultaneousMPade} are done in a cost that is linear in the number
  of coefficients from \(\field\) that are used in the dense representation of
  \(\mL\). Since \(\mL\) has \(2k\) entries all of degree less than \(d\), and
  since \(k \le \dispRk + 2\), this means that the extra work apart from
  \Cref{step:SimultaneousMPade:RosSto} uses a total of \(\bigO{\dispRk d}\)
  operations in \(\field\).
\end{proof}

\section{Deterministic approach for systems and nullspaces}
\label{sec:main}

In this section, we prove \cref{thm:main} by presenting three algorithms which
deterministically solve \cref{pbm:linsys,pbm:kernel} for Toeplitz-like,
Vandermonde-like, or Cauchy-like matrices. They are direct instantiations
of our approach sketched in \cref{sec:intro}.

\subsection{Some preliminaries on polynomial matrices}
\label{sec:main:polymat}

The next lemma elaborates upon the predictable degree property and provides a
key technical ingredient for the derivation of degree bounds on polynomial
unknowns that arise in our algorithm.

\begin{lemma}
  \label{lem:degree_bound_lambda}%
  Let \(\mP\in\xmatspace{\dispRk}{\dispRk}\) be in weak Popov form with column
  degree \(\delta = \cdeg(\mP) \in \N^\dispRk\), and let \(\vp \in
  \xmatspace{\dispRk}{1}\) be a vector such that \(\rdeg(\vp) < \delta\). For
  any \(\lambda\in\xmatspace{\dispRk}{1}\) and \(n \in \N\) such that
  \(\cdeg(\mP\lambda + \vp) < n\), one has both
  \(\cdeg(\mP\lambda) < n\) and \(\cdeg(\vp) < n\),
  and furthermore \(\rdeg(\lambda) < n-\delta\), that is,
  \(\deg(\lambda_i) < n - \delta_i\) for all \(1 \le i \le \dispRk\).
\end{lemma}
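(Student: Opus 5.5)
The plan is to argue entrywise on rows using the weak Popov structure, where ``weak Popov'' means the column-wise, unshifted notion of \cref{def:reduced_popov}. Thus \(\clm(\mP)\) is invertible and lower triangular, so each diagonal entry \(\mP_{jj}\) has degree exactly \(\delta_j\). The matrix \(\mP\) is also column reduced, so \cref{lem:predictable_degree} with \(\shift = 0\) gives \(\cdeg(\mP\lambda) = \cdeg_{\delta}(\lambda) = \max_i(\deg(\lambda_i)+\delta_i)\) for every \(\lambda\). Once we know \(\cdeg(\mP\lambda) < n\), this immediately yields \(\deg(\lambda_i) + \delta_i < n\) for all \(i\), which is the final claim \(\rdeg(\lambda) < n - \delta\). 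The first two claims are linked through \(\mP\lambda = (\mP\lambda+\vp) - \vp\): it suffices to prove \(\cdeg(\vp) < n\), and then \(\cdeg(\mP\lambda) < n\) follows from the hypothesis.

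The core step is a leading-term argument for the vector \(\mP\lambda\), with \(\vp\) regarded as a perturbation that is strictly smaller than \(\delta\) row by row. Write \(t = \cdeg_{\delta}(\lambda)\), assume \(\lambda \neq 0\) (otherwise everything is trivial), and let \(I = \{i : \deg(\lambda_i) + \delta_i = t\}\). The coefficient of degree \(t\) in the \(\dispRk \times 1\) vector \(\mP\lambda\) is \(\clm(\mP)\,\vc\), where \(\vc\) is the vector of leading coefficients of \(\lambda_i\) for \(i \in I\) and zero elsewhere. Let \(i_0 = \min I\). By lower triangularity, row \(i_0\) of \(\clm(\mP)\vc\) equals \(\clm(\mP)_{i_0 i_0} c_{i_0}\), which is nonzero. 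Hence \(\deg((\mP\lambda)_{i_0}) = t \ge \delta_{i_0}\).

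On the other hand, \(\deg(\vp_{i_0}) < \delta_{i_0} \le t\), so the entry \(i_0\) of \(\mP\lambda+\vp\) also has degree exactly \(t\). The hypothesis then gives \(t < n\), that is, \(\cdeg(\mP\lambda) < n\) and \(\rdeg(\lambda) < n - \delta\). For \(\vp\), I would write \(\cdeg(\vp) = \max_i \deg(\vp_i) < \max_i \delta_i\), which is not obviously less than \(n\). Instead, I would use \(\vp = (\mP\lambda+\vp) - \mP\lambda\), where both terms now have column degree less than \(n\). This gives \(\cdeg(\vp) < n\) in the case \(\lambda \neq 0\), and in the case \(\lambda = 0\) we have \(\vp = \mP\lambda+\vp\) directly.

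The main obstacle is the identification of the row \(i_0\) where \(\vp\) cannot cancel the leading term. This is precisely where weak Popov, and not mere reducedness, is needed. With reducedness alone, the degree-\(t\) part of \(\mP\lambda\) could sit in a row \(i\) with \(\delta_i > t\) compared to \(\vp_i\)'s bound. Choosing the smallest index in \(I\) together with the lower triangular leading matrix makes the pivot row coincide with a row whose bound \(\delta_{i_0}\) is at most \(t\). I would double-check the convention that the lower triangular \(\clm(\mP)\) makes row \(i_0\) pick up only the column \(i_0\) contribution among the columns in \(I\); the transpose convention would instead require taking \(\max I\).
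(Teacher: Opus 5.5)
Your proof is correct and uses the same mechanism as the paper's. Both reduce everything to $\cdeg(\mP\lambda)<n$ plus the predictable degree property, and both rely on the fact that the pivot entry of $\mP\lambda$ has degree at least the corresponding $\delta_j$, so $\vp$ (with $\rdeg(\vp)<\delta$) cannot cancel it. The paper argues this by contradiction and cites the pivot property from Neiger's thesis, whereas you prove it directly through the leading coefficient vector $\clm(\mP)\,\vc$.

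On your convention check: with $\clm(\mP)$ lower triangular, as the paper's definition literally states, $\min I$ is indeed the right index. The paper's proof instead takes the largest index of maximal degree, which matches the transposed (upper triangular, bottommost-pivot) convention the paper implicitly uses elsewhere, for example when it asserts that entries below the diagonal of a shifted weak Popov basis have strictly smaller shifted degree. The argument is valid in either convention with the matching choice of index.
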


\begin{proof}
  First observe that since \(\cdeg(\mP\lambda + \vp) < n\), if any of the
  inequalities \(\cdeg(\mP\lambda) < n\) and \(\cdeg(\vp) < n\) holds, then the
  other one holds as well. Furthermore, if \(\cdeg(\mP\lambda) < n\), then the
  predictable degree property in \cref{lem:predictable_degree} gives
  \(\cdeg_{\delta}(\lambda) < n\) since \(\mP\) is reduced; this precisely means
  \(\deg(\lambda_i) < n - \delta_i\) for all \(1 \le i \le \dispRk\). By the
  above remark, the same conclusion also holds when \(\cdeg(\vp) < n\); thus
  it remains to prove that it cannot happen that both \(\cdeg(\vp) \ge n\) and
  \(\cdeg(\mP\lambda) \ge n\) hold. By contradiction, assume \(\cdeg(\vp) \ge
  n\) and \(\cdeg(\mP\lambda) \ge n\); in particular, \(\vp\)
  and \(\mP\lambda\) are nonzero. It follows from \(\cdeg(\mP\lambda + \vp) <
  n\) that \(\cdeg(\vp) = \cdeg(\mP\lambda)\). More precisely, the bottommost
  entries of largest degree in \(\vp\) and \(\mP\lambda\) must collide:
  formally, writing \(\vq = \mP\lambda\) and defining \(i\) (resp.\ \(j\)) as
  the largest index in \(\{1,\ldots,\dispRk\}\) such that \(\deg(\vp_i) =
  \cdeg(\vp)\) (resp.\ \(\deg(\vq_j) = \cdeg(\vq)\)), we have \(i = j\) and
  \(\deg(\vp_i) = \deg(\vq_i)\). On the other hand, this definition of \(j\)
  along with the fact that \(\mP\) is in weak Popov form imply that the
  \(j\)-th entry of \(\vq = \mP\lambda\) has degree at least \(\delta_j\),
  according to \cite[Lem.\,1.17]{Neiger2016phd}. Altogether, we have proved
  \(\deg(\vp_i) = \deg(\vq_i) \ge \delta_i\), which contradicts the assumption
  that \(\rdeg(\vp) < \delta\).
\end{proof}

Another useful property that intervenes in our main algorithm is that, assuming
suitable degree bounds, the reversal of a product of two polynomial matrices is
equal to the product of their reversals.

\begin{lemma}
  \label{lem:reverse_product}%
  Let \(\mP \in \xmatspace{\beta}{\dispRk}\) and let \(\delta \in \N^\dispRk\) be such that
  \(\cdeg(\mP) \le \delta\). Let \(\mQ \in \xmatspace{\dispRk}{\gamma}\) and let \(n
  \in \N\) be such that \(\rdeg(\mQ) < \shift\)
  where \(\shift = (n-\delta_1,\ldots, n-\delta_\dispRk)\). Then, one has
  \(\deg(\mP\mQ) < n\), and \(\polrev{n-1}{\mP\mQ} = \bar{\mP} \bar{\mQ}\),
  where \(\bar{\mP} = \crev{\delta}{\mP}\) and \(\bar{\mQ} =
  \rrev{\shift-1}{\mQ}\) and $\shift-1 = (\shift_1-1,\ldots,\shift_\dispRk-1)$.
\end{lemma}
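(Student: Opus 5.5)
The plan is to prove both claims with the matrix identities already noted after \cref{def:degrees_lmat_reduced}: \(\crev{\delta}{\mP} = \mP(x^{-1}) \diag{x^{\delta_1},\ldots,x^{\delta_\dispRk}}\) and, row-wise, \(\rrev{\shift-1}{\mQ} = \diag{x^{s_1-1},\ldots,x^{s_\dispRk-1}} \mQ(x^{-1})\). First we check that these reversals are well defined. The hypothesis \(\rdeg(\mQ) < \shift\) means \(\deg(\mQ_{ij}) \le s_i - 1\) for all \(j\), which is exactly the condition needed for the \((\shift-1)\)-row reversal. If some \(s_i \le 0\), the \(i\)th row of \(\mQ\) must be zero, so its exponent \(s_i-1\) only ever multiplies zero entries. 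We will remark that this degenerate case is harmless, or simply restrict attention to rows with \(s_i \ge 1\).

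For the degree bound, we expand entrywise: \((\mP\mQ)_{ij} = \sum_k \mP_{ik}\mQ_{kj}\). Each term satisfies \(\deg(\mP_{ik}) \le \delta_k\) and \(\deg(\mQ_{kj}) < n-\delta_k\), so its degree is less than \(n\). Hence \(\deg(\mP\mQ) < n\), and \(\polrev{n-1}{\mP\mQ} = x^{n-1}(\mP\mQ)(x^{-1})\) is defined.

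For the reversal identity, we compute
\[
  \bar{\mP}\bar{\mQ} = \mP(x^{-1}) \diag{x^{\delta_k}}_k \diag{x^{n-\delta_k-1}}_k \mQ(x^{-1}) = x^{n-1}\mP(x^{-1})\mQ(x^{-1}),
\]
and the right-hand side is \(x^{n-1}(\mP\mQ)(x^{-1}) = \polrev{n-1}{\mP\mQ}\), since evaluation at \(x^{-1}\) is a ring morphism from \(\xring\) to \(\field(x)\).

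No step is a real obstacle; the only care needed is the bookkeeping of the indices in the middle diagonal product and the possibly negative entries of \(\shift\).
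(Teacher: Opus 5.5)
Your proposal is correct and follows essentially the same route as the paper. Both arguments note that indices with \(s_i \le 0\) correspond to zero rows of \(\mQ\), write the two reversals as \(\mP(x^{-1})\diag{x^{\delta_1},\ldots,x^{\delta_\dispRk}}\) and \(\diag{x^{s_1-1},\ldots,x^{s_\dispRk-1}}\mQ(x^{-1})\), and observe that the middle diagonal factors multiply to \(x^{n-1}\ident_\dispRk\).
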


\begin{proof}
  The bound on \(\deg(\mP\mQ)\) is obvious. Observe that any index \(i\) such
  that \(\delta_i \ge n\), i.e.\ \(\shift_i \le 0\), corresponds to a zero
  row in \(\mQ\). Then, by definition of reversals, we have
  \begin{align*}
    \polrev{n-1}{\mP(x) \mQ(x)} & = x^{n-1} \mP(x^{-1}) \mQ(x^{-1}) \\
                                & = \mP(x^{-1}) \diag{x^{n-1},\ldots,x^{n-1}} \mQ(x^{-1}) \\
                                & = \mP(x^{-1}) \diag{x^{\delta_1},\ldots,x^{\delta_\dispRk}}\diag{x^{\shift_1-1},\ldots,x^{\shift_\dispRk-1}} \mQ(x^{-1}) \\
                                & = \bar{\mP}(x) \bar{\mQ}(x).
                                \qedhere
  \end{align*}
\end{proof}

We conclude these preliminaries with a computational tool. For a given nonzero
polynomial \(\polmod \in \xring\), a square matrix \(\mP \in \xmatspace{\dispRk}{\dispRk}\)
is invertible modulo \(\polmod\) if and only if \(\det(\mP)\) is coprime with
\(M\). In this case, we write \(\mP^{-1} \rem \polmod\) for the unique matrix
\(\mQ \in \xmatspace{\dispRk}{\dispRk}\) such that both \(\deg(\mQ) < \deg(\polmod)\)
and \(\mQ \mP = \ident_\dispRk \bmod \polmod\) hold.

\begin{lemma}
  \label{lem:Pinv_vec_mod}%
  Let \(\dispRk,m,n\) be positive integers with \(m\ge \dispRk\) and \(n \ge
  \dispRk\). Let \(\mP \in \xmatspace{\dispRk}{\dispRk}\) have column degree
  \(\delta = \cdeg(\mP) \in \N^\dispRk\) with \(\delta_1 + \cdots +
  \delta_\dispRk \le m\), let \(\vv \in \xmatspace{\dispRk}{1}_{<n}\),
  and let \(\polmod \in \xring\) have degree \(n\) and be coprime with
  \(\det(\mP)\). Then \(\mP\) is invertible modulo \(\polmod\) and one can
  compute \(\mP^{-1} \vv \rem \polmod\) using \(\bigO{\dispRk^{\expmm-1}
  \timepm{m+n} \log((m+n)/\dispRk)}\) operations in \(\field\).
\end{lemma}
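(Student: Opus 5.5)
Invertibility modulo \(\polmod\) is immediate: \(\det(\mP)\) is coprime with \(\polmod\), so it is a unit modulo \(\polmod\), and the adjugate formula gives \(\mP^{-1} \rem \polmod\). The substance is the cost bound. Forming \(\mP^{-1}\rem\polmod\) explicitly is not an option, since it has \(\dispRk^2 n\) coefficients. I will instead follow the introduction and realize \(\vw = \mP^{-1}\vv \rem \polmod\) through a minimal kernel basis computation in the style of \cite{ZLS2012}. The vector \(\vw\) is characterized as the unique \(\vw \in \xmatspace{\dispRk}{1}_{<n}\) with \(\mP \vw = \vv \bmod \polmod\). Equivalently, there is some \(\vy \in \xmatspace{\dispRk}{1}\) with
\[
  \mP \vw - \polmod \vy = \vv .
\]
Since \(\deg(\vw)<n\), \(\deg(\vv)<n\) and \(\deg\polmod = n\), one gets \(\deg(\vy_i) < \delta_i\)-type bounds (more precisely \(\rdeg(\vy) \le \rdeg(\mP\vw) - n\)), so the unknowns are small.

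The plan is to consider the homogeneous relation
\[
  [\mP \;\; {-\polmod}\,\ident_\dispRk \;\; {-\vv}]\,[\vw^\trsp \;\, \vy^\trsp \;\, c]^\trsp = 0,
\]
and compute a minimal kernel basis of the \(\dispRk \times (2\dispRk+1)\) matrix \(\mK = [\mP \;\; {-\polmod}\ident_\dispRk \;\; {-\vv}]\), with a suitable shift. Its kernel has rank \(\dispRk+1\). A basis is given by the columns \([\mathrm{adj}\text{-type}]\) coming from \([\polmod \mP^{-1}\ \text{relations}]\) together with one column having \(c=1\), which is exactly \([\vw;\vy;1]\) after reduction modulo the other columns. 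To make this extraction deterministic and clean, I choose the shift \(\shift = (0,\ldots,0, n, \ldots, n, \text{large})\) type, weighting the \(\vw\) block so that the kernel basis in shifted Popov (or reduced) form has its \(c\)-column normalized with \(\deg \vw < n\). An alternative is to take a reduced basis and perform one final reduction of the \(c=1\) vector modulo \(\polmod\) in the \(\vw\) part, using \(\vw \gets \vw \rem \polmod\). That second approach is cheaper to justify, since any \(\vw\) with \(\mP\vw = \vv \bmod \polmod\) reduces to the unique one. The existence of a kernel vector with \(c=1\) (not just \(c\) a nonconstant) follows from invertibility modulo \(\polmod\).

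For the cost, I apply the kernel basis algorithm of \cite{ZLS2012}. Its complexity is \(\softO{\beta^{\expmm-1}\xi}\), where \(\beta = 2\dispRk+1\) is the column dimension and \(\xi\) is the sum of shifted column degrees of the input. Here the column degrees are \(\delta\) (summing to at most \(m\)), then \(n\) repeated \(\dispRk\) times, then at most \(n\). This is the \emph{main obstacle}: the \(\polmod\ident\) block has total degree \(\dispRk n\), which would give \(\dispRk^{\expmm} n\). To avoid this, I would instead reduce to a single modular row. Precisely, I multiply \(\mP\vw = \vv \bmod \polmod\) on the left by the last row of a Popov-type structure, or better, use \cite{ZLS2012}'s bound, which depends on the \emph{average} degree weighted by the shift: \(\xi/\beta\). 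With the shift equal to the column degrees, \(\xi \approx m + (\dispRk+1) n\), so \(\xi/\beta \in \bigO{(m+n)/\dispRk}\). The cost \(\bigO{\beta^\expmm \timepm{\xi/\beta}\log(\xi/\beta)}\) is then \(\bigO{\dispRk^{\expmm}\timepm{(m+n)/\dispRk}\log((m+n)/\dispRk)}\), which by the assumption \(\timepm{kd}\in\bigO{k^{\expmm-1}\timepm d}\)-type superlinearity sits within \(\bigO{\dispRk^{\expmm-1}\timepm{m+n}\log((m+n)/\dispRk)}\), using \(m,n\ge\dispRk\). Finally, I add the \(\bigO{\dispRk\timepm n}\) cost of the final reduction modulo \(\polmod\). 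I also have to check that the degree bounds on the output columns satisfy the hypotheses of the cited theorem: the output sizes are bounded by \(\xi\), consistent with \cite[Thm.]{ZLS2012}.
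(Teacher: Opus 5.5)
There is a genuine gap in the cost analysis, located exactly at the obstacle you flagged yourself. For the input $[\mP \;\; {-\polmod}\ident_\dispRk \;\; {-\vv}]$, the shift in the kernel algorithm of \cite{ZLS2012} must bound its column degrees. The $\dispRk$ columns of the $\polmod\ident_\dispRk$ block have degree $n$, so the shift sum satisfies $\xi \ge \dispRk n$, while the number of columns is $\beta = 2\dispRk+1$. Hence $\xi/\beta \ge \dispRk n/(2\dispRk+1) \ge n/3$, which is not in $\bigO{(m+n)/\dispRk}$ as you claim. The resulting cost $\beta^\expmm \timepm{\xi/\beta}$ is of order $\dispRk^\expmm \timepm{n}$, a factor $\dispRk$ above the target. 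Your ``average degree'' remark does not rescue this, because the average shifted degree here really is of order $n$.

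This is not just a weakness of the analysis. The columns of $[\begin{smallmatrix} \polmod\ident_\dispRk \\ \mP \\ 0 \end{smallmatrix}]$, together with one column having $c=1$, form a basis of that kernel. So every basis has a maximal minor (on the rows of $\vw$ and of $c$) equal to a nonzero constant times $\polmod^\dispRk$. Its column degrees therefore sum to at least $\dispRk n$, and it has about $\dispRk^2 n$ coefficients, which is the same count that made you reject forming $\mP^{-1} \rem \polmod$. The idea of ``reducing to a single modular row'' is not developed into anything that avoids this.

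The missing idea, which is how the paper proceeds, is to keep $\polmod$ out of the kernel computation entirely.
\begin{itemize}
\item Take the $\dispRk\times(\dispRk+1)$ matrix $[\mP \;\; {-\vv}]$ with shift $(\delta_1,\ldots,\delta_\dispRk,n-1)$, whose entries sum to less than $m+n$.
\item Since $\mP$ is nonsingular, the right kernel over $\xring$ is spanned by a single vector $[\begin{smallmatrix}\vu \\ \mu\end{smallmatrix}]$ with $\mP\vu = \mu\vv$. By the degree bounds of \cite{ZLS2012}, $\deg(\vu) < m+n$ and $\deg(\mu) < m+n$.
\item Computing this vector costs $\bigO{\dispRk^{\expmm-1}\timepm{m+n}\log((m+n)/\dispRk)}$, using superlinearity of $\timepm{\cdot}$.
\end{itemize}
You would then need one extra step, with no counterpart in your approach (where $c=1$ came for free): showing that $\mu$ is invertible modulo $\polmod$.
\begin{itemize}
\item Left-multiplying $\mP\vu=\mu\vv$ by the adjugate $\det(\mP)\mP^{-1}$ gives $\det(\mP)\vu = \mu\det(\mP)\mP^{-1}\vv$.
\item With $g=\gcd(\mu,\det(\mP))$, it follows that $\mu/g$ divides every entry of $\vu$, and also $\mu$.
\item A kernel basis vector admits no nonconstant common factor, so $\mu/g$ is a constant. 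Hence $\mu$ divides $\det(\mP)$, which is coprime with $\polmod$.
\end{itemize}
Then $\mP^{-1}\vv \rem \polmod = \mu^{-1}\vu \rem \polmod$. This is obtained with one modular inversion and $\dispRk$ modular products, in $\bigO{\timepm{m+n}\log(m+n) + \dispRk\timepm{m+n}}$ operations.
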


\begin{proof}
  Define \(\shift = (\delta_1, \ldots, \delta_\dispRk, n-1) \in \N^{\dispRk+1}\) and \(\mF
  = [\mP \;\; {-\vv}] \in \xmatspace{\dispRk}{(\dispRk+1)}\). We have \(\cdeg(\mF) \le
  \shift\) by construction. Since \(\gcd(\polmod,\det(\mP)) = 1\),
  \(\mP\) is invertible modulo \(\polmod\). In particular, \(\mP\)
  is nonsingular and \(\mF\) has rank \(\dispRk\). Thus,
  any basis of the right kernel \(\{\vp \in \xmatspace{(\dispRk+1)}{1} \mid
  \mF\vp = 0\}\) consists of a single vector.
  Let \(\mK \in \xmatspace{(\dispRk+1)}{1}\) be such a kernel basis, and write it as
  \(\mK = [\begin{smallmatrix} \vu \\ \mu \end{smallmatrix}]\) where \(\vu \in
  \xmatspace{\dispRk}{1}\) and \(\mu \in \xring \setminus \{0\}\). From \(\mF \mK =
  0\), we get \(\mP \vu = \mu \vv\). We claim that \(\mu\) is invertible modulo
  \(\polmod\): then, the sought vector \(\mP^{-1} \vv \rem \polmod\) can be
  obtained as \(\mu^{-1} \vu \rem \polmod\).

  To prove our claim, we show that \(\mu\) divides \(\det(\mP)\), which is
  sufficient since \(\det(\mP)\) is coprime with \(\polmod\). Let \(\mA =
  \det(\mP) \mP^{-1} \in \xmatspace{\dispRk}{\dispRk}\) be the adjugate of \(\mP\).
  Left-multiplying the above identity \(\mP \vu = \mu \vv\) by \(\mA\), we
  obtain \(\det(\mP) \vu = \mu \mA \vv\). It follows that \(\mu\) divides
  \(\det(\mP) u_i\) for \(1 \le i \le \dispRk\). Defining \(g = \gcd(\mu,
  \det(\mP))\), we deduce that \(\mu / g\) divides \(u_i \det(\mP) / g\), and
  since  \(\mu / g\) is coprime with \(\det(\mP) / g\), in fact \(\mu / g\)
  divides \(u_i\) for all \(i\). Thus \(\mu / g\) is a common divisor to all
  entries of \(\mK\), which implies that \(\mu/g\) is a nonzero constant from
  \(\K\). Indeed, the polynomial vector \(g\mK/\mu = [\begin{smallmatrix}
  g\vu/\mu \\ g \end{smallmatrix}]\) is in the kernel of \(\mF\) and thus must
  be a polynomial multiple of the basis \(\K\), which is not possible if
  \(\deg(\mu/g) \ge 1\). We have proved \(\mu/g \in \K\setminus\{0\}\), and
  therefore \(\mu\) divides \(\det(\mP)\).

  As for the cost bound, the main task is to compute a right kernel basis
  \(\mK\) of \(\mF\), for which we use \cite[Algo.\,1]{ZLS2012}. According to
  the analysis in \cite[Prop.\,B.1]{JeannerodNeigerSchostVillard2017}, since
  the sum of the entries of \(\shift\) is less than \(m+n\), this costs
  \(\bigO{\dispRk^{\expmm-1}\timepm{m+n} + \dispRk^\expmm
  \timepm{(m+n)/\dispRk} \log((m+n)/\dispRk)}\) operations in \(\field\). The
  superlinearity of \(\timepm{\cdot}\) implies that \(\dispRk^\expmm
  \timepm{(m+n)/\dispRk}\) is in \(\bigO{\dispRk^{\expmm-1}\timepm{m+n}}\)
  \cite[Chap.\,8.3]{ModernComputerAlgebra}, so computing \(\mK\) costs
  \(\bigO{\dispRk^{\expmm-1}\timepm{m+n}\log((m+n)/\dispRk)}\).
  Since the sum of the entries of \(\shift\) is less than \(m+n\), we have
  \(\cdeg_\shift(\mK) < m+n\) by \cite[Thm.\,3.6]{ZLS2012}. Since all entries of
  \(\shift\) are nonnegative, this implies that \(\deg(\vu) < m+n\) and
  \(\deg(\mu) < m+n\). Then, computing \(\mu^{-1} \vu \rem \polmod\) can be done
  using \(\bigO{\M(m+n) \log(m+n) + \dispRk \timepm{m+n}}\) operations in \(\field\) using
  \(\dispRk\) modular multiplications and one modular inversion
  \cite[Cor.\,11.11]{ModernComputerAlgebra}.
\end{proof}

\subsection{The Toeplitz-like case}
\label{sec:main:toeplitz}

Here, we describe our main algorithm for the Toeplitz-like case, and
prove the following result.

\begin{theorem}
  \label{thm:algo_toeplitz}%
  \cref{algo:StructuredSolve-Toeplitz} solves \cref{pbm:linsys,pbm:kernel} with
  \(\dispOp = \dispSte{\Zmat_{m,0}, \Zmat_{n,0}^{\trsp}}\), and under the
  assumption on \(\timepm{\cdot}\) and \(\expmm\) stated in
  \cref{sec:notation}, it uses
  \[
    \bigO{\dispRk^{\expmm-1} (\timepm{m} \log(m) + \timepm{n} \log(n)^2)}
  \]
  operations in \(\field\). It
  computes \((\ell, \vp, \vd, \shiftt, \vu)\) with
  \(\ell \in \{0,\ldots,\dispRk+1\}\), \(\vd = (d_1,\ldots,d_\ell) \in \N^\ell\), \(\shiftt =
  (t_1,\ldots,t_\ell) \in \N^\ell\), \(\vp = (p_1, \ldots, p_\ell) \in \xring^{\ell}\),
  and \(\vu\) is either \(\emptyset\) or in \(\matspace{n}{1}\),
  such that
  \begin{itemize}
    \item \(\vu\) is a linear system solution \(\mA \vu = \vv\) if one exists,
      otherwise \(\vu = \emptyset\);
    \item for all \(1 \le i \le \ell\), one has \(\deg(p_i) \le d_i \le n - t_i < n\);
    \item the nullspace \(\{\vz \in \matspace{n}{1} \mid \mA \vz = 0\}\) of
      \(\mA\) is the set of vectors \(\vz \in \matspace{n}{1}\)
      such that \(\vz\) is the coefficient vector of a polynomial in
      \[
        \left\{ \sum_{i = 1}^{\ell} x^{n-d_i-t_i} p_i {q_i} \;\Big|\; q_i \in \xring_{<t_i} \text{ for } 1 \le i \le \ell\right\} 
        \subseteq \xring_{<n}.
      \]
  \end{itemize}
\end{theorem}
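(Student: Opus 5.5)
The plan is to instantiate the three-step approach of \cref{sec:intro} on the polynomial form given by \cref{thm:linsys_poly:toeplitz}. The second modulus there is not \(\rem x^n\) but \(\quo x^{n-1}\), so a reversal is needed first. Writing \(c_j = h_j u \quo x^{n-1}\), we have \(\deg(c_j) < n\) and \(\polrev{n-1}{c_j} = \bar h_j \bar u \rem x^n\), where \(\bar h_j = \polrev{n-1}{h_j}\) and \(\bar u = \polrev{n-1}{u}\). This holds because \(\polrev{2n-2}{h_j u} = \bar h_j \bar u\), and the high part of \(h_ju\) becomes the low part of the reversal. Hence \(\vv = \mA\vu\) is equivalent to the existence of \(\vc \in \xmatspace{\dispRk}{1}_{<n}\) with \(v = [g_1 \cdots g_\dispRk]\vc \rem x^m\) and \(\polrev{n-1}{\vc} = \vh \bar u \rem x^n\), where \(\vh = [\bar h_j]_j\).

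\emph{Step 1.} Call \algoName{algo:VectorMPade} with \(d = m\), \(\polmod = x^m\), \(\mF = [g_1 \cdots g_\dispRk]\), shift \(0\), and right-hand side \(v\). This returns \(\mP\) in Popov form with \(\delta = \cdeg(\mP)\) and \(\sum\delta_i \le m\), together with \(\mu\) and \(\vsol\) satisfying \(\rdeg(\vsol) < \delta\) (\cref{prop:vmpade}).

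If \(\mu \neq 1\), \cref{lem:vmpade_generates_all} shows that there is no valid \(\vc\), so \(\vu = \emptyset\). In the nullspace case \(v = 0\), we have \(\mu = 1\) and \(\vsol = 0\). Otherwise \(\vc = \mP\lambda + \vsol\), and \cref{lem:degree_bound_lambda} applied with this \(n\) gives two consequences:
\begin{itemize}
\item \(\cdeg(\vsol) < n\) must hold, or there is no solution;
\item \(\deg(\lambda_i) < n - \delta_i\), so \(\lambda_i = 0\) whenever \(\delta_i \ge n\).
\end{itemize}
Conversely, any \(\lambda\) with these degree bounds gives \(\deg(\vc) < n\).

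Now apply \cref{lem:reverse_product}. With \(\bar\mP = \crev{\delta}{\mP}\), \(\bar\lambda = \rrev{n-\delta-1}{\lambda}\), and \(\bar\vp = \polrev{n-1}{\vsol}\), the second equation becomes \(\bar\mP\bar\lambda + \bar\vp = \vh\bar u \rem x^n\). The map \(\lambda \leftrightarrow \bar\lambda\) is a bijection between vectors satisfying the degree bounds on both sides.

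\emph{Step 2.} Since \(\bar\mP(0) = \clm(\mP)\) is unit lower triangular, \(\det\bar\mP\) is coprime with \(x^n\). Also \(\cdeg(\bar\mP) \le \delta\) with \(\sum\delta_i \le m\). So \cref{lem:Pinv_vec_mod}, called twice (or once on two columns), computes \(\mF' = \bar\mP^{-1}\vh \rem x^n\) and \(\vw = \bar\mP^{-1}\bar\vp \rem x^n\). The system is then equivalent to \(\bar\lambda = (\mF'\bar u - \vw) \rem x^n\) with \(\rdeg(\bar\lambda) < \shift\), where \(s_i = \max(0, n-\delta_i)\).

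\emph{Step 3.} Call \algoName{algo:SimultaneousMPade} with \(d = n\), \(\polmod = x^n\), \(\mF'\), \(\shift\), and \(\vw\). By \cref{lem:smpade_generates_all}, this yields a solution basis \((\ell, \vp', \shiftt)\) and a particular solution \(\ssol\), which together describe all admissible \(\bar u \in \xring_{<n}\).

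Output \(\vu\) as the coefficient vector of \(\polrev{n-1}{\ssol}\), or \(\emptyset\). For the nullspace we have \(\vw = 0\), so \(\bar u = \sum_i p'_i q_i\) with \(\deg(q_i) < t_i\). Let \(d_i\) be a bound on \(\deg(p'_i)\), and output \(p_i = \polrev{d_i}{p'_i}\). Then
\[
\polrev{n-1}{p'_i q_i} = x^{n-d_i-t_i}\, p_i\, \polrev{t_i-1}{q_i},
\]
and \(q_i \mapsto \polrev{t_i-1}{q_i}\) is a bijection of \(\xring_{<t_i}\). This gives the stated description of the nullspace. The inequality \(d_i \le n - t_i\) comes from the first entry of \(\cdeg_{-\bar\shift}\) of the basis matrix in \cref{def:smpade_basis}, whose shift entry there is \(n\). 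Together with \(t_i \ge 1\), this gives \(\deg(p_i) \le d_i \le n - t_i < n\).

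\emph{Cost.}
\begin{itemize}
\item Step 1 costs \(\bigO{\dispRk^\expmm\M(m/\dispRk)\log m} \subseteq \bigO{\dispRk^{\expmm-1}\M(m)\log m}\) by superlinearity, using \(\dispRk \le m\).
\item Step 2 costs \(\bigO{\dispRk^{\expmm-1}\M(m+n)\log(m+n)}\), which is split into the \(m\)- and \(n\)-terms.
\item Step 3 costs \(\bigO{\dispRk^\expmm\M(n/\dispRk)\log^2 n + \dispRk\M(n)\log^2 n}\), which is within \(\bigO{\dispRk^{\expmm-1}\M(n)\log(n)^2}\).
\end{itemize}
Reversals and the final \(\rem\) operations are linear in the size of the data.

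The main obstacle is proving the chain of equivalences carefully in both directions, so that the final set of \(\bar u\) exactly matches the solution set, and not merely contains it. The delicate points are the following.
\begin{itemize}
\item The indices with \(\delta_i \ge n\): there \(\lambda_i = 0\), the shift entry is \(s_i = 0\), and the corresponding rows of the reversal are zero.
\item The requirement \(\cdeg(\vsol) < n\) in the nonhomogeneous case.
\item Checking that multiplication by \(\bar\mP^{-1}\) modulo \(x^n\) is reversible, so that it introduces no spurious solutions.
\end{itemize}
I would settle these by writing out the bijections \(\lambda \leftrightarrow \bar\lambda\) and \(u \leftrightarrow \bar u\) explicitly, and checking each degree constraint at every step.
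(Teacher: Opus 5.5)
Your proposal is correct and follows the paper's proof essentially step by step. It uses the same three phases: vector Hermite--Pad\'e approximation via \cref{prop:vmpade}; degree bounds on \(\lambda\) from \cref{lem:degree_bound_lambda}, then reversal via \cref{lem:reverse_product} and inversion of \(\crev{\delta}{\mP}\) modulo \(x^n\) via \cref{lem:Pinv_vec_mod}; and finally \algoName{algo:SimultaneousMPade} followed by reversing the solution basis, with the same cost accounting. The one point you should state explicitly, as the paper does, is that \(\mP\) does not depend on \(\vv\), by uniqueness of the Popov basis; hence neither do \(\mF'\), \(\shift\), and the solution basis, so a single run with arbitrary \(\vv\) also yields the nullspace description.
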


Note that here, if the system is homogeneous with \(\vv = 0\), the returned
system solution \(\vu\) may be zero; yet, one can easily deduce a nonzero
solution (if one exists) from the computed nullspace description. 
We prove this theorem through \cref{lem:toeplitz_vmpade,
lem:toeplitz_transform, lem:toeplitz_smpade}  which describe the
properties and cost bounds for the three successive phases of the algorithm:
\zcref[range]{step:StructuredSolve-Toeplitz:start,
step:StructuredSolve-Toeplitz:pivdeg},
\zcref[range]{step:StructuredSolve-Toeplitz:transform,
step:StructuredSolve-Toeplitz:invert_h}, and
\zcref[range]{step:StructuredSolve-Toeplitz:start_smpade,
step:StructuredSolve-Toeplitz:final_ret}. The properties in these lemmas
directly give the ones stated in \cref{thm:algo_toeplitz}, and combining the
cost bounds in these lemmas yields
\begin{align*}
  O\Bigg( & \dispRk^\expmm \timepmPar{\frac{m}{\dispRk}} \log(m) 
              + \dispRk^{\expmm-1} \timepm{m+n} \log\!\left(\frac{m+n}{\dispRk}\right) \\
          & \quad\qquad + \dispRk^{\expmm} \timepmPar{\frac{n}{\dispRk}} \log\!\left(1 + \frac{n}{\dispRk}\right)^2 + \dispRk \timepm{n} \log(n)^2
            \Bigg).
\end{align*}
Simplifying this cost bound yields the one in \cref{thm:algo_toeplitz}.

\begin{algorithm}[ht]
  \algoCaptionLabel{StructuredSolve-Toeplitz}{}
  \begin{algorithmic}[1]
    \Require{%
      \algoitem matrices $\mG \in \K^{m \times \dispRk}$ and $\mH \in \K^{n
      \times \dispRk}$, with \(\dispRk \le \min(m,n)\)
      (they represent a Toeplitz-like matrix \(\mA \in
      \matspace{m}{n}\) through the displacement operator
      $\dispSte{\Zmat_{m,0}, \Zmat_{n,0}^{\trsp}}(\mA) = \mG \mH^{\trsp}$); \par
      \algoitem a vector $\vv \in \K^{m \times 1}$ (optional, default: \(\vv = 0\)). \par
    }
    \Ensure{\((\ell, \vp, \vd, \shiftt, \vu)\) that solves \cref{pbm:linsys,pbm:kernel} as in \cref{thm:algo_toeplitz}}

    \LComment{build polynomials and perform vector Hermite-Pad\'e approximation (\cref{lem:toeplitz_vmpade})}
    \label{step:StructuredSolve-Toeplitz:start}

    \State $v \in \xring_{<m} \gets \sum_{1 \le i \le m} \vv_i x^{i-1}$
          \label{step:StructuredSolve-Toeplitz:poly_formulation_v}
    \For{$1 \le j \le \alpha$}
      \State $g_j \in \xring_{<m} \gets \sum_{1 \le i \le m} \mG_{ij} x^{i-1}$
          \label{step:StructuredSolve-Toeplitz:poly_formulation_g}
      \State $\bar{h}_j \in \xring_{<n} \gets \sum_{1 \le i \le n} \mH_{ij} x^{i-1}$%
          \label{step:StructuredSolve-Toeplitz:poly_formulation_h}
    \EndFor

    \State $(\mP, \mu, \vsol) \gets
                  \Call{algo:VectorMPade}{m, \dispRk, x^m,
                                          [g_1 \;\; \cdots \;\; g_\dispRk],
                                          (0,\ldots,0), v}$
          \label{step:StructuredSolve-Toeplitz:vmpade}

    \State $\delta = (\delta_1,\ldots,\delta_\dispRk) \in \N^\dispRk \gets (\deg(\mP_{11}), \ldots, \deg(\mP_{\dispRk\dispRk}))$
        \Comment{\(\delta = \rdeg(\mP) = \cdeg(\mP)\)}
        \label{step:StructuredSolve-Toeplitz:pivdeg}

    \LComment{transform into simultaneous Hermite-Pad\'e via reversal + inversion (\cref{lem:toeplitz_transform})}
         \label{step:StructuredSolve-Toeplitz:transform}

    \State $\bar\mP \in \xmatspace{\dispRk}{\dispRk} \gets \mP(x^{-1}) \diag{x^{\delta_1},\ldots,x^{\delta_\dispRk}}$;
        \Comment{column reversal \(\crev{\delta}{\mP}\)}
        \label{step:StructuredSolve-Toeplitz:crevP}

    \State \(\vw \in \xmatspace{\dispRk}{1}_{<n} \gets
            x^{n-\minmn} (\bar{\mP}^{-1} \polrev{\minmn-1}{\vsol} \rem x^{\minmn})\)
            where \(\minmn = \min(\max(\delta),n)\)
           \label{step:StructuredSolve-Toeplitz:invert_sol}

    \State \(\mF \in \xmatspace{\dispRk}{1}_{<n} \gets
                  \bar{\mP}^{-1} [\bar{h}_1 \;\; \cdots \;\; \bar{h}_\dispRk]^{\trsp} \rem x^{n}\)
           \label{step:StructuredSolve-Toeplitz:invert_h}

    \LComment{perform simultaneous Hermite-Pad\'e approximation (\cref{lem:toeplitz_smpade})}
           \label{step:StructuredSolve-Toeplitz:start_smpade}

    \State \(\shift \in \N^\dispRk \gets (\max(0,n-\delta_1), \ldots, \max(0,n-\delta_\dispRk))\)

    \State $(\ell,\bar{\vp},\shiftt,\ssol[\vw]) \gets \Call{algo:SimultaneousMPade}{%
                                                             n, \dispRk, x^n,
                                                             \mF, \shift, \vw}$
           \label{step:StructuredSolve-Toeplitz:smpade}

    \If{$\mu \neq 1$ \OR $\deg(\vsol) \ge n$ \OR $\ssol[\vw] = \emptyset$}
      \Comment{no solution to nonhomogeneous system}
      \State $\vu \gets \emptyset$
       \label{step:StructuredSolve-Toeplitz:nosol}
    \Else
      \State$\vu \in \matspace{n}{1} \gets [u_1 \;\; u_2 \;\; \cdots \;\; u_n]^{\trsp}$,
      where \(\ssol[\vw] = u_n + u_{n-1} x + \cdots + u_1x^{n-1}\)
       \label{step:StructuredSolve-Toeplitz:sol}
    \EndIf

    \State \(\vp \in \xmatspace{1}{\ell}_{<n} \gets \crev{\vd}{\bar{\vp}} = \bar{\vp}(1/x) \diag{x^{d_1},\ldots,x^{d_\ell}}\),
    where \(\vd = (d_1, \ldots, d_\ell) \gets \cdeg(\bar{\vp})\)
       \label{step:StructuredSolve-Toeplitz:rev_p}

    \State \Return $(\ell, \vp, \vd, \shiftt, \vu)$
    \label{step:StructuredSolve-Toeplitz:final_ret}
  \end{algorithmic}
\end{algorithm}

\begin{lemma}
  \label{lem:toeplitz_vmpade}%
  \zcref[range]{step:StructuredSolve-Toeplitz:start,
  step:StructuredSolve-Toeplitz:pivdeg} of \cref{algo:StructuredSolve-Toeplitz}
  use \(\bigO{\dispRk^\expmm \timepm{m/\dispRk} \log(m)}\) operations in
  \(\field\) and compute \(v, h_j, \mP, \mu, \vsol, \delta\) such that,
  for any vector \(\vu \in \matspace{n}{1}\), 
  \begin{itemize}
    \item \(\mA \vu = \vv\) if and only if \(\mu = 1\) and 
      \(\vc = \mP \lambda + \vsol\) for some \(\lambda \in \xmatspace{\dispRk}{1}\);

    \item \(\mA \vu = 0\) if and only if \(\vc = \mP \lambda\) for some
      \(\lambda \in \xmatspace{\dispRk}{1}\);
  \end{itemize}
  where we define
  the vector
  \[
    \vc =
    \begin{bmatrix}
      c_1 \\ \vdots \\ c_\dispRk
    \end{bmatrix}
    =
    \begin{bmatrix}
      h_1 \\ \vdots \\ h_\dispRk
    \end{bmatrix}
    u
    \quo x^{n-1}
  \]
  in \(\xmatspace{\dispRk}{1}_{<n}\), and where \(u \in \xring_{<n}\)
  is the polynomial with coefficient vector \(\vu\).
\end{lemma}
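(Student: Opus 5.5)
Note first that the algorithm builds \(\bar h_j\) from the coefficients of \(\mH_{*j}\) in natural order. In \cref{thm:linsys_poly:toeplitz}, \(h_j\) has coefficient vector \(\vecrev{\mH_{*j}}\), so \(h_j = \polrev{n-1}{\bar h_j}\). Although the lemma lists \(h_j\) among the computed objects, this polynomial is never formed explicitly; it is just a re-indexing of the same coefficients, which costs nothing and serves only to make \(\vc\) well defined. Also, \(v\) and the \(g_j\) are read off directly from \(\vv\) and the columns of \(\mG\), so they cost nothing either.

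For the characterization, I would start from \cref{thm:linsys_poly:toeplitz}: \(\mA\vu=\vv\) holds if and only if \(v = [g_1\;\cdots\;g_\dispRk]\,\vc \rem x^m\), where \(\vc = \vh u \quo x^{n-1}\). Each entry of \(\vc\) has degree at most \((2n-2)-(n-1) < n\), so \(\vc\in\xmatspace{\dispRk}{1}_{<n}\). The equation \(v = [g_1\;\cdots\;g_\dispRk]\,\vc \rem x^m\) is equivalent to \([g_1\;\cdots\;g_\dispRk]\,\vc = v \bmod x^m\), since \(\deg v < m\). I would then apply \cref{prop:vmpade} with input \((m,\dispRk,x^m,[g_1\;\cdots\;g_\dispRk],0,v)\): its output \((\mP,\mu,\vsol)\) is a solution of \cref{pbm:vmpade}. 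By \cref{lem:vmpade_generates_all}, this equation has a solution in \(\vc\) if and only if \(\mu=1\), and in that case its solution set is exactly \(\{\mP\lambda+\vsol \mid \lambda\in\xmatspace{\dispRk}{1}\}\). This gives the first item.

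For the second item, I would apply the same reasoning with \(\vv=0\), i.e.\ \(v=0\). The homogeneous equation \([g_1\;\cdots\;g_\dispRk]\,\vc = 0 \bmod x^m\) has solution set equal to the column span of \(\mP\), by the basis property in \cref{pbm:vmpade}. This basis \(\mP\) is the same matrix as in the nonhomogeneous call, because the matrix component of the output of \cref{pbm:vmpade} depends only on \((\mF,\polmod,\shift)\) and not on \(v\); uniqueness of the Popov form guarantees this. So \(\mA\vu=0\) if and only if \(\vc=\mP\lambda\) for some \(\lambda\). The only subtlety I anticipate is this identification of the homogeneous module basis with the \(\mP\) returned by the nonhomogeneous call; \cref{thm:vmpade} provides it directly.

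Finally, \cref{step:StructuredSolve-Toeplitz:pivdeg} reads off the diagonal degrees. Since \(\mP\) is in Popov form, we have \(\delta=\cdeg(\mP)=\rdeg(\mP)\), and this costs nothing. The cost of the whole phase is dominated by the call in \cref{step:StructuredSolve-Toeplitz:vmpade}. The hypotheses of \cref{prop:vmpade} hold: \(\dispRk\le m\), and the shift is zero, so \(\sum_i (s_i-\min(\shift)) = 0\). Hence this call uses \(\bigO{\dispRk^\expmm \timepm{m/\dispRk}\log(m)}\) operations in \(\field\), which is the claimed bound.
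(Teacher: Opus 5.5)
Your proposal is correct and follows essentially the same route as the paper. Both use \cref{thm:linsys_poly:toeplitz} to turn \(\mA\vu=\vv\) into the modular equation on \(\vc\), invoke \cref{prop:vmpade} for the cost and output properties, conclude with \cref{lem:vmpade_generates_all}, and handle the homogeneous case by observing that \(\mP\) does not depend on \(\vv\). Your appeal to uniqueness of the Popov form is harmless but not needed, since the basis property in the output specification of \cref{pbm:vmpade} already applies directly to the \(\mP\) returned by the nonhomogeneous call.
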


\begin{proof}
  The first steps at \zcref[range]{step:StructuredSolve-Toeplitz:poly_formulation_g,
    step:StructuredSolve-Toeplitz:poly_formulation_h,
  step:StructuredSolve-Toeplitz:poly_formulation_v}
  construct polynomials from \(\mA\) and \(\vv\), at cost \(\bigO{1}\): \(g_j\) and \(v\) are as in
  \cref{thm:linsys_poly:toeplitz}, and \(\bar{h}_j = \polrev{n-1}{h_j}\) for
  \(h_j\) as in \cref{thm:linsys_poly:toeplitz}.
  According to \cref{thm:linsys_poly:toeplitz},
  \(\mA \vu = \vv\) is equivalent to
  \begin{equation}
    \label{eqn:eq_poly_form}%
    v =
    \begin{bmatrix} g_1 & \cdots & g_\dispRk \end{bmatrix}
    \vc
    \rem x^m,
  \end{equation}
  for \(\vc = [c_j]_j \in \xmatspace{\dispRk}{1}_{<n}\) as in the lemma. The degree bound
  comes from the fact that \(\deg(c_j) = \deg(h_j u \quo x^{n-1}) < n\) for all \(j\).

  According to \cref{prop:vmpade},
  \cref{step:StructuredSolve-Toeplitz:vmpade,
  step:StructuredSolve-Toeplitz:pivdeg} use \(\bigO{\dispRk^\expmm
  \timepm{m/\dispRk} \log(m)}\) and compute \(\mP\) in Popov form (for the
  shift \((0,\ldots,0)\)) and \(\vsol\) such that \(\rdeg(\vsol) < \delta =
  \cdeg(\mP) = \rdeg(\mP)\). The latter equalities and the fact that
  \(\rdeg(\mP)\) coincides with the diagonal degrees \(\delta\) (as computed in 
  \cref{step:StructuredSolve-Toeplitz:pivdeg}) follows from the definition of
  Popov forms. Note that \cref{prop:vmpade} also ensures \(\delta_1 + \cdots +
  \delta_\dispRk \le m\).
  Using \cref{lem:vmpade_generates_all}, from \cref{eqn:eq_poly_form} we get
  the stated equivalence
  \[
    \mA \vu = \vv
    \quad \Leftrightarrow\quad
    \mu = 1 \text{ and }
    \vc
    =
    \mP \lambda + \vsol
    \text{ for some } \lambda \in \xmatspace{\dispRk}{1}
    .
  \]
  All our reasoning above also applies to finding the second equivalence,
  concerning \(\mA \vu = 0\), by considering \(\vv=0\). In this case,
  \(\mP\) is unchanged (it does not depend on \(\vv\)), and we
  have \(v = 0\), \(\mu = 1\), and \(\vsol = 0\), hence the claimed
  equivalence.
\end{proof}

\begin{lemma}
  \label{lem:toeplitz_transform}%
  We follow on with notation from \cref{lem:toeplitz_vmpade} and
  \cref{algo:StructuredSolve-Toeplitz}. Define the integer tuple
 \(\shift = (\max(0,n-\delta_1), \ldots,
  \max(0,n-\delta_\dispRk)) \in \N^\dispRk\).
  \zcref[range]{step:StructuredSolve-Toeplitz:transform,
  step:StructuredSolve-Toeplitz:invert_h} of this algorithm use
  \(\bigO{\dispRk^{\expmm-1} \timepm{m+n} \log((m+n)/\dispRk)}\) operations in
  \(\field\) to
  compute two vectors \(\mF\) and \(\vw\) in \(\xmatspace{\dispRk}{1}_{<n}\)
  such that, for any \(\vu \in \matspace{n}{1}\), 
  \begin{itemize}
    \item  
    \(\mA \vu = \vv\)
    if and only if
    \(\mu = 1\) and
    \(\deg(\vsol) < n\) and
    \(\rdeg((\mF \bar{u} - \vw) \rem x^n) < \shift\);
    \item
    \(\mA \vu = 0\)
    if and only if
    \(\rdeg(\mF \bar{u} \rem x^n) < \shift\);
  \end{itemize}
  where \(\bar{u} \in \xring_{<n}\) is the polynomial with coefficient vector
  \(\vecrev{\vu}\).
\end{lemma}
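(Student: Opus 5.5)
Starting from \cref{lem:toeplitz_vmpade}, I would turn the condition ``\(\vc = \mP\lambda + \vsol\) for some \(\lambda\)'' into a condition on \(\bar u\) alone. Three facts are used: \(\deg(\vc) < n\); \(\mP\) is in Popov form with \(\rdeg(\vsol) < \delta = \cdeg(\mP)\); and reversal turns the quotient \(h_j u \quo x^{n-1}\) into a truncated product. For the reversal, \(h_j u\) has degree at most \(2n-2\), so \(\polrev{n-1}{h_j u \quo x^{n-1}} = \bar h_j \bar u \rem x^n\), with \(\bar h_j = \polrev{n-1}{h_j}\) and \(\bar u = \polrev{n-1}{u}\). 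Writing \(\bar{\vh} = [\bar h_1 \;\cdots\; \bar h_\dispRk]^\trsp\), this gives \(\polrev{n-1}{\vc} = \bar{\vh}\bar u \rem x^n\).

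For the forward direction, suppose \(\mu=1\) and \(\vc = \mP\lambda+\vsol\). \Cref{lem:degree_bound_lambda} applies, since \(\mP\) is weak Popov and \(\rdeg(\vsol)<\delta\). It yields \(\deg(\vsol)<n\), \(\deg(\mP\lambda)<n\), and \(\rdeg(\lambda)<n-\delta\); in particular \(\lambda_i=0\) whenever \(\delta_i\ge n\), so effectively \(\rdeg(\lambda)<\shift\). Next I apply \cref{lem:reverse_product} twice. Once to \(\mP\lambda\), giving \(\polrev{n-1}{\mP\lambda} = \bar\mP\bar\lambda\) with \(\bar\lambda = \rrev{\shift-1}{\lambda}\) and \(\rdeg(\bar\lambda)<\shift\). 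Once to \(\vsol\), written as \(\mP\cdot \mP^{-1}\vsol\), or more simply: since \(\rdeg(\vsol)<\delta\) and \(\deg\vsol<n\), the vector \(\polrev{n-1}{\vsol}\) is \(x^{n-\minmn}\polrev{\minmn-1}{\vsol}\). Together these give
\[
\bar{\vh}\bar u \rem x^n = \bar\mP\bar\lambda + x^{n-\minmn}\polrev{\minmn-1}{\vsol}.
\]
Now \(\bar\mP(0) = \clm(\mP) \) is unit lower triangular, hence invertible, so \(\bar\mP\) is invertible modulo \(x^n\). Multiplying by \(\bar\mP^{-1}\) modulo \(x^n\) yields \(\mF\bar u - \vw = \bar\lambda \bmod x^n\). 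One point needs checking here: \(x^{n-\minmn}(\bar\mP^{-1}\polrev{\minmn-1}{\vsol}\rem x^\minmn)\) must agree with \(\bar\mP^{-1}x^{n-\minmn}\polrev{\minmn-1}{\vsol} \rem x^n\). This holds by multiplying congruences modulo \(x^\minmn\) by \(x^{n-\minmn}\). Since \(\rdeg(\bar\lambda)<\shift\le n\), we get \(\rdeg((\mF\bar u-\vw)\rem x^n)<\shift\).

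For the converse, set \(\bar\lambda = (\mF\bar u-\vw)\rem x^n\), which has \(\rdeg<\shift\), and un-reverse it into a \(\lambda\) with \(\rdeg(\lambda)<n-\delta\). Running the equalities backwards, \(\bar\mP\bar\lambda + \polrev{n-1}{\vsol}\) and \(\polrev{n-1}{\vc}\) agree modulo \(x^n\). Both have degree \(<n\): the first by \cref{lem:reverse_product} and \(\deg\vsol<n\), the second trivially. Hence they are equal, and un-reversing gives \(\vc=\mP\lambda+\vsol\). \Cref{lem:toeplitz_vmpade} then gives \(\mA\vu=\vv\). The homogeneous statement is the special case \(\vsol=0\), \(\vw=0\).

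For the cost, I would invoke \cref{lem:Pinv_vec_mod} with modulus \(x^n\) and with \(x^\minmn\). It applies because \(\det\bar\mP(0)\neq0\) makes \(\det\bar\mP\) coprime with \(x\), \(\cdeg(\bar\mP)\le\delta\) with \(\sum\delta_i\le m\), and \(\minmn\le n\); if \(\minmn<\dispRk\), the naive bound is fine. The reversals cost \(O(\dispRk(m+n))\). The main obstacle I expect is the careful bookkeeping of the reversal indices, in particular the handling of entries with \(\delta_i\ge n\) and of the \(x^{n-\minmn}\) shift in \(\vw\).
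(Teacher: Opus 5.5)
Your proposal is correct and follows essentially the paper's route. You use \cref{lem:degree_bound_lambda} to get \(\deg(\vsol)<n\) and \(\rdeg(\lambda)<n-\delta\), \cref{lem:reverse_product} to reverse \(\mP\lambda\), the invertibility of \(\bar{\mP}\) modulo \(x^n\) via \(\bar{\mP}(0)=\clm(\mP)\), and \cref{lem:Pinv_vec_mod} for the cost. You differ only in splitting the equivalence into two directions and in making explicit several details the paper leaves implicit: the identity \(\polrev{n-1}{\vc}=[\bar{h}_1 \;\cdots\; \bar{h}_\dispRk]^\trsp\bar{u} \rem x^n\), the \(x^{n-\minmn}\) identity for \(\vw\), and the degree comparison that turns the congruence into an equality in the converse.

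One small loose end is your unjustified ``naive bound'' for the case \(\minmn<\dispRk\). It is simpler to note that in this case \(\vw=\bar{\mP}^{-1}\polrev{n-1}{\vsol}\rem x^n\) can be computed by \cref{lem:Pinv_vec_mod} with modulus \(x^n\), whose degree \(n\ge\dispRk\) meets that lemma's hypotheses.
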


\begin{proof}
  The equivalence of \cref{lem:toeplitz_vmpade} can be augmented with degree
  bounds and then reversed with respect to the maximum possible degree of the
  right- and left-hand sides, as follows:
  \begin{align*}
    \mA \vu = \vv
    \;\Leftrightarrow&\;
    \mu = 1 \;\text{and}\;
    \vc
    =
    \mP \lambda + \vsol
    \;\text{for some}\; \lambda \in \xmatspace{\dispRk}{1}
    \\
    \;\Leftrightarrow&\;
    \mu = 1 \text{ and }
    \vc
    =
    \mP \lambda + \vsol
    \;\text{for some}\; \lambda \in \xmatspace{\dispRk}{1}
    \;\text{with}\; \rdeg(\lambda) < \shift
    \\
    \;\Leftrightarrow&\;
    \mu = 1 \;\text{and}\; \deg(\vsol) < n \;\text{and}\;
    \polrev{n-1}{\vc}
    =
    \bar{\mP} \bar{\lambda} + \polrev{n-1}{\vsol} \\
    & \;\text{for some}\; \bar{\lambda} \in \xmatspace{\dispRk}{1}
    \;\text{with}\; \rdeg(\bar{\lambda}) < \shift.
  \end{align*}
  The first equivalence is from \cref{lem:toeplitz_vmpade}.
  The implication \(\Leftarrow\) in the second equivalence is obvious, while
  the implication \(\Rightarrow\) follows from \cref{lem:degree_bound_lambda},
  which ensures that \(\cdeg(\mP\lambda) < n\), \(\cdeg(\vsol) < n\), and
  \(\rdeg(\lambda) < \shift = n - \delta\). Indeed, its assumptions are
  satisfied: by construction \(\mP\) is in Popov form with \(\delta =
  \cdeg(\mP)\) and we have \(\rdeg(\vsol) < \delta\) and \(\cdeg(\mP\lambda +
  \vsol) = \cdeg(\vc) < n\).
  For the implication \(\Rightarrow\) of the third equivalence, we apply
  \cref{lem:reverse_product} with \(\mQ = \lambda\), which gives
  \(\polrev{n-1}{\mP\lambda} = \bar{\mP} \bar{\lambda}\), for \(\bar{\mP} =
  \crev{\delta}{\mP}\) as in \cref{step:StructuredSolve-Toeplitz:crevP}, and
  \(\bar{\lambda} = \rrev{\shift-1}{\lambda}\) which is such that
  \(\rdeg(\bar{\lambda}) < \shift\). Its other direction \(\Leftarrow\) is
  proved similarly: using \cref{lem:reverse_product}, we also get
  \(\polrev{n-1}{\bar{\mP} \bar{\lambda}} = \mP\lambda\).

  By construction, the constant matrix \(\bar{\mP}(0)\) is equal to the leading
  matrix \(\lm(\mP)\), which is invertible since \(\mP\) is in reduced form.
  Thus \(\det(\bar{\mP})\) is coprime with \(x\), meaning that \(\bar{\mP}\) is
  invertible modulo \(x^n\). Since we restrict to \(\bar{\lambda}\)
  with \(\deg(\bar{\lambda}) < \max(\shift) \le n\), we have the
  equivalence
  \[
    \polrev{n-1}{\vc} = \bar{\mP} \bar{\lambda} + \polrev{n-1}{\vsol}
    \quad\Leftrightarrow\quad
    \bar{\mP}^{-1} (\polrev{n-1}{\vc} - \polrev{n-1}{\vsol}) \rem x^n = \bar{\lambda}.
  \]
  We have proved that
  \begin{equation}
    \label{eqn:equiv_reversed}%
    \mA \vu = \vv
    \;\Leftrightarrow\;
    \mu = 1
    \;\text{and}\;
    \deg(\vsol) < n
    \;\text{and}\;
    \rdeg(\bar{\mP}^{-1} (\polrev{n-1}{\vc} - \polrev{n-1}{\vsol}) \rem x^n)
    < \shift.
  \end{equation}

  Let \(\minmn = \min(\max(\delta),n)\).
  Since \(\deg(\vsol) < \max(\delta)\), we have \(\polrev{n-1}{\vsol} =
  x^{n - \minmn} \polrev{\minmn-1}{\vsol}\). It follows that the vector
  \(\vw \in \xmatspace{\dispRk}{1}_{<n}\)
  computed at \cref{step:StructuredSolve-Toeplitz:invert_sol} satisfies
  \[
    \vw = x^{n-\minmn} (\bar{\mP}^{-1} \polrev{\minmn-1}{\vsol} \rem x^{\minmn})
        = \bar{\mP}^{-1}\polrev{n-1}{\vsol} \rem x^n .
  \]
  Since \(\delta_1+\cdots+\delta_\dispRk \le m\) and \(\minmn \le n\),
  computing \(\vw\) uses \(\bigO{\dispRk^{\expmm-1} \timepm{m+n}
  \log((m+n)/\dispRk)}\) operations in \(\field\), by
  \cref{lem:Pinv_vec_mod}.
  The computation of \(\mF\in \xmatspace{\dispRk}{1}_{<n}\) at
  \cref{step:StructuredSolve-Toeplitz:invert_h} has the same cost bound, 
  and this vector satisfies
  \[
    \mF \bar{u} \rem x^{n}
    =
      \bar{\mP}^{-1} [\bar{h}_1 \;\; \cdots \;\; \bar{h}_\dispRk]^{\trsp}
      \bar{u}
      \rem x^n
    = \bar{\mP}^{-1}\polrev{n-1}{\vc} \rem x^n,
  \]
  where \(\bar{u} = \polrev{n-1}{u} \in \xring_{<n}\),
  and where \(\bar{h}_j = \polrev{n-1}{h_j}\) is computed at
  \cref{step:StructuredSolve-Toeplitz:poly_formulation_h}. 

  Substituting in \cref{eqn:equiv_reversed}, we obtain
  \[
    \mA \vu = \vv
    \quad\Leftrightarrow\quad
    \mu = 1
    \;\text{and}\;
    \deg(\vsol) < n
    \;\text{and}\;
    \rdeg((\mF \bar{u} - \vw) \rem x^{n}) < \shift.
  \]
  This concludes the proof of the equivalence in the first item of the lemma.

  For the second item, as in the proof of \cref{lem:toeplitz_vmpade}, all the above arguments
  apply to the homogeneous case \(\mA \vu = 0\), by considering
  \(\vv=0\). In this case, \(\mP\) is unchanged (it does not depend on
  \(\vv\)), and we have \(\mu = 1\), \(\vsol = 0\), and \(\vw = 0\), hence the
  above equivalence simplifies into
  \(\mA \vu = 0 \Leftrightarrow \rdeg(\mF \bar{u} \rem x^n) < \shift\).
\end{proof}

\begin{lemma}
  \label{lem:toeplitz_smpade}%
  We follow on with notation from
  \cref{lem:toeplitz_vmpade,lem:toeplitz_transform} and
  \cref{algo:StructuredSolve-Toeplitz}.
  \zcref[range]{step:StructuredSolve-Toeplitz:start_smpade,
  step:StructuredSolve-Toeplitz:final_ret} of this algorithm use
  \(\bigO{\dispRk^{\expmm} \timepm{n/\dispRk} \log(1 + n/\dispRk)^2 + \dispRk
  \timepm{n} \log(n)^2}\) operations in \(\field\), and compute $(\ell, \vp,
  \vd, \shiftt, \vu)$ such that
  \begin{itemize}
    \item \(\vu \in \matspace{n}{1}\) is a linear system solution \(\mA \vu = \vv\)
      if one exists, otherwise \(\vu = \emptyset\);
    \item \(\ell \in \{0,\ldots,\dispRk+1\}\),
      \(\vd \in \{0,\ldots,n-1\}^\ell\),
      and \(\shiftt \in \{1,\ldots,n\}^\ell\),
      with \(\max(\vd + \shiftt) \le n\);
    \item \(\vp \in \xmatspace{1}{\ell}\) satisfies \(\cdeg(\vp) \le \vd\);
    \item the nullspace \(\{\vz \in \matspace{n}{1} \mid \mA \vz = 0\}\) of
      \(\mA\) is the set of vectors \(\vz \in \matspace{n}{1}\)
      such that \(\vz\) is the coefficient vector of a polynomial in
      \(
        \{ \vp \mX \vq \mid \vq \in \xmatspace{\ell}{1} \text{ and } \rdeg(\vq) < \shiftt\}
      \),
      where we define the diagonal matrix \(\mX = \diag{x^{n-d_1-t_1},\ldots,x^{n-d_\ell-t_\ell}} \in \xmatspace{\ell}{\ell}\).
  \end{itemize}
\end{lemma}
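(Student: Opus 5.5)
The plan is to combine \cref{lem:toeplitz_transform} with \cref{prop:smpade} and \cref{lem:smpade_generates_all}, and then undo the reversals.

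\emph{Cost.} The cost bound is immediate. The call at \cref{step:StructuredSolve-Toeplitz:smpade} has parameters \(d=n\) and \(\dispRk\le n\), and \cref{prop:smpade} gives exactly the stated bound. Its input is valid: \(\mF,\vw\in\xmatspace{\dispRk}{1}_{<n}\) and \(0\le s_i\le n\). The remaining steps only reverse \(\ell\le\dispRk+1\) polynomials of degree less than \(n\) and test \(\mu\) and \(\deg(\vsol)\). These cost \(\bigO{\dispRk n}\), which is absorbed in the bound.

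\emph{System solution.} By \cref{lem:toeplitz_transform}, \(\mA\vu=\vv\) holds if and only if \(\mu=1\), \(\deg(\vsol)<n\), and \(\bar u\) is a solution of the nonhomogeneous simultaneous Hermite-Pad\'e problem \(\rdeg((\mF\bar u-\vw)\rem x^n)<\shift\) with \(\deg(\bar u)<n\). By \cref{prop:smpade}, \(\ssol[\vw]\neq\emptyset\) exactly when such a \(\bar u\) exists, and then \(\ssol[\vw]\) is one. Step \ref{step:StructuredSolve-Toeplitz:sol} defines \(\vu\) so that \(\vecrev{\vu}\) is the coefficient vector of \(\ssol[\vw]\), i.e.\ \(\bar u=\ssol[\vw]\). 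This gives the first item, including the case \(\vu=\emptyset\).

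\emph{Ranges of \(\ell,\vd,\shiftt\).} From \cref{def:smpade_basis} we get \(\ell\le\dispRk+1\), \(\deg(\bar\vp)<n\), and \(t_i\in\{1,\ldots,n\}\). Define \(d_i=\deg(\bar p_i)\in\{0,\ldots,n-1\}\), which is legitimate since the entries are nonzero; then \(\cdeg(\vp)\le\vd\) holds by definition of column reversal. The bound \(d_i+t_i\le n\) needs an argument. For each basis element, \(\bar p_i\) is itself a solution. Consider \(x^{k}\bar p_i\), which has the same shifted column-degree increased by \(k\). Since \(\mP\) is \(-\bar\shift\)-column reduced with \(\cdeg_{-\bar\shift}\) of column \(i\) equal to \(-t_i\), the first row gives \(\deg(\bar p_i)-n\le -t_i\), i.e.\ \(d_i\le n-t_i\). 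Thus the bound is simply the first-row contribution to the shifted column degree.

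\emph{Nullspace.} By the second item of \cref{lem:toeplitz_transform} and \cref{eqn:smpade_homogeneous_sols}, \(\mA\vz=0\) holds if and only if \(\bar z=\bar\vp\vq\) for some \(\vq\) with \(\rdeg(\vq)<\shiftt\). Here \(\bar z\) is the polynomial with coefficient vector \(\vecrev{\vz}\), so \(z=\polrev{n-1}{\bar z}\). It remains to reverse, and this is the step I expect to need the most care. Apply \cref{lem:reverse_product} with \(\mP=\bar\vp\), \(\delta=\vd\), and \(\mQ=\vq\). The lemma requires \(\rdeg(\vq)<n-\vd\), but we only know \(\rdeg(\vq)<\shiftt\le n-\vd\), which suffices. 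It yields
\[
  \polrev{n-1}{\bar\vp\vq}=\crev{\vd}{\bar\vp}\,\rrev{n-\vd-1}{\vq}.
\]
Write \(\rrev{n-d_i-1}{q_i}=x^{n-d_i-t_i}\,\polrev{t_i-1}{q_i}\), which is valid since \(\deg q_i<t_i\). Setting \(\tilde q_i=\polrev{t_i-1}{q_i}\), the map \(q_i\mapsto\tilde q_i\) is a bijection of \(\xring_{<t_i}\). Hence \(z=\vp\mX\tilde\vq\) with \(\rdeg(\tilde\vq)<\shiftt\), and conversely every such expression arises in this way. This gives the last item.
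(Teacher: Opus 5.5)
Your proof is correct and follows the paper's proof essentially step by step. The cost comes from \cref{prop:smpade}; the system solution and the nullspace come from \cref{lem:toeplitz_transform} together with the specification of \cref{pbm:smpade} and \cref{lem:smpade_generates_all}; the bound \(d_i+t_i\le n\) comes from the first-row term of \(\cdeg_{-\bar{\shift}}\); and the final reversal goes through \cref{lem:reverse_product} with the substitution \(q_i \mapsto \polrev{t_i-1}{q_i}\). The sentence about \(x^k\bar p_i\) plays no role and can be deleted, and reducedness is not needed for the bound \(d_i+t_i\le n\), only the definition of the shifted column degree.
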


\begin{proof}
  By \cref{prop:smpade}, \cref{step:StructuredSolve-Toeplitz:smpade} costs
  \(\bigO{\dispRk^{\expmm} \timepm{n/\dispRk} \log(1 + n/\dispRk)^2 + \dispRk
  \timepm{n} \log(n)^2}\) and computes \((\ell,\bar{\vp}, \shiftt,
  \ssol[\vw])\) which solves \cref{pbm:smpade}.

  In particular, \(\ssol[\vw]\) is either a polynomial in \(\xring_{<n}\) such
  that \(\rdeg((\mF \ssol[\vw] - \vw) \rem x^n) < \shift\) if one exists,
  otherwise it is \(\ssol[\vw] = \emptyset\). Thus,
  \cref{step:StructuredSolve-Toeplitz:nosol} correctly sets \(\vu = \emptyset\)
  precisely when the linear system defined by \(\mA\) and \(\vv\) admits no
  solution \(\vu\), according to \cref{lem:toeplitz_transform}. On the
  other hand, if \(\ssol[\vw] \neq \emptyset\), and if the other requirements
  for the existence of a solution \(\mu = 1\) and  \(\deg(\vsol) < n\) also
  hold, then again according to \cref{lem:toeplitz_transform}, 
  \cref{step:StructuredSolve-Toeplitz:sol} correctly sets \(\vu\) to the
  solution \(\mA\vu = \vv\)  formed by the reversed coefficient vector of
  \(\ssol[\vw]\).

  This shows the first item in the lemma. Except for the bound on \(\vd +
  \shiftt\), the second item follows directly from the definition of a solution
  basis \((\ell,\bar{\vp}, \shiftt)\) for \((x^n, \mF, \shift)\) (see
  \cref{def:smpade_basis}). Since \cref{step:StructuredSolve-Toeplitz:rev_p}
  computes \(\vp = \crev{\vd}{\bar{\vp}}\) where \(\vd = \cdeg(\bar{\vp})\), we
  deduce \(\cdeg(\vp) \le \vd\) as in the third item. The third item of
  \cref{def:smpade_basis} implies in particular that \(\cdeg_{-n}(\bar{\vp})
  \le -\shiftt\). Since \(\cdeg_{-n}(\bar{\vp}) = \cdeg(\bar{\vp}) - (n,\ldots,n)
  = \vd - (n,\ldots,n)\), this gives \(\max(\vd + \shiftt) \le n\).

  It remains to show the last item, about the nullspace of \(\mA\). By the
  second item in \cref{lem:toeplitz_transform}, this nullspace is the set
  of vectors \(\vz \in \matspace{n}{1}\) such that \(\vz\) is the coefficient
  vector of \(\polrev{n-1}{\bar{u}}\), for any polynomial \(\bar{u} \in
  \xring_{<n}\)  such that  \(\rdeg(\mF \bar{u} \rem x^n) < \shift\). On the
  other hand, \cref{lem:smpade_generates_all} states that
  \[
    \{\bar{u} \in \xring_{<n} \mid \rdeg(\mF \bar{u} \rem x^n) < \shift\}
    =
    \{\bar{\vp} \bar{\vq} \mid \bar{\vq} \in \xmatspace{\ell}{1}, \rdeg(\bar{\vq}) < \shiftt\}.
  \]
  Therefore, to conclude the proof, it remains to show that
  \[
    \{\polrev{n-1}{\bar{\vp} \bar{\vq}} \mid \bar{\vq} \in \xmatspace{\ell}{1}, \rdeg(\bar{\vq}) < \shiftt\}
    = \{ \vp \mX \vq \mid \vq \in \xmatspace{\ell}{1} \text{ and } \rdeg(\vq) < \shiftt\}.
  \]
  where \(\mX \in \xmatspace{\ell}{\ell}\) is the diagonal matrix as in the
  lemma. This set identity is easily deduced from definitions and
  \cref{lem:reverse_product}. Indeed, since \(\vd = \cdeg(\bar{\vp})\), for any
  \(\bar{\vq} \in \xmatspace{\ell}{1}\) such that \(\rdeg(\bar{\vq}) <
  \shiftt\) (and thus \(\rdeg(\bar{\vq}) < n - \vd\)), we can apply
  \cref{lem:reverse_product} which yields \(\polrev{n-1}{\bar{\vp} \bar{\vq}} =
  \crev{\vd}{\bar{\vp}} \rrev{n - 1 - \vd}{\bar{\vq}}\). We have
  \(\crev{\vd}{\bar{\vp}} = \vp\) by definition of \(\vp\). Furthermore,
  defining \(\vq = \rrev{\shiftt-1}{\bar{\vq}}\), which is in
  \(\xmatspace{\ell}{1}\) and such that \(\rdeg(\vq) < \shiftt\), we have
  \(\rrev{n - 1 - \vd}{\bar{\vq}} = \mX \vq\)
\end{proof}

\subsection{The Vandermonde-like case}
\label{sec:main:vandermonde}

Now, we derive similar results about the Vandermonde-like case. To keep
the presentation concise, we focus on points that differ from those
for the Toeplitz-like case in \cref{sec:main:toeplitz}.

\begin{theorem}
  \label{thm:algo_vandermonde}%
  The exact same statement as in \cref{thm:algo_toeplitz} holds for
  \cref{algo:StructuredSolve-Vandermonde}, for solving
  \cref{pbm:linsys,pbm:kernel} with \(\dispOp = \dispSte{\D(\vx),
  \Zmat_{n,0}^{\trsp}}\).
\end{theorem}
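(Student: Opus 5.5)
The plan mirrors the three-phase proof of \cref{thm:algo_toeplitz}. We do not see the text of \cref{algo:StructuredSolve-Vandermonde}, but we expect it to coincide with \cref{algo:StructuredSolve-Toeplitz} except in its first phase. There, \(v\) and the \(g_j\) are the Lagrange interpolants of \((\vx,\vv)\) and \((\vx,\mG_{*j})\), and \cref{algo:VectorMPade} is called with modulus \(\mpol_\vx\) instead of \(x^m\). The reason is that \cref{thm:linsys_poly:vandermonde} has exactly the form of \cref{thm:linsys_poly:toeplitz}: the inner part \(\vc = [h_j]_j u \quo x^{n-1}\), with \(h_j\) built from \(\vecrev{\mH_{*j}}\), is identical, and only the outer modulus changes.

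\textbf{Phase 1.} We prove the analogue of \cref{lem:toeplitz_vmpade}. By \cref{thm:linsys_poly:vandermonde}, \(\mA\vu = \vv\) is equivalent to \(v = [g_1 \cdots g_\dispRk]\,\vc \rem \mpol_\vx\) with \(\vc \in \xmatspace{\dispRk}{1}_{<n}\). Applying \cref{prop:vmpade} with \(d = m\), \(\polmod = \mpol_\vx\), and the zero shift, together with \cref{lem:vmpade_generates_all}, yields the same two equivalences. It also gives \(\mP\) in Popov form, \(\rdeg(\vsol) < \delta = \cdeg(\mP)\), and \(\sum_i \delta_i \le m\). Both facts were stated for an arbitrary modulus, so nothing changes here.

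The only genuinely new costs are two. First, computing the master polynomial \(\mpol_\vx\) takes \(\bigO{\timepm{m}\log(m)}\) operations by a subproduct tree. Second, computing the \(\dispRk+1\) interpolants takes \(\bigO{\dispRk\,\timepm{m}\log(m)}\) by fast interpolation, which reuses one subproduct tree for all of them. Both bounds fit inside \(\bigO{\dispRk^{\expmm-1}\timepm{m}\log(m)}\) because \(\expmm \ge 2\). The call to \cref{algo:VectorMPade} costs \(\bigO{\dispRk^\expmm \timepm{m/\dispRk}\log(m)}\) as before.

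\textbf{Phases 2 and 3.} These should carry over verbatim. Phase 2 (\cref{lem:toeplitz_transform}) uses only four ingredients:
\begin{itemize}
  \item \(\deg(\vc) < n\);
  \item the Popov properties of \(\mP\), through \cref{lem:degree_bound_lambda};
  \item column reversal, through \cref{lem:reverse_product};
  \item the invertibility of \(\crev{\delta}{\mP}\) modulo \(x^n\), which follows from \(\lm(\mP)\) being invertible, together with \cref{lem:Pinv_vec_mod}, whose hypothesis \(\sum_i \delta_i \le m\) still holds.
\end{itemize}
None of these refers to the modulus \(x^m\). The phase again produces \(\mF\) and \(\vw\) with the stated equivalences in terms of \(\bar u = \polrev{n-1}{u}\), where \(u\) has coefficient vector \(\vu\). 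Phase 3 (\cref{lem:toeplitz_smpade}) involves only \(x^n\), \(\mF\), \(\vw\), and \(\shift\), so it is literally unchanged. It relies on \cref{prop:smpade} and \cref{lem:smpade_generates_all}, and on the same reversal argument for the nullspace description and for the bound \(\max(\vd+\shiftt) \le n\). The solution vector \(\vu\) is recovered as the reversed coefficient vector of \(\ssol[\vw]\), exactly as before.

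\textbf{Main obstacle.} There is no real mathematical difficulty; the point needing care is to check that Phase 1 actually satisfies the interface assumed by Phases 2 and 3. Concretely, we need three facts: \(\deg(\vc) < n\), which holds independently of \(\mpol_\vx\); \(\mP\) in Popov form with \(\sum_i \delta_i \le m\), as in \cref{prop:vmpade}; and the cost hypotheses of \cref{prop:vmpade}, namely zero shift and \(\dispRk \le m\). With these verified, summing the four cost contributions reproduces the bound in \cref{thm:algo_toeplitz}.
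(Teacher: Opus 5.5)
Your proposal is correct and follows the paper's own argument. The paper also keeps \cref{lem:toeplitz_vmpade} unchanged for the first phase with modulus \(\mpol_\vx\), changing only the cost of building \(\mpol_\vx\) and the interpolants (now \(\bigO{\dispRk\timepm{m}\log(m)}\) via fast interpolation), and notes that the remaining phases and \cref{lem:toeplitz_transform,lem:toeplitz_smpade} carry over verbatim.
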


The proximity of the equations in
\cref{thm:linsys_poly:toeplitz,thm:linsys_poly:vandermonde} leads to
computations that are very close for the Toeplitz-like and Vandermonde-like
cases. More specifically, the initial translation of the input data into a
vector M-Pad\'e approximation problem marginally differs: in the
Vandermonde-like situation, we interpolate data into an equation modulo
\(\mpol_{\vx}\), whereas in the Toeplitz-like case, we used coefficient vectors
to get an equation modulo \(x^m\). We detail this in the first lines of
\cref{algo:StructuredSolve-Vandermonde}. Then, all the subsequent computations
are strictly the same as in \cref{algo:StructuredSolve-Toeplitz}; to emphasize
this, the corresponding part of \cref{algo:StructuredSolve-Vandermonde} does not repeat the
instructions but simply refers directly to \cref{algo:StructuredSolve-Toeplitz}.

Furthermore, our choice of consistent notations makes
\cref{lem:toeplitz_vmpade} hold as such for
\zcref[range]{step:StructuredSolve-Vandermonde:start,
step:StructuredSolve-Vandermonde:pivdeg} of
\cref{algo:StructuredSolve-Vandermonde}; concerning its proof, the only change
is about the cost of computing polynomials from \(\mA\) and \(\vv\) at
\zcref[range]{step:StructuredSolve-Vandermonde:poly_formulation_v,
  step:StructuredSolve-Vandermonde:poly_formulation_g,
step:StructuredSolve-Vandermonde:poly_formulation_h}, which is now in 
\(\bigO{\dispRk \timepm{m} \log(m)}\) operations in \(\field\) via fast
interpolation \cite[Cor.\,10.12]{ModernComputerAlgebra} (instead of
\(\bigO{1}\) for the Toeplitz-like case). Then, we use the same steps as in
\cref{algo:StructuredSolve-Toeplitz} to compute and solve an instance of
simultaneous Hermite-Pad\'e approximation and all properties in
\cref{lem:toeplitz_transform, lem:toeplitz_smpade} remain valid.

\begin{algorithm}[ht]
  \algoCaptionLabel{StructuredSolve-Vandermonde}{}
  \begin{algorithmic}[1]
    \Require{%
      \algoitem repetition-free list $\vx = (x_1,\ldots,x_m) \in \K^m$; \par
      \algoitem matrices $\mG \in \K^{m \times \dispRk}$ and $\mH \in \K^{n
      \times \dispRk}$, with \(\dispRk \le \min(m,n)\)
      (they represent a Vandermonde-like matrix \(\mA \in
      \matspace{m}{n}\) through the operator
      \(\dispSte{\D(\vx), \Zmat_{n,0}^{\trsp}}(\mA) = \mG \mH^{\trsp}\)); \par
      \algoitem a vector $\vv \in \K^{m \times 1}$ (optional, default: \(\vv = 0\)). \par
    }
    \Ensure{\((\ell, \vp, \vd, \shiftt, \vu)\) that solves \cref{pbm:linsys,pbm:kernel} as in \cref{thm:algo_vandermonde}}

    \LComment{build polynomials and perform vector M-Pad\'e approximation}
    \label{step:StructuredSolve-Vandermonde:start}

    \State $v \in \xring_{<m} \gets$ the interpolant of \((\vx,\vv)\);
           ~ \(\mpol_{\vx} \in \xring_{m} \gets \prod_{1\le i \le m} (x-x_i)\)
          \label{step:StructuredSolve-Vandermonde:poly_formulation_v}
    \For{$1 \le j \le \alpha$}
      \State $g_j \in \xring_{<m} \gets$ the interpolant of \((\vx,\mG_{*j})\)
          \label{step:StructuredSolve-Vandermonde:poly_formulation_g}
      \State $\bar{h}_j \in \xring_{<n} \gets \sum_{1 \le i \le n} \mH_{ij} x^{i-1}$%
          \label{step:StructuredSolve-Vandermonde:poly_formulation_h}
    \EndFor

    \State $(\mP, \mu, \vsol) \gets
                  \Call{algo:VectorMPade}{m, \dispRk, \mpol_{\vx},
                                          [g_1 \;\; \cdots \;\; g_\dispRk],
                                          (0,\ldots,0), v}$
          \label{step:StructuredSolve-Vandermonde:vmpade}

    \State $\delta = (\delta_1,\ldots,\delta_\dispRk) \in \N^\dispRk \gets (\deg(\mP_{11}), \ldots, \deg(\mP_{\dispRk\dispRk}))$
        \Comment{\(\delta = \rdeg(\mP) = \cdeg(\mP)\)}
        \label{step:StructuredSolve-Vandermonde:pivdeg}

    \LComment{transform into simultaneous Hermite-Pad\'e via reversal + inversion: \\
    follow \zcref[range]{step:StructuredSolve-Toeplitz:transform,step:StructuredSolve-Toeplitz:invert_h} of \cref{algo:StructuredSolve-Toeplitz} without any modification}

    \LComment{perform simultaneous Hermite-Pad\'e approximation: \\
      follow \zcref[range]{step:StructuredSolve-Toeplitz:start_smpade,step:StructuredSolve-Toeplitz:rev_p} of \cref{algo:StructuredSolve-Toeplitz} without any modification
    }

    \State \Return $(\ell, \vp, \vd, \shiftt, \vu)$
    \label{step:StructuredSolve-Vandermonde:final_ret}
  \end{algorithmic}
\end{algorithm}

\subsection{The Cauchy-like case}
\label{sec:main:cauchy}

Finally, we describe our algorithm for the Cauchy-like case. Because we follow
the same global approach, most aspects are close to the Toeplitz-like case in
\cref{sec:main:toeplitz}. Still, for completeness and because of some minor
differences, we provide detailed statements and proofs:
\cref{thm:algo_cauchy,lem:cauchy_vmpade,lem:cauchy_transform,lem:cauchy_smpade}
are the Cauchy-like twins of
\cref{thm:algo_toeplitz,lem:toeplitz_vmpade,lem:toeplitz_transform,lem:toeplitz_smpade}.

\begin{theorem}
  \label{thm:algo_cauchy}%
  \cref{algo:StructuredSolve-Cauchy} solves \cref{pbm:linsys,pbm:kernel} when
  \(\dispOp = \dispSyl{\D(\vx), \D(\vy)}\), and under the
  assumption on \(\timepm{\cdot}\) stated in \cref{sec:notation},
  it uses
  \[
    \bigO{\dispRk^{\expmm-1} (\timepm{m} \log(m) + \timepm{n} \log(n)^2)}
  \]
  operations in \(\field\). It computes \((\ell, \vp, \vd, \shiftt, \vu)\) with
  \(\ell \in \{0,\ldots,\dispRk+1\}\), \(\vd = (d_1,\ldots,d_\ell) \in \N^\ell\), \(\shiftt =
  (t_1,\ldots,t_\ell) \in \N^\ell\), \(\vp = (p_1, \ldots, p_\ell) \in \xring^{\ell}\),
  and \(\vu\) is either \(\emptyset\) or in \(\matspace{n}{1}\),
  such that
  \begin{itemize}
    \item \(\vu\) is a linear system solution \(\mA \vu = \vv\) if one exists,
      otherwise \(\vu = \emptyset\);
    \item for all \(1 \le i \le \ell\), one has \(\deg(p_i) \le d_i \le n - t_i < n\);
    \item the nullspace \(\{\vz \in \matspace{n}{1} \mid \mA \vz = 0\}\) of
      \(\mA\) is the set of evaluation vectors
      \[
        \left\{
          \begin{bmatrix} u(y_1) \\ \vdots \\ u(y_n) \end{bmatrix} \in \matspace{n}{1}
          \;\Bigg|\; 
          u = \sum_{i = 1}^{\ell} p_i q_i \in \xring_{<n}, \;
          q_i \in \xring_{<t_i} \text{ for } 1 \le i \le \ell\right\} 
          .
      \]
  \end{itemize}
\end{theorem}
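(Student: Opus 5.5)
We follow the three-phase structure used for \cref{thm:algo_toeplitz}, proving Cauchy-like twins of \cref{lem:toeplitz_vmpade,lem:toeplitz_transform,lem:toeplitz_smpade}. The key difference is that, by \cref{thm:linsys_poly:cauchy}, the inner operation is \(\vc = [h_j]_j u \rem \mpol_\vy\) rather than a middle product. The modulus \(\mpol_\vy\) has degree \(n\) and is coprime with \(\det(\mP)\), since \(\det(\mP)\) divides \(\mpol_\vx\) by \cref{prop:vmpade} and \(\vx,\vy\) are disjoint. This lets us skip the reversal step and work directly modulo \(\mpol_\vy\).

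\emph{Phase 1.} First compute \(\mpol_\vx\), \(\mpol_\vy\), \(\mpol'_\vy\), and \(\invmpol_\vy = \mpol_\vy^{-1} \rem \mpol_\vx\). Then compute the interpolants \(v\), \(\bar g_j\) and \(\bar h_j\), followed by \(g_j = \invmpol_\vy \bar g_j \rem \mpol_\vx\) and \(h_j = \mpol'_\vy \bar h_j \rem \mpol_\vy\). Fast interpolation, subproduct trees and one modular inversion make this cost \(\bigO{\dispRk(\timepm{m}\log m + \timepm{n}\log n)}\). Next call \algoName{algo:VectorMPade} with modulus \(\mpol_\vx\), row \([g_1 \cdots g_\dispRk]\), zero shift, and right-hand side \(v\). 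By \cref{prop:vmpade} and \cref{lem:vmpade_generates_all}, we obtain \(\mP\) in Popov form with \(\delta = \cdeg(\mP) = \rdeg(\mP)\), \(\sum \delta_i \le m\), \(\rdeg(\vsol) < \delta\), and \(\det(\mP) \mid \mpol_\vx\). Combined with \cref{thm:linsys_poly:cauchy}, this gives: \(\mA\vu = \vv\) if and only if \(\mu = 1\) and \(\vc = \mP\lambda + \vsol\) for some \(\lambda\), where \(\vc = [h_j]_j u \rem \mpol_\vy\). The homogeneous case is the same with \(\vsol = 0\).

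\emph{Phase 2.} Since \(\deg\vc < n\), \cref{lem:degree_bound_lambda} shows that the system has a solution if and only if \(\mu = 1\), \(\deg(\vsol) < n\), and \(\vc = \mP\lambda + \vsol\) with \(\rdeg(\lambda) < \shift = (\max(0, n-\delta_i))_i\). Because \(\gcd(\det\mP, \mpol_\vy) = 1\) and \(\deg\lambda < n = \deg \mpol_\vy\), the latter is equivalent to
\[
\lambda = \mP^{-1}(\vc - \vsol) \rem \mpol_\vy .
\]
We then set \(\mF = \mP^{-1}[h_j]_j \rem \mpol_\vy\) and \(\vw = \mP^{-1}\vsol \rem \mpol_\vy\), both computed by \cref{lem:Pinv_vec_mod} within \(\bigO{\dispRk^{\expmm-1}\timepm{m+n}\log((m+n)/\dispRk)}\). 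Using \(\mP^{-1}\vc \equiv \mF u \bmod \mpol_\vy\), the condition becomes \(\rdeg((\mF u - \vw) \rem \mpol_\vy) < \shift\), with no reversal needed. The main point to check here is that the reduction modulo \(\mpol_\vy\) inside \(\vc\) commutes with left-multiplication by \(\mP^{-1}\) modulo \(\mpol_\vy\); this is immediate, since all identities hold in \(\xring/(\mpol_\vy)\).

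\emph{Phase 3.} Call \algoName{algo:SimultaneousMPade}\((n, \dispRk, \mpol_\vy, \mF, \shift, \vw)\). By \cref{prop:smpade} and \cref{lem:smpade_generates_all}, this yields \((\ell, \vp, \shiftt, \ssol)\) such that the solutions \(u \in \xring_{<n}\) are \(\ssol + \vp\vq\) with \(\rdeg(\vq) < \shiftt\). Set \(\vd = \cdeg(\vp)\). The bounds \(\deg p_i \le d_i \le n - t_i < n\) follow from \(\cdeg_{-\bar\shift}\) of the solution-basis matrix being \(-\shiftt\) with first shift entry \(n\), exactly as in \cref{lem:toeplitz_smpade}. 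The returned \(\vu\) is the evaluation vector \((\ssol(y_j))_j\), or \(\emptyset\) if \(\mu \neq 1\), \(\deg\vsol \ge n\), or \(\ssol = \emptyset\). The nullspace is the image of \(u \mapsto (u(y_j))_j\) on the homogeneous solution set, and since evaluation at \(\vy\) is a bijection from \(\xring_{<n}\), this gives the stated set.

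Summing the costs gives the same expression as in the Toeplitz-like case, plus the \(\bigO{\dispRk\timepm{n}\log n + \dispRk\timepm{m}\log m}\) of Phase 1. This is absorbed by \(\dispRk^{\expmm-1}(\timepm{m}\log m + \timepm{n}\log(n)^2)\). The main obstacle is the bookkeeping in Phase 2, namely the coprimality and uniqueness-of-\(\lambda\) argument; everything else transfers directly.
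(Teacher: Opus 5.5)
Your proposal is correct and follows essentially the same route as the paper, which proves the theorem via the three Cauchy-like lemmas. These are vector M-Pad\'e approximation modulo \(\mpol_\vx\); then inversion of \(\mP\) modulo \(\mpol_\vy\), justified exactly as you do by \(\det(\mP)\) dividing \(\mpol_\vx\) and \(\vx,\vy\) being disjoint, so that no reversal is needed; and finally simultaneous M-Pad\'e approximation modulo \(\mpol_\vy\), followed by evaluation at \(\vy\). Your cost accounting for the first phase includes the \(\bigO{\dispRk \timepm{n} \log(n)}\) cost of interpolating at \(\vy\), which makes it slightly more careful than the paper's.
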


Similarly to the Toeplitz-like case, the returned solution \(\vu\) may be zero
for a homogeneous system; yet, a nonzero one can easily be deduced (if one
exists) from the nullspace description. We prove this
theorem through \cref{lem:cauchy_vmpade, lem:cauchy_transform,
lem:cauchy_smpade} which
describe the properties and cost bounds for the three
successive phases of the algorithm:
\zcref[range]{step:StructuredSolve-Cauchy:start,
step:StructuredSolve-Cauchy:pivdeg},
\zcref[range]{step:StructuredSolve-Cauchy:transform,
step:StructuredSolve-Cauchy:invert_h}, and
\zcref[range]{step:StructuredSolve-Cauchy:start_smpade,
step:StructuredSolve-Cauchy:final_ret}. The properties in these lemmas
directly give the ones stated in \cref{thm:algo_cauchy}, and combining the cost
bounds in these lemmas leads to the cost bound in \cref{thm:algo_cauchy}.

\begin{algorithm}[ht]
  \algoCaptionLabel{StructuredSolve-Cauchy}{}
  \begin{algorithmic}[1]
    \Require{%
      \algoitem disjoint and repetition-free lists $\vx = (x_1,\ldots,x_m) \in \K^m$ and $\vy = (y_1,\ldots,y_n) \in \K^n$; \par
      \algoitem matrices $\mG \in \K^{m \times \dispRk}$ and $\mH \in \K^{n
      \times \dispRk}$, with \(\dispRk \le \min(m,n)\)
      (they represent a Cauchy-like matrix \(\mA \in
      \matspace{m}{n}\) through the displacement operator
      \(\dispSyl{\D(\vx), \D(\vy)}(\mA) = \mG \mH^{\trsp}\)); \par
      
      \algoitem a vector $\vv \in \K^{m \times 1}$ (optional, default: \(\vv = 0\)). \par
    }
    \Ensure{\((\ell, \vp, \vd, \shiftt, \vu)\) that solves \cref{pbm:linsys,pbm:kernel} as in \cref{thm:algo_cauchy}}

    \LComment{build polynomials and perform vector M-Pad\'e approximation (\cref{lem:cauchy_vmpade})}
    \label{step:StructuredSolve-Cauchy:start}

    \State \(\mpol_{\vx} \in \xring_{m} \gets \prod_{1\le i \le m} (x-x_i)\);
         ~ \(\mpol_{\vy} \in \xring_{n} \gets \prod_{1\le i \le n} (x-y_i)\)
          \label{step:StructuredSolve-Cauchy:mpol}

   \State \(\invmpol_{\vy} \in \xring_{<m} \gets \mpol_{\vy}^{-1} \rem \mpol_{\vx}\);
      ~ \(\mpol'_{\vy} \in \xring_{<n} \gets\) derivative of \(\mpol_{\vy}\)
          \label{step:StructuredSolve-Cauchy:invmpol}

    \State $v \in \xring_{<m} \gets$ the interpolant of \((\vx,\vv)\)
          \label{step:StructuredSolve-Cauchy:poly_formulation_v}
    \For{$1 \le j \le \alpha$}
      \State $\bar{g}_j \in \xring_{<m} \gets$ the interpolant of \((\vx, \mG_{*j})\);
      ~ \(g_j \in \xring_{<m} \gets \invmpol_{\vy} \bar{g}_j \rem \mpol_\vx\)
          \label{step:StructuredSolve-Cauchy:poly_formulation_g}
      \State $\bar{h}_j \in \xring_{<n} \gets$ the interpolant of \((\vy, \mH_{*j})\);
      ~ \(h_j \in \xring_{<n} \gets \mpol'_{\vy} \bar{h}_j \rem \mpol_\vy\)
          \label{step:StructuredSolve-Cauchy:poly_formulation_h}
    \EndFor

    \State $(\mP, \mu, \vsol) \gets
                  \Call{algo:VectorMPade}{m, \dispRk, \mpol_{\vx},
                                          [g_1 \;\; \cdots \;\; g_\dispRk],
                                          (0,\ldots,0), v}$
          \label{step:StructuredSolve-Cauchy:vmpade}

    \State $\delta = (\delta_1,\ldots,\delta_\dispRk) \in \N^\dispRk \gets (\deg(\mP_{11}), \ldots, \deg(\mP_{\dispRk\dispRk}))$
        \Comment{\(\delta = \rdeg(\mP) = \cdeg(\mP)\)}
        \label{step:StructuredSolve-Cauchy:pivdeg}

    \LComment{transform into simultaneous M-Pad\'e via inversion (\cref{lem:cauchy_transform})}
         \label{step:StructuredSolve-Cauchy:transform}

    \State \(\vw \in \xmatspace{\dispRk}{1}_{<n} \gets \mP^{-1} \vsol \rem \mpol_{\vy}\)
           \label{step:StructuredSolve-Cauchy:invert_sol}

    \State \(\mF \in \xmatspace{\dispRk}{1}_{<n} \gets
              \mP^{-1} [h_1 \;\; \cdots \;\; h_\dispRk]^{\trsp} \rem \mpol_{\vy}\)
           \label{step:StructuredSolve-Cauchy:invert_h}

    \LComment{perform simultaneous M-Pad\'e approximation (\cref{lem:cauchy_smpade})}
           \label{step:StructuredSolve-Cauchy:start_smpade}

    \State \(\shift \in \N^\dispRk \gets (\max(0,n-\delta_1), \ldots, \max(0,n-\delta_\dispRk))\)

    \State $(\ell,\vp,\shiftt,\ssol[\vw]) \gets \Call{algo:SimultaneousMPade}{%
                                                             n, \dispRk, \mpol_\vy,
                                                             \mF, \shift, \vw}$
           \label{step:StructuredSolve-Cauchy:smpade}

    \If{$\mu \neq 1$ \OR $\deg(\vsol) \ge n$ \OR $\ssol[\vw] = \emptyset$}
      \Comment{no solution to nonhomogeneous system}
      \State $\vu \gets \emptyset$
       \label{step:StructuredSolve-Cauchy:nosol}
    \Else
      \State$\vu \in \matspace{n}{1} \gets [\ssol[\vw](y_1) \;\; \ssol[\vw](y_2) \;\; \cdots \;\; \ssol[\vw](y_n)]^{\trsp}$
       \label{step:StructuredSolve-Cauchy:sol}
    \EndIf

    \State \(\vd = (d_1, \ldots, d_\ell) \gets \cdeg(\vp)\)
       \label{step:StructuredSolve-Cauchy:rev_p}

    \State \Return $(\ell, \vp, \vd, \shiftt, \vu)$
    \label{step:StructuredSolve-Cauchy:final_ret}
  \end{algorithmic}
\end{algorithm}

\begin{lemma}
  \label{lem:cauchy_vmpade}%
  \zcref[range]{step:StructuredSolve-Cauchy:start,
  step:StructuredSolve-Cauchy:pivdeg} of \cref{algo:StructuredSolve-Cauchy}
  use \(\bigO{\dispRk^\expmm \timepm{m/\dispRk} \log(m)}\) operations in
  \(\field\) and compute \(v, h_j, \mP, \mu, \vsol, \delta\) such that,
  for any vector \(\vu \in \matspace{n}{1}\), 
  \begin{itemize}
    \item \(\mA \vu = \vv\) if and only if \(\mu = 1\) and 
      \(\vc = \mP \lambda + \vsol\) for some \(\lambda \in \xmatspace{\dispRk}{1}\);

    \item \(\mA \vu = 0\) if and only if \(\vc = \mP \lambda\) for some
      \(\lambda \in \xmatspace{\dispRk}{1}\);
  \end{itemize}
  where we define
  the vector
  \[
    \vc =
    \begin{bmatrix}
      c_1 \\ \vdots \\ c_\dispRk
    \end{bmatrix}
    =
    \begin{bmatrix}
      h_1 \\ \vdots \\ h_\dispRk
    \end{bmatrix}
    u
    \rem \mpol_\vy
  \]
  in \(\xmatspace{\dispRk}{1}_{<n}\), and where \(u \in \xring_{<n}\)
  is the interpolant of \((\vy,\vu)\).
\end{lemma}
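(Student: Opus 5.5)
The proof will mirror that of \cref{lem:toeplitz_vmpade}, with \cref{thm:linsys_poly:cauchy} in place of \cref{thm:linsys_poly:toeplitz}. Correctness comes first. By construction, \cref{step:StructuredSolve-Cauchy:mpol,step:StructuredSolve-Cauchy:invmpol,step:StructuredSolve-Cauchy:poly_formulation_v,step:StructuredSolve-Cauchy:poly_formulation_g,step:StructuredSolve-Cauchy:poly_formulation_h} produce exactly the polynomials \(v\), \(g_j = \invmpol_\vy \bar{g}_j \rem \mpol_\vx\) and \(h_j = \mpol'_\vy \bar{h}_j \rem \mpol_\vy\) of \cref{thm:linsys_poly:cauchy}. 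Hence \(\mA\vu = \vv\) is equivalent to
\[
  v = [g_1 \;\cdots\; g_\dispRk]\, \vc \rem \mpol_\vx,
\]
with \(\vc = [h_j u \rem \mpol_\vy]_j\), which clearly lies in \(\xmatspace{\dispRk}{1}_{<n}\).

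Next I will apply \cref{prop:vmpade} to the call at \cref{step:StructuredSolve-Cauchy:vmpade}, which has input \((m,\dispRk,\mpol_\vx,[g_j]_j,0,v)\); this is valid since \(\deg(g_j)<m=\deg(\mpol_\vx)\). It returns \(\mP\) in Popov form with \(\rdeg(\vsol)<\delta=\cdeg(\mP)=\rdeg(\mP)\), these degrees being read off the diagonal at \cref{step:StructuredSolve-Cauchy:pivdeg}. By \cref{lem:vmpade_generates_all}, the equation above holds if and only if \(\mu = 1\) and \(\vc = \mP\lambda + \vsol\) for some \(\lambda\), which is the first item. For the second item I will rerun the argument with \(\vv = 0\). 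Then \(v = 0\), and \(\mP\) is unchanged because it depends only on \(\mF\) and \(\polmod\) (the Popov basis is unique). In this case \(\mu = 1\) and \(\vsol = 0\), which gives the homogeneous equivalence.

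For the cost, the vector M-Pad\'e step uses \(\bigO{\dispRk^\expmm \timepm{m/\dispRk}\log(m)}\) by \cref{prop:vmpade}, since \(\dispRk\le m\) and the shift is uniform. The preprocessing consists of the following operations:
\begin{itemize}
  \item building \(\mpol_\vx\) and \(\mpol_\vy\) by subproduct trees;
  \item one modular inversion of \(\mpol_\vy\) modulo \(\mpol_\vx\), and one derivative;
  \item \(\dispRk+1\) interpolations at \(\vx\) and \(\dispRk\) at \(\vy\);
  \item \(2\dispRk\) modular products.
\end{itemize}
Together these take \(\bigO{\dispRk(\timepm{m}\log m + \timepm{n}\log n)}\) operations \cite[Cor.\,10.12, Cor.\,11.11]{ModernComputerAlgebra}.

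The main obstacle is that this preprocessing bound is not obviously inside the stated \(\bigO{\dispRk^\expmm\timepm{m/\dispRk}\log(m)}\). The \(m\)-part is fine: superlinearity together with \(\expmm>2\) gives \(\dispRk\timepm{m}\le \dispRk^{2}\timepm{m/\dispRk}\cdot O(1) \le \dispRk^\expmm\timepm{m/\dispRk}\). The \(n\)-dependent term \(\dispRk\timepm{n}\log n\), however, is not absorbed by a bound depending only on \(m\). I will first try to charge it to the budget of the later phases, exactly as the Vandermonde discussion did for its interpolation cost: it is dominated by the \(\dispRk\timepm{n}\log(n)^2\) term of \cref{prop:smpade}, so the overall bound of \cref{thm:algo_cauchy} is unaffected. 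If the lemma's bound must hold literally, I will state the cost of these lines as \(\bigO{\dispRk^\expmm\timepm{m/\dispRk}\log(m) + \dispRk\timepm{n}\log(n)}\) and note that this is harmless for the final complexity.
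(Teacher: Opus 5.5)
Your correctness argument is essentially the paper's proof. You use \cref{thm:linsys_poly:cauchy} to turn \(\mA\vu=\vv\) into a vector M-Pad\'e equation modulo \(\mpol_\vx\), \cref{prop:vmpade} for the Popov output and its degrees, \cref{lem:vmpade_generates_all} for the equivalence, and a rerun with \(\vv=0\). Your appeal to uniqueness of the Popov basis is a slightly sharper way of saying that \(\mP\) does not depend on \(\vv\).

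Your extra care on the cost is warranted. The paper bounds the preprocessing by \(\bigO{\dispRk\timepm{m}\log(m)}\), which overlooks the degree-\(n\) work at \(\vy\): building \(\mpol_\vy\), interpolating at \(\vy\), and reducing modulo \(\mpol_\vy\). So the stated bound is not justified when \(n\) is much larger than \(m\). Your fix is right: add \(\dispRk\timepm{n}\log(n)\) to the bound, a term that is absorbed in \cref{thm:algo_cauchy}.

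One correction concerns the \(m\)-part, which is not absorbed ``by superlinearity''. Superlinearity gives \(\timepm{m}\ge\dispRk\,\timepm{m/\dispRk}\), which is the wrong direction. Your intermediate claim \(\dispRk\timepm{m}\in\bigO{\dispRk^{2}\timepm{m/\dispRk}}\) is false in general: take \(\timepm{d}=d\log d\) and \(\dispRk=m/2\). The correct justification is the other standing assumption of \cref{sec:notation}, \(\timepm{kd}\in\bigO{k^{\expmm-1}\timepm{d}}\), applied with \(k=\dispRk\). It gives \(\dispRk\timepm{m}\log(m)\in\bigO{\dispRk^{\expmm}\timepm{m/\dispRk}\log(m)}\) directly.
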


\begin{proof}
  The first steps at \zcref[range]{step:StructuredSolve-Cauchy:mpol,
    step:StructuredSolve-Cauchy:poly_formulation_g,
    step:StructuredSolve-Cauchy:poly_formulation_h,
  step:StructuredSolve-Cauchy:poly_formulation_v} construct polynomials from
  \(\mA\) and \(\vv\), as defined in \cref{thm:linsys_poly:cauchy}. Altogether,
  this costs \(\bigO{\dispRk \timepm{m} \log(m)}\) operations in \(\field\),
  using fast polynomial operations \cite[Chap.\,10
  and\,11]{ModernComputerAlgebra}. According to
  \cref{thm:linsys_poly:cauchy}, \(\mA \vu = \vv\) is
  equivalent to
  \begin{equation}
    \label{eqn:eq_poly_form_cauchy}%
    v =
    \begin{bmatrix} g_1 & \cdots & g_\dispRk \end{bmatrix}
    \vc
    \rem \mpol_\vx,
  \end{equation}
  for \(\vc = [c_j]_j \in \xmatspace{\dispRk}{1}_{<n}\) as in the lemma. The degree bound
  comes from the fact that \(\deg(\mpol_\vy) = n\).

  According to \cref{prop:vmpade},
  \cref{step:StructuredSolve-Cauchy:vmpade,
  step:StructuredSolve-Cauchy:pivdeg} use \(\bigO{\dispRk^\expmm
  \timepm{m/\dispRk} \log(m)}\) and compute \(\mP\) in Popov form (for the
  shift \((0,\ldots,0)\)) and \(\vsol\) such that \(\rdeg(\vsol) < \delta =
  \cdeg(\mP) = \rdeg(\mP)\). The latter equalities and the fact that
  \(\rdeg(\mP)\) coincides with the diagonal degrees \(\delta\) (as computed in 
  \cref{step:StructuredSolve-Cauchy:pivdeg}) follows from the definition of
  Popov forms. Note that \cref{prop:vmpade} also ensures \(\delta_1 + \cdots +
  \delta_\dispRk \le m\).
  Using \cref{lem:vmpade_generates_all}, from \cref{eqn:eq_poly_form_cauchy} we get
  the stated equivalence
  \[
    \mA \vu = \vv
    \quad \Leftrightarrow\quad
    \mu = 1 \text{ and }
    \vc
    =
    \mP \lambda + \vsol
    \text{ for some } \lambda \in \xmatspace{\dispRk}{1}
    .
  \]
  All our reasoning above also applies to finding the second equivalence,
  concerning \(\mA \vu = 0\), by considering \(\vv=0\). In this case,
  \(\mP\) is unchanged (it does not depend on \(\vv\)), and we
  have \(v = 0\), \(\mu = 1\), and \(\vsol = 0\), hence the claimed
  equivalence.
\end{proof}

\begin{lemma}
  \label{lem:cauchy_transform}%
  We follow on with notation from \cref{lem:cauchy_vmpade} and
  \cref{algo:StructuredSolve-Cauchy}. 
  Define the integer tuple \(\shift = (\max(0,n-\delta_1), \ldots,
  \max(0,n-\delta_\dispRk)) \in \N^\dispRk\).
  \zcref[range]{step:StructuredSolve-Cauchy:transform,
  step:StructuredSolve-Cauchy:invert_h} of this algorithm use
  \(\bigO{\dispRk^{\expmm-1} \timepm{m+n} \log((m+n)/\dispRk)}\) operations in
  \(\field\) to compute two vectors \(\mF\) and \(\vw\) in \(\xmatspace{\dispRk}{1}_{<n}\)
  such that, for any \(\vu \in \matspace{n}{1}\), 
  \begin{itemize}
    \item  
    \(\mA \vu = \vv\)
    if and only if
    \(\mu = 1\) and
    \(\deg(\vsol) < n\) and
    \(\rdeg((\mF u - \vw) \rem \mpol_\vy) < \shift\);
    \item
    \(\mA \vu = 0\)
    if and only if
    \(\rdeg(\mF u \rem \mpol_\vy) < \shift\);
  \end{itemize}
  where \(u \in \xring_{<n}\) is the interpolant of \((\vy,\vu)\).
\end{lemma}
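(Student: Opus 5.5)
We follow the proof of \cref{lem:toeplitz_transform}, but without any reversal: in the Cauchy-like case the inner modulus \(\mpol_\vy\) already appears directly in the definition of \(\vc\). We start from the first equivalence of \cref{lem:cauchy_vmpade}, namely \(\mA\vu=\vv\) if and only if \(\mu=1\) and \(\vc=\mP\lambda+\vsol\) for some \(\lambda\). We then augment it with degree constraints.

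Since \(\vc=\vh u\rem\mpol_\vy\) has \(\cdeg(\vc)<n\), we apply \cref{lem:degree_bound_lambda}. Its hypotheses hold because \(\mP\) is in Popov form, hence weak Popov, with \(\delta=\cdeg(\mP)\) and \(\rdeg(\vsol)<\delta\). The lemma yields \(\cdeg(\vsol)<n\) and \(\rdeg(\lambda)<n-\delta\). Because degrees are nonnegative or \(-\infty\), this means \(\rdeg(\lambda)<\shift\), and in particular \(\lambda_i=0\) whenever \(\delta_i\ge n\). The converse direction is trivial. We obtain
\[
\mA\vu=\vv \;\Leftrightarrow\; \mu=1,\ \deg(\vsol)<n,\ \vh u\rem\mpol_\vy=\mP\lambda+\vsol \text{ for some } \lambda \text{ with } \rdeg(\lambda)<\shift .
\]

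Next we invert \(\mP\) modulo \(\mpol_\vy\). By \cref{prop:vmpade}, \(\det(\mP)\) divides \(\mpol_\vx\). Since \(\vx\) and \(\vy\) are disjoint, \(\mpol_\vx\) is coprime with \(\mpol_\vy\), so \(\det(\mP)\) is coprime with \(\mpol_\vy\) and \(\mP\) is invertible modulo \(\mpol_\vy\).

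We claim that, for \(\lambda\) with \(\deg(\lambda)<n\) (which holds when \(\rdeg(\lambda)<\shift\)), the identity \(\vh u\rem\mpol_\vy=\mP\lambda+\vsol\) is equivalent to \(\lambda=(\mF u-\vw)\rem\mpol_\vy\).

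For the forward direction, reduce the identity modulo \(\mpol_\vy\) and multiply by \(\mP^{-1}\rem\mpol_\vy\). This gives \(\lambda=\mP^{-1}\vh u-\mP^{-1}\vsol \bmod \mpol_\vy\), and since \(\deg(\lambda)<n=\deg(\mpol_\vy)\), we get \(\lambda=(\mF u-\vw)\rem\mpol_\vy\).

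For the reverse direction, from \(\lambda=(\mF u-\vw)\rem\mpol_\vy\) we get \(\mP\lambda+\vsol=\vh u\bmod\mpol_\vy\). It remains to check that \(\mP\lambda+\vsol\) is already reduced. By \cref{lem:predictable_degree}, \(\rdeg(\lambda)<n-\delta\) gives \(\cdeg(\mP\lambda)<n\), and we have \(\deg(\vsol)<n\). So both sides have degree less than \(n\), and equality holds exactly.

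Consequently, the existence of an admissible \(\lambda\) is equivalent to \(\rdeg((\mF u-\vw)\rem\mpol_\vy)<\shift\). This proves the first item. The second item follows by taking \(\vv=0\): then \(\mP\) is unchanged, \(\mu=1\), \(\vsol=0\) and \(\vw=0\).

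The subtle point is the treatment of the indices with \(\delta_i\ge n\), for which \(s_i=0\). There, the condition \(\rdeg<\shift\) forces the corresponding entry to be zero, and this is consistent with the bound produced by \cref{lem:degree_bound_lambda}.

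For the cost, \cref{step:StructuredSolve-Cauchy:invert_sol,step:StructuredSolve-Cauchy:invert_h} are instances of \cref{lem:Pinv_vec_mod} with modulus \(\mpol_\vy\) of degree \(n\). The hypotheses are satisfied: \(\delta_1+\cdots+\delta_\dispRk\le m\), \(\vh\) has degree less than \(n\), and \(\det(\mP)\) is coprime with \(\mpol_\vy\). We also need the vector \(\vsol\) to have degree less than \(n\). When \(\deg(\vsol)\ge n\), the system has no solution and we may skip the computation of \(\vw\). Alternatively, we reduce \(\vsol\rem\mpol_\vy\) first, which costs \(\bigO{\dispRk\timepm{m+n}}\) since \(\deg(\vsol)<m\). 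Either way, the total cost is \(\bigO{\dispRk^{\expmm-1}\timepm{m+n}\log((m+n)/\dispRk)}\).
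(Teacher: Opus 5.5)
Your proof is correct and follows essentially the same route as the paper. You obtain the degree constraints on \(\lambda\) via \cref{lem:degree_bound_lambda}, get invertibility of \(\mP\) modulo \(\mpol_\vy\) from \(\det(\mP)\mid\mpol_\vx\) and disjointness of \(\vx\) and \(\vy\), derive the inversion equivalence \(\lambda=(\mF u-\vw)\rem\mpol_\vy\), and take the cost from \cref{lem:Pinv_vec_mod}, exactly as the paper does. You are in fact slightly more careful than the paper in two places: you verify \(\cdeg(\mP\lambda)<n\) in the reverse direction of the inversion equivalence, and you note that \(\vsol\) may have degree up to \(m-1\ge n\), so it must be reduced modulo \(\mpol_\vy\) before \cref{lem:Pinv_vec_mod} applies.
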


\begin{proof}
  The equivalence of \cref{lem:cauchy_vmpade} can be augmented with
  degree bounds, as follows:
  \begin{align*}
    \mA \vu = \vv
    \;
    &\Leftrightarrow\;
    \mu = 1 \text{ and }
    \vc
    =
    \mP \lambda + \vsol
    \text{ for some } \lambda \in \xmatspace{\dispRk}{1}
    \\
    &\Leftrightarrow\;
    \mu = 1 \text{ and }
    \vc
    =
    \mP \lambda + \vsol
    \text{ for some } \lambda \in \xmatspace{\dispRk}{1}
    \text{ with } \rdeg(\lambda) < \shift.
  \end{align*}
  The first equivalence is from \cref{lem:cauchy_vmpade}.
  The proof of the second equivalence is the same as for the Toeplitz case (see
  the proof of \cref{lem:toeplitz_transform}). Recall, in particular,
  that \(\vc = \mP \lambda + \vsol\) with \(\deg(\vc) < n\) implies
  \(\deg(\vsol) < n\).

  By \cref{prop:vmpade}, \(\det(\mP)\) divides \(\mpol_\vx\). On the other
  hand, since \(\vx\) and \(\vy\) are disjoint, \(\mpol_\vx\) and \(\mpol_\vy\)
  are coprime. Thus \(\det(\mP)\) is coprime with \(\mpol_\vy\), meaning that
  \(\mP\) is invertible modulo \(\mpol_\vy\).
  For the considered vectors \(\lambda\) with \(\deg(\lambda) <
  \max(\shift) \le n = \deg(\mpol_\vy)\), we therefore have the equivalence
  \[
    \vc = \mP \lambda + \vsol
    \;\;\Leftrightarrow\;\;
    \deg(\vsol) < n \;\;\text{and}\;\; \mP^{-1} (\vc - \vsol) \rem \mpol_\vy = \lambda.
  \]
  At \cref{step:StructuredSolve-Cauchy:invert_sol, step:StructuredSolve-Cauchy:invert_h},
  one computes \(\vw = \mP^{-1} \vsol \rem \mpol_\vy\) and
  \(\mF = \mP^{-1} [h_1 \;\; \cdots \;\; h_\dispRk]^{\trsp} \rem \mpol_\vy\) 
  using \(\bigO{\dispRk^{\expmm-1} \timepm{m+n} \log((m+n)/\dispRk)}\)
  operations in \(\field\), according to \cref{lem:Pinv_vec_mod}. The latter satisfies
  that \(\mF u \rem \mpol_\vy = \mP^{-1} \vc \rem \mpol_\vy\), by definition of
  \(\vc\).

  We have proved
  the first item in the lemma.
  For the second item, as in the proof of \cref{lem:cauchy_vmpade}, all the above arguments
  apply to the homogeneous case \(\mA \vu = 0\), by considering
  \(\vv=0\). In this case, \(\mP\) is unchanged (it does not depend on
  \(\vv\)), and we have \(\mu = 1\), \(\vsol = 0\), and \(\vw = 0\), hence the
  above equivalence simplifies into
  \(\mA \vu = 0 \Leftrightarrow \rdeg(\mF u \rem \mpol_\vy) < \shift\).
\end{proof}

\begin{lemma}
  \label{lem:cauchy_smpade}%
  We follow on with notation from
  \cref{lem:cauchy_vmpade,lem:cauchy_transform} and
  \cref{algo:StructuredSolve-Cauchy}.
  \zcref[range]{step:StructuredSolve-Cauchy:start_smpade,
  step:StructuredSolve-Cauchy:final_ret} of this algorithm use
  \(\bigO{\dispRk^{\expmm} \timepm{n/\dispRk} \log(1 + n/\dispRk)^2 + \dispRk
  \timepm{n} \log(n)^2}\) operations in \(\field\), and compute $(\ell, \vp,
  \vd, \shiftt, \vu)$ such that
  \begin{itemize}
    \item \(\vu \in \matspace{n}{1}\) is a linear system solution \(\mA \vu = \vv\)
      if one exists, otherwise \(\vu = \emptyset\);
    \item \(\ell \in \{0,\ldots,\dispRk+1\}\),
      \(\vd \in \{0,\ldots,n-1\}^\ell\),
      and \(\shiftt \in \{1,\ldots,n\}^\ell\),
      with \(\max(\vd + \shiftt) \le n\);
    \item \(\vp \in \xmatspace{1}{\ell}\) satisfies \(\cdeg(\vp) \le \vd\);
    \item the nullspace \(\{\vz \in \matspace{n}{1} \mid \mA \vz = 0\}\) of
      \(\mA\) is the set of evaluation vectors \(\vz = [u(y_1) \; \cdots \; u(y_n)]^\trsp\)
      of a polynomial \(u \in \xring_{<n}\) which is in
      \(
        \{ \vp \vq \mid \vq \in \xmatspace{\ell}{1} \text{ and } \rdeg(\vq) < \shiftt\}
      \).
  \end{itemize}
\end{lemma}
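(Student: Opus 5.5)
The plan mirrors the proof of \cref{lem:toeplitz_smpade}, with the simplification that no reversal is involved. First, by \cref{prop:smpade} applied with modulus \(\mpol_\vy\) of degree \(n\), the call at \cref{step:StructuredSolve-Cauchy:smpade} costs \(\bigO{\dispRk^{\expmm} \timepm{n/\dispRk} \log(1 + n/\dispRk)^2 + \dispRk \timepm{n} \log(n)^2}\). Its hypotheses hold: \(\dispRk \le n\), the entries of \(\shift\) lie in \(\{0,\ldots,n\}\), and \(\mF,\vw \in \xmatspace{\dispRk}{1}_{<n}\) by \cref{lem:cauchy_transform}. The call returns \((\ell,\vp,\shiftt,\ssol[\vw])\) solving \cref{pbm:smpade} for \((\mpol_\vy,\mF,\shift,\vw)\).

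The remaining steps are cheap. Computing \(\vd=\cdeg(\vp)\) is linear in the size of \(\vp\), at most \((\dispRk+1)n\) coefficients. Evaluating \(\ssol[\vw]\) at \(y_1,\ldots,y_n\) costs \(\bigO{\timepm{n}\log(n)}\) by fast multipoint evaluation. Both costs are absorbed in the bound.

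For the first item, I will use the definition of \(\ssol[\vw]\): it is a polynomial in \(\xring_{<n}\) with \(\rdeg((\mF\ssol[\vw]-\vw)\rem\mpol_\vy)<\shift\) if one exists, and \(\emptyset\) otherwise. Combined with the first item of \cref{lem:cauchy_transform}, this shows that the test at \cref{step:StructuredSolve-Cauchy:nosol} fires exactly when no solution exists. Otherwise, the interpolant of \((\vy,\vu)\) for \(\vu=[\ssol[\vw](y_j)]_j\) is \(\ssol[\vw]\) itself, since \(\deg(\ssol[\vw])<n\). Hence \(\mA\vu=\vv\).

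For the second and third items, \cref{def:smpade_basis} gives \(\ell\le\dispRk+1\) and \(\shiftt\in\{1,\ldots,n\}^\ell\). Each entry of \(\vp\) is a nonzero solution, so \(d_i=\deg(p_i)\in\{0,\ldots,n-1\}\). This gives the third item trivially, since \(\cdeg(\vp)=\vd\). For the bound \(\max(\vd+\shiftt)\le n\), I use the third property of \cref{def:smpade_basis}: \(\cdeg_{-\bar{\shift}}(\mP)=-\shiftt\), whose first row has shift \(-n\). So \(d_i-n\le -t_i\), exactly as in the Toeplitz proof.

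For the nullspace item, I combine the second item of \cref{lem:cauchy_transform} with \cref{eqn:smpade_homogeneous_sols} of \cref{lem:smpade_generates_all}. Together they show that \(\mA\vz=0\) holds if and only if the interpolant \(u\) of \((\vy,\vz)\) lies in \(\{\vp\vq \mid \rdeg(\vq)<\shiftt\}\). Since interpolation at \(\vy\) and evaluation at \(\vy\) are inverse bijections between \(\matspace{n}{1}\) and \(\xring_{<n}\), this is the claimed description. Every element \(\vp\vq\) of that set has degree less than \(n\) by \cref{lem:smpade_generates_all}.

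I do not expect a real obstacle. The only delicate points are to check carefully that the hypotheses of \cref{prop:smpade} hold, in particular \(0\le s_i\le n\), and that the first-row degree bound yields \(\vd+\shiftt\le n\).
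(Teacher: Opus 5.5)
Your proposal is correct and follows essentially the same route as the paper: the cost and output properties come from \cref{prop:smpade}, the first item from the first item of \cref{lem:cauchy_transform}, the bound \(\max(\vd+\shiftt)\le n\) from the first entry \(-n\) of the shift in \cref{def:smpade_basis}, and the nullspace description from the second item of \cref{lem:cauchy_transform} combined with \cref{lem:smpade_generates_all}. The paper leaves three details implicit that you spell out: checking the hypotheses of \cref{prop:smpade}, the \(\bigO{\timepm{n}\log(n)}\) cost of the final multipoint evaluation, and the interpolation/evaluation bijection at \(\vy\).
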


\begin{proof}
  By \cref{prop:smpade}, \cref{step:StructuredSolve-Cauchy:smpade} costs
  \(\bigO{\dispRk^{\expmm} \timepm{n/\dispRk} \log(1 + n/\dispRk)^2 + \dispRk
  \timepm{n} \log(n)^2}\) and computes \((\ell,\vp, \shiftt,
  \ssol[\vw])\) which solves \cref{pbm:smpade}.

  In particular, \(\ssol[\vw]\) is either a polynomial in \(\xring_{<n}\) such
  that \(\rdeg((\mF \ssol[\vw] - \vw) \rem \mpol_\vy) < \shift\) if one exists,
  otherwise it is \(\ssol[\vw] = \emptyset\). Thus,
  \cref{step:StructuredSolve-Cauchy:nosol} correctly sets \(\vu = \emptyset\)
  precisely when the linear system defined by \(\mA\) and \(\vv\) admits no
  solution \(\vu\), according to \cref{lem:cauchy_transform}. On the
  other hand, if \(\ssol[\vw] \neq \emptyset\), and if the other requirements
  for the existence of a solution \(\mu = 1\) and \(\deg(\vsol) < n\) also
  hold, then again according to \cref{lem:cauchy_transform}, 
  \cref{step:StructuredSolve-Cauchy:sol} correctly sets \(\vu\) to the
  solution \(\mA\vu = \vv\) formed by the evaluations of
  \(\ssol[\vw]\) at the points in \(\vy\).

  This shows the first item in the lemma. Except for the bound on \(\vd +
  \shiftt\), the second item follows directly from the definition of a solution
  basis \((\ell,\vp, \shiftt)\) for \((\mpol_\vy, \mF, \shift)\) (see
  \cref{def:smpade_basis}). The third item is by definition of
  \(\vd = \cdeg(\vp)\) at \cref{step:StructuredSolve-Cauchy:rev_p}.
  The third item of \cref{def:smpade_basis} implies in particular that \(\cdeg_{-n}(\vp)
  \le -\shiftt\). Since \(\cdeg_{-n}(\vp) = \cdeg(\vp) - (n,\ldots,n)
  = \vd - (n,\ldots,n)\), this gives \(\max(\vd + \shiftt) \le n\).
  It remains to show the last item, about the nullspace of \(\mA\). By the
  second item in \cref{lem:cauchy_transform}, this nullspace is the set
  of vectors \(\vz \in \matspace{n}{1}\) such that \(\vz\) is the vector of
  evaluations \([u(y_1) \; \cdots \; u(y_n)]^\trsp\) of any polynomial \(u \in
  \xring_{<n}\) such that \(\rdeg(\mF u \rem \mpol_\vy) < \shift\). On the
  other hand, \cref{lem:smpade_generates_all} states that
  \[
    \{u \in \xring_{<n} \mid \rdeg(\mF u \rem \mpol_\vy) < \shift\}
    =
    \{\vp \vq \mid \vq \in \xmatspace{\ell}{1}, \rdeg(\vq) < \shiftt\},
  \]
  which concludes the proof.
\end{proof}

\end{document}